\documentclass[journal,10pt,twocolumn]{IEEEtran}
\usepackage{cite}
\usepackage{graphicx}  
\usepackage{tabulary}
\usepackage{amsfonts} 
\usepackage{amsmath}
\usepackage{chngpage}
\usepackage{makecell}
\usepackage{array}
\usepackage{amsthm}
\usepackage{booktabs}
\usepackage{amssymb}
\usepackage{multirow}
\usepackage{enumerate}
\usepackage{color}
\usepackage[dvipsnames]{xcolor}
\usepackage{mathrsfs}

\usepackage{float}
\usepackage{algorithm}  
\usepackage{algpseudocode}  
\usepackage{amsmath}  
\usepackage{url}

\usepackage{graphicx}
\usepackage{subfigure}
\usepackage[font=small]{caption}

\usepackage{mathtools}
\usepackage{enumitem}

\setlist[enumerate,1]{label=(\roman*)}

\captionsetup[figure]{
	singlelinecheck=off,
}

\hyphenation{op-tical net-works semi-conduc-tor}

\newtheoremstyle{noparens}%
{}{}%
{\itshape}{}%
{\bfseries}{.}%
{ }%
{\thmname{#1}\thmnumber{ #2}\mdseries\thmnote{ #3}}

\theoremstyle{noparens}
\newtheorem{theorem}{Theorem}

\newtheorem{symmetry assumption}{Symmetry Assumption}
\newtheorem{lemma}{Lemma}
\newtheorem{corollary}{Corollary}
\newtheorem{remark}{Remark}
\newtheorem{proposition}{Proposition}
\newtheorem{definition}{Definition}
\newtheorem{property}{Property}

\begin{document}

\title{Indirect Lossy Source Coding with Observed Source Reconstruction: Nonasymptotic Bounds and Second-Order Asymptotics}

\author{
	\IEEEauthorblockN{Huiyuan Yang,~\IEEEmembership{Graduate Student Member,~IEEE}, Yuxuan Shi, Shuo Shao,~\IEEEmembership{Member,~IEEE}, Xiaojun Yuan,~\IEEEmembership{Senior Member,~IEEE}}
	\thanks{Huiyuan Yang and Xiaojun Yuan \emph{(corresponding author)} are with the National Key Laboratory on Wireless Communications, University of Electronic Science and Technology of China, Chengdu 611731, China (e-mail: hyyang@std.uestc.edu.cn; xjyuan@uestc.edu.cn).}
	\thanks{Yuxuan Shi is with the School of Cyber and Engineering, Shanghai Jiao Tong University, Shanghai 200240, China, and also with ZGC Institute of Ubiquitous-X Innovation and Applications, Beijing 100876, China (e-mail: ge49fuy@sjtu.edu.cn).}
	\thanks{Shuo Shao \emph{(corresponding author)} is with the School of Cyber and Engineering, Shanghai Jiao Tong University, Shanghai 200240, China (e-mail: shuoshao@sjtu.edu.cn).}
}

\maketitle
\IEEEpeerreviewmaketitle

\begin{abstract}
	This paper considers the joint compression of a pair of correlated sources, where the encoder is allowed to access only one of the sources. The objective is to recover both sources under separate distortion constraints for each source while minimizing the rate. This problem generalizes the indirect lossy source coding problem by also requiring the recovery of the observed source. In this paper, we aim to study the nonasymptotic and second-order asymptotic properties of this problem. Specifically, we begin by deriving nonasymptotic achievability and converse bounds valid for general sources and distortion measures. The source dispersion (Gaussian approximation) is then determined through asymptotic analysis of the nonasymptotic bounds. We further examine the case of erased fair coin flips (EFCF) and provide its specific nonasymptotic achievability and converse bounds. Numerical results under the EFCF case demonstrate that our second-order asymptotic approximation closely approximates the optimum rate at appropriately large blocklengths.

\end{abstract}

\begin{IEEEkeywords}
	Nonasymptotic bound, second-order asymptotics, finite blocklength, indirect/noisy/remote/hidden lossy source coding, lossy data compression, source dispersion, achievability, converse, task-oriented semantic communication, Shannon theory.
\end{IEEEkeywords}

\section{Introduction}
\label{Introduction}
\IEEEPARstart{T}{his} paper considers the problem of jointly compressing two correlated sources in a lossy sense, where the encoder only has access to one of the sources. The objective is to maximize the compression ratio (or minimize the source coding rate) under given distortion constraints for both sources. Note that both the encoder and the decoder are aware of the joint distribution of the two sources. We refer to this setup as indirect lossy source coding with observed source reconstruction, as it involves additional recovery of the observed source compared to the standard indirect lossy source coding setup \cite{Dobrushin1962Information}.

This joint compression problem is of interest, as it arises when the observable data and its inferences (semantics) are both required. For example, in scenarios like autonomous driving, video surveillance, and AR/VR gaming, the raw data (e.g., videos) and its inferences (e.g., features) serve the human vision and machine processing tasks \cite{Gunduz2023Beyond, Yang2023Semantic}, respectively; in contexts such as medical image compression and remote sensing, the compressed images tend to fulfill varied requirements simultaneously: some require direct image recovery, and others rely on inferred information from the images \cite{Pardeep2020Versatile, Zhang2022Artificial}; in the compression of speech signals, the semantic meaning conveyed by text words aids in tasks such as automatic transcription or language translation, while the speech waveform provides additional information, such as the speaking habits of speakers, aiding tasks such as speaker recognition and verification \cite{Liu2022Indirect}. Some other examples can be found in \cite{Liu2022Indirect, Yuxuan2023Rate} and their references.

A block coding formulation of this joint compression problem was previously studied in \cite{Liu2021Rate}, where its first-order asymptotic properties were investigated. In this paper, we further consider a more general one-shot formulation, which allows us to study the nonasymptotic and second-order asymptotic behaviors of this problem. Specifically, we consider general sources, general distortion measures, and finite codebooks. It is important to note that the finite codebook size cannot guarantee a zero joint excess distortion probability, particularly at low rates, where this probability refers to the probability that at least one of the recovery distortions of the two sources exceeds its corresponding threshold.
Consequently, the joint compression aims to efficiently encode the observable source so that the decoder can recover both sources with a tolerable joint excess distortion probability based on the codeword. Our nonasymptotic modeling is more aligned with practical compression systems with delay, complexity, or storage constraints and, therefore, has the potential to yield results that more accurately reflect the fundamental limits of real-world systems.

In the remainder of this section, we provide a brief overview of related works in Subsection \ref{Related_Works_and_Discussions}, followed by an outline of the contributions and organization of this paper in Subsection \ref{Contributions_and_Organization}.

\subsection{Related Works and Discussions}
\label{Related_Works_and_Discussions}

The paradigm of simultaneously recovering observable and hidden sources was first used to analyze semantic communication problems \cite{Liu2021Rate, Liu2022Indirect, Yuxuan2023Rate, Shi2023Excess, Stavrou2023Role}. Specifically, the authors in \cite{Liu2022Indirect} and \cite{Liu2021Rate} proposed a semantic source model comprising an observable source and a correlated hidden source, with each source dedicated to memorylessly generating data and its embedded semantics, respectively. As semantics are typically inferred from data and cannot be directly observed, the authors considered a lossy source coding problem that seeks to simultaneously recover data and its embedded semantics under their respective distortion constraints by encoding only the observable source. The corresponding semantic rate-distortion function was then derived. Note that the problem considered in \cite{Liu2021Rate} and \cite{Liu2022Indirect} is an asymptotic version of our problem and inspires our studies. While \cite{Liu2021Rate} and \cite{Liu2022Indirect} focused on the point-to-point source coding scenario, \cite{Yuxuan2023Rate} and \cite{Shi2023Excess} expanded the scope by applying this semantic modeling approach to the multi-terminal source coding and joint source-channel coding scenarios, respectively. \cite{Stavrou2023Role} proposed a Blahut–Arimoto type algorithm to compute the so-called semantic rate-distortion function. \cite{Li2024Fundamental} further gives some analytical properties of the semantic rate-distortion function and introduces a neural network designed for estimating the semantic rate-distortion function from samples. However, the above works focus on first-order asymptotics, in contrast to the nonasymptotic and second-order asymptotics we will investigate.

From another line of research, finite blocklength and second-order analysis of source coding traces back to the seminal work of Strassen \cite{Volker1962Asymptotische} in 1962, in which he investigated the scenario of almost lossless source coding. The second-order analysis of lossy source coding can date back to Kontoyiannis \cite{Kontoyiannis2000Pointwise} in 2000 (see also \cite{Kontoyiannis2002Arbitrary}) in the context of variable-rate coding, where the distortion is required to remain below a given threshold. This is where the dispersion of lossy source coding first appears, under the name ``minimal coding variance''. In the early 2010s, \cite{Ingber2011Dispersion} and \cite{Kostina2012Fixed} revisited this research field under the excess-distortion probability constraint. Specifically, using type-based approaches, the authors of \cite{Ingber2011Dispersion} rediscovered the dispersion of lossy source coding with fixed excess-distortion probabilities for finite-alphabet sources. The case of i.i.d. quadratic Gaussian source was also treated in \cite{Ingber2011Dispersion}. In \cite{Kostina2012Fixed}, the authors proposed several general nonasymptotic achievability and converse bounds for lossy source coding. Moreover, they derived second-order asymptotics that are valid for sources with abstract alphabets through asymptotic analysis of the nonasymptotic bounds. Following \cite{Ingber2011Dispersion} and \cite{Kostina2012Fixed}, various lossy source coding scenarios have been studied in the finite blocklength regime, such as in \cite{Kostina2013LossyJoint, Kostina2016Nonasymptotic, No2016Strong, Zhou2017Second, Zhou2019Refined, Zhou2019NonAsymptotic, Li2018Strong}. Specifically, lossy joint source-channel coding and noisy lossy source coding in the finite blocklength regime were considered in \cite{Kostina2013LossyJoint} and \cite{Kostina2016Nonasymptotic}, respectively, where their nonasymptotic bounds and second-order asymptotics were studied. In \cite{No2016Strong} and \cite{Zhou2017Second}, the second-order asymptotics of the successive refinement problem were investigated under separate and joint excess-distortion probability constraints, respectively. \cite{Zhou2019Refined} established the second-order, moderate, and large deviation asymptotics of the mismatched code in \cite{Lapidoth1997On}. \cite{Zhou2019NonAsymptotic} provided non-asymptotic converse bounds (based on distortion-tilted information) and refined asymptotics for the Kaspi problem and the Fu-Yeung problem, respectively. \cite{Li2018Strong} derived the strong functional representation lemma and established achievability bounds for nonasymptotic variable-length lossy source coding, multiple description coding, and Gray-Wyner system based on this lemma. We recommend \cite{Dembo2002Source}, \cite{Vincent2014Asymptotic}, and \cite{Zhou2023Finite} for a comprehensive survey.

It is worth emphasizing that the trade-off between distortion constraints, formalized by the joint excess-distortion probability constraint in our formulation, also appears in \cite{Zhou2017Second} and \cite{Zhou2019NonAsymptotic}. What distinguishes our distortion trade-off from theirs is that ours is coupled with the irreducible randomness introduced by the unobservable hidden source, which implies that we have to characterize the impact of different trade-off patterns on the rate (or the joint excess-distortion probability) in the presence of irreducible randomness. This coupling also differentiates our work from \cite{Kostina2016Nonasymptotic}, where the distortion trade-off does not exist. As will be seen, we will generalize the methods in \cite{Kostina2012Fixed} and \cite{Kostina2016Nonasymptotic} to handle this coupling of dual challenges.




\subsection{Contributions and Organization}
\label{Contributions_and_Organization}

The main contributions of this paper are listed as follows:
\begin{itemize}
	
	\item We derive nonasymptotic achievability and converse bounds for the problem of indirect lossy source coding with observed source reconstruction. These bounds are applicable to general sources and distortion measures.
	
	\item For stationary memoryless sources and separable distortion measures, we find the dispersion (second-order asymptotics) of indirect lossy source coding with observed source reconstruction.
	
	\item We derive the rate-distortion function for the case of erased fair coin flips and establish nonasymptotic achievability and converse bounds specifically tailored to this scenario.
	
\end{itemize}
Numerical results demonstrate that our second-order asymptotic results effectively approximate the optimum rate at given blocklengths.

The remainder of this paper is organized as follows. In Section \ref{Preliminaries}, we introduce basic definitions and properties. In Sections \ref{Nonasymptotic_Achievability_Bounds} and \ref{Nonasymptotic_Converse_Bounds}, we derive the general nonasymptotic achievability and converse bounds, respectively. The second-order asymptotics are introduced in Section \ref{Asymptotic_Analysis}. The analysis results under the case of erased fair coin flips are derived and depicted in Section \ref{Case_Study}. Finally, Section \ref{Conclusion} concludes the paper.

\section{Preliminaries}
\label{Preliminaries}

\subsection{The General and Block Settings}


\begin{figure*}
	\centering
	\subfigure[An $(M, d_s, d_x, \epsilon)$ code.]{
		\begin{minipage}{0.6\textwidth}
			\includegraphics[width=\textwidth]{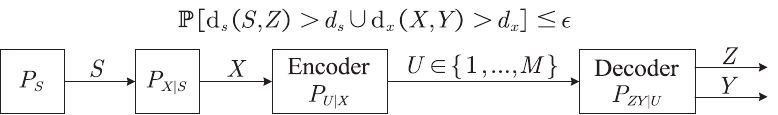} \\
	\end{minipage}}
	\subfigure[A $(k, M, d_s, d_x, \epsilon)$ code.]{
		\begin{minipage}{0.6\textwidth}
			\includegraphics[width=\textwidth]{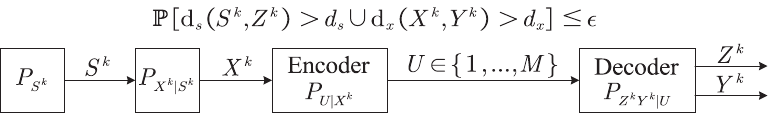} \\
	\end{minipage}}
	\caption{Indirect lossy source coding with observed source reconstruction in the nonasymptotic regime.} 
	\label{JDSLC_figs}
\end{figure*}

Consider correlated sources $(S, X)\sim P_{SX}=P_{S}P_{X|S}$, where we are given the distribution $P_S$ on alphabet $\mathcal{S}$ and the conditional distribution $P_{X|S}: \mathcal{S} \to \mathcal{X}$. We call $S$ the hidden source and $X$ the observable source. Two distortion measures, $\textsf{d}_s: \mathcal{S} \times \widehat{\mathcal{S}} \mapsto [0, +\infty)$ and $\textsf{d}_x: \mathcal{X} \times \widehat{\mathcal{X}} \mapsto [0, +\infty)$, have been provided for quantifying the recovery distortion of $S$ and $X$, respectively. For the sake of rigor, we restrict our discussion to the case where $\mathcal{S}$,  $\widehat{\mathcal{S}}$, $\mathcal{X}$, and $\widehat{\mathcal{X}}$ are all finite sets. Under the general setting above, we define the $(M, d_s, d_x, \epsilon)$ code as follows.
\begin{definition}
	(General):
	An $(M, d_s, d_x, \epsilon)$ code 
	for $\{\mathcal{S}, \mathcal{X}, \widehat{\mathcal{S}}, \widehat{\mathcal{X}}, P_{SX}, \textsf{d}_s: \mathcal{S} \times \widehat{\mathcal{S}} \mapsto [0, +\infty), \textsf{d}_x: \mathcal{X} \times \widehat{\mathcal{X}} \mapsto [0, +\infty)\}$ 
	is a pair of random mappings $P_{U|X}: \mathcal{X} \to \{1, \dots, M\}$ and $P_{ZY|U}: \{1,\dots, M\} \to \widehat{\mathcal{S}}\times \widehat{\mathcal{X}}$ such that the joint excess distortion probability satisfies $\mathbb{P}[\textsf{d}_s(S,Z)>d_s \cup \textsf{d}_x(X,Y)>d_x] \leq \epsilon$.
\end{definition}


The block setting is a special case of the general setting, induced by the specialization that the alphabets $\mathcal{S}$, $\mathcal{X}$, $\widehat{\mathcal{S}}$, and $\widehat{\mathcal{X}}$ are all $k$-fold Cartesian products. Note that $k$ is referred to as the blocklength. In block setting with $\mathcal{S} = \mathcal{M}^k$, $\mathcal{X}=\mathcal{A}^k$, $\widehat{\mathcal{S}} = \hat{\mathcal{M}}^k$, and $\widehat{\mathcal{X}}=\hat{\mathcal{A}}^k$, the $(k, M, d_s, d_x, \epsilon)$ code is defined as follows.
\begin{definition}
	(Block):
	A $(k, M, d_s, d_x, \epsilon)$ code is an $(M, d_s, d_x, \epsilon)$ code for $\{\mathcal{M}^k, \mathcal{A}^k, \hat{\mathcal{M}}^k, \hat{\mathcal{A}}^k, P_{S^kX^k}, \textsf{d}_s: \mathcal{M}^k \times \hat{\mathcal{M}}^k \mapsto [0, +\infty), \textsf{d}_x: \mathcal{A}^k \times \hat{\mathcal{A}}^k \mapsto [0, +\infty)\}$.
\end{definition}
The problem of nonasymptotic indirect lossy source coding with observed source reconstruction in the general and block settings are illustrated in Fig. \ref{JDSLC_figs}.

\subsection{Tilted Information}
The noisy rate-distortion function is defined as \cite{Liu2021Rate}
\begin{subequations}
	\label{rate_distortion_function}
	\begin{align}
	R_{S,X}(d_s,d_x) \triangleq \min_{P_{ZY|X}}\ &I(X;Z,Y) \label{rate_distortion_function_obj} \\
	\mathrm{s.t.}\
	&\mathbb{E}\left[\bar{\textsf{d}}_s(X,Z)\right] \leq d_s, \\
	&\mathbb{E}\left[\textsf{d}_x(X,Y)\right] \leq d_x, 
	\end{align}
\end{subequations}
where $\bar{\textsf{d}}_s: \mathcal{X} \times \widehat{\mathcal{S}} \mapsto [0, +\infty)$ is given by
\begin{equation}
\bar{\textsf{d}}_s(x,z)\triangleq \mathbb{E}[\textsf{d}_s(S, z)| X=x].
\end{equation}
Note that the minimum in \eqref{rate_distortion_function} can be achieved by some $P_{ZY|X}$ only for maximum admissible distortions $(d_s,d_x)$ satisfying $d_s \geq d_{s,\min}\triangleq \mathbb{E}[\min_{z \in \widehat{\mathcal{S}}} \bar{\textsf{d}}_s(X,z)]$ and $d_x \geq d_{x,\min}\triangleq \mathbb{E}[\min_{y \in \widehat{\mathcal{X}}} \textsf{d}_x(X,y)]$. Thus $R_{S,X}(d_s,d_x)$ is non-empty if and only if $(d_s,d_x) \in \mathcal{D}_{\mathrm{adm}} \triangleq \{(d_s,d_x): d_s \geq d_{s,\min}, d_x \geq d_{x,\min}\}$. We further define
\begin{align}
R_{\tilde{X}}(d_s,d_x) \triangleq R_{S,\tilde{X}}(d_s,d_x),
\end{align}
where $P_{S\tilde{X}} = P_{\tilde{X}}P_{S|X}$. Clearly, $R_{X}(d_s,d_x) = R_{S,X}(d_s,d_x)$.

Denote by $\mathcal{P}^{\star}(d_s,d_x)$ the set of optimal solutions of problem \eqref{rate_distortion_function} associated with $(d_s,d_x) \in \mathcal{D}_{\mathrm{adm}}$.  Define sets
\begin{align}
\mathcal{D}_{sx} \triangleq \{&(d_s,d_x) \in \mathcal{D}_{\mathrm{adm}}: \forall P_{Z^\star Y^\star|X} \in \mathcal{P}^{\star}(d_s,d_x),\nonumber \\ &\mathbb{E}\left[\bar{\textsf{d}}_s(X,Z^\star)\right] = d_s \textrm{ and }  \mathbb{E}\left[\textsf{d}_x(X,Y^\star)\right] = d_x\},
\end{align}
\begin{align}
\mathcal{D}_{\bar{s}x} \triangleq \{&(d_s,d_x)\in \mathcal{D}_{\mathrm{adm}}: \exists P_{Z^\star Y^\star|X} \in \mathcal{P}^{\star}(d_s,d_x) \textrm{ such that} \nonumber \\ &\mathbb{E}\left[\bar{\textsf{d}}_s(X,Z^\star)\right] < d_s;\ \forall P_{Z^\star Y^\star|X} \in \mathcal{P}^{\star}(d_s,d_x), \nonumber \\ &\mathbb{E}\left[\textsf{d}_x(X,Y^\star)\right] = d_x\},
\end{align}
\begin{align}
\mathcal{D}_{s\bar{x}} \triangleq \{&(d_s,d_x)\in \mathcal{D}_{\mathrm{adm}}: \forall P_{Z^\star Y^\star|X} \in \mathcal{P}^{\star}(d_s,d_x), \nonumber \\ &\mathbb{E}\left[\bar{\textsf{d}}_s(X,Z^\star)\right] = d_s;\ \exists P_{Z^\star Y^\star|X} \in \mathcal{P}^{\star}(d_s,d_x) \nonumber \\ &\textrm{ such that } \mathbb{E}\left[\textsf{d}_x(X,Y^\star)\right] < d_x\},
\end{align}
\begin{align}
\mathcal{D}_{\bar{s}\bar{x}} \triangleq \{&(d_s,d_x)\in \mathcal{D}_{\mathrm{adm}}: \exists P_{Z^\star Y^\star|X} \in \mathcal{P}^{\star}(d_s,d_x) \textrm{ such that} \nonumber \\ &\mathbb{E}\left[\bar{\textsf{d}}_s(X,Z^\star)\right] < d_s;\ \exists P_{Z^\star Y^\star|X} \in \mathcal{P}^{\star}(d_s,d_x) \nonumber \\ &\textrm{ such that } \mathbb{E}\left[\textsf{d}_x(X,Y^\star)\right] < d_x\}.
\end{align}
Clearly, $\{\mathcal{D}_{sx}, \mathcal{D}_{\bar{s}x}, \mathcal{D}_{s\bar{x}}, \mathcal{D}_{\bar{s}\bar{x}}\}$ is a partition of $\mathcal{D}_{\mathrm{adm}}$, resulting from the tightness or looseness of the two constraints. By the methodologies for proving \cite[Theorem 2.5.1]{Berger1971Rate} and \cite[Theorem 2.5.5]{Berger1971Rate}, we conclude that 	$R_{S,X}(d_s,d_x)$ is differentiable on $\mathcal{D}_{\mathrm{in}} \triangleq \textsf{int}(\mathcal{D}_{sx}) \cup \textsf{int}(\mathcal{D}_{\bar{s}x}) \cup \textsf{int}(\mathcal{D}_{s\bar{x}}) \cup \textsf{int}(\mathcal{D}_{\bar{s}\bar{x}})$, where $\textsf{int}(\cdot)$ denotes the interior of the input set.

For $(d_s,d_x) \in \mathcal{D}_{\mathrm{in}}$, the noisy $(\textsf{d}_s,\textsf{d}_x)$-tilted information in $(s,x) \in \mathcal{S} \times \mathcal{X}$ given representations $z \in \widehat{\mathcal{S}}$ and $y \in \widehat{\mathcal{X}}$ is defined as
\begin{equation}
	\label{tilted_information}
	\begin{aligned}
	\tilde{\jmath}_{S,X}(s,x,z,y,d_s,d_x) \triangleq& \imath_{X;Z^\star Y^\star}(x;z,y) + \lambda_s^\star \textsf{d}_s(s,z)\\ 
	&+ \lambda_x^\star \textsf{d}_x(x,y) - \lambda_s^\star d_s - \lambda_x^\star d_x,
	\end{aligned}
\end{equation}
where $P_{Z^\star Y^\star| X} \in \mathcal{P}^{\star}(d_s,d_x)$, $P_{Z^\star Y^\star}=\sum_{x \in \mathcal{X}}P_XP_{Z^\star Y^\star| X}$, and
\begin{equation}
\label{information_density}
\imath_{X;Z^\star Y^\star}(x;z,y) \triangleq \log \frac{\mathrm{d} P_{Z^\star Y^\star| X=x}}{\mathrm{d} P_{Z^\star Y^\star}}(z,y),
\end{equation}
	\begin{equation}
	\label{lambda_s}
	\lambda_s^\star \triangleq -\frac{\partial R_{S,X}(d_s,d_x)}{\partial d_s},
	\end{equation}
	\begin{equation}
	\label{lambda_x}
	\lambda_x^\star \triangleq -\frac{\partial R_{S,X}(d_s,d_x)}{\partial d_x}.
	\end{equation}
We further define the $(\bar{\textsf{d}}_s,\textsf{d}_x)$-tilted information in $x$ for the surrogate noiseless two-constraint source coding problem \cite[Section VI]{Blahut1972Computation}, \cite[Problem 10.19]{Cover2006Elements}, $\jmath_{X}(x,d_s,d_x)$, as
\begin{equation}
	\begin{aligned}
	&\jmath_{X}(x,d_s,d_x)\\ 
	\triangleq &\log\! \frac{1}{\mathbb{E}[\exp\{\lambda_s^\star d_s \! + \! \lambda_x^\star d_x \!-\! \lambda_s^\star \bar{\textsf{d}}_s(x,Z^\star) \!-\! \lambda_x^\star \textsf{d}_x(x,Y^\star)\}]},
	\end{aligned}
\end{equation}
where the expectation is with respect to the unconditional distribution $P_{Z^\star Y^\star}$.
We give some useful properties of $\jmath_{X}(x,d_s,d_x)$ as follows.
\begin{property} 
	\label{tilted_information_property}
	Fix $(d_s,d_x) \in \mathcal{D}_{\mathrm{in}}$. For $P_{Z^\star Y^\star}$-a.e. $(z,y)$, it holds that
	\begin{equation}
	\label{d_tilted_information_of_surrogate}
	\begin{aligned}
	\jmath_{X}(x,d_s,d_x) = &\imath_{X;Z^\star Y^\star}(x;z,y) + \lambda_s^\star \bar{\textsf{d}}_s(x,z)\\ 
	&+ \lambda_x^\star \textsf{d}_x(x,y) - \lambda_s^\star d_s - \lambda_x^\star d_x,
	\end{aligned}
	\end{equation}
	where $P_{XZ^\star Y^\star} = P_X P_{Z^\star Y^\star | X}$. Moreover,
	\begin{align}
	R_{S,X}(d_s,d_x) =& \min_{P_{ZY|X}} \mathbb{E}\big[\imath_{X;ZY}(X;Z,Y) + \lambda_s^\star \bar{\textsf{d}}_s(X,Z) \nonumber \\ 
	&\quad \quad \ \, + \lambda_x^\star \textsf{d}_x(X,Y)\big] - \lambda_s^\star d_s - \lambda_x^\star d_x \label{property1} \\
	=& \mathbb{E}\big[\jmath_{X}(X,d_s,d_x)\big], \label{property3}
	\end{align}
	and for all $z \in \widehat{\mathcal{S}}$ and $y \in \widehat{\mathcal{X}}$,
	\begin{equation}
	\label{property4}
		\begin{aligned}
		&\mathbb{E}\big[\exp\big\{\lambda_s^\star d_s + \lambda_x^\star d_x - \lambda_s^\star \bar{\textsf{d}}_s(X,z) - \lambda_x^\star \textsf{d}_x(X,y)\\ 
		&+ \jmath_{X}(X,d_s,d_x) \big\}\big] \leq 1
		\end{aligned}
	\end{equation}
	with equality for $P_{Z^\star Y^\star}$-a.e. $(z,y)$.
\end{property}
\begin{proof}[Proof]
This property can be obtained by extending the methodology in \cite[Appendix B.1]{Kostina2013Lossy} to accommodate our two-constraint case. We give the detailed proof in Appendix \ref{proof_tilted_information_property}.
\end{proof}
\begin{remark} 
	\label{remark_core_properties}
	The optimal solutions to problem \eqref{rate_distortion_function}, which in turn achieve $R_{S,X}(d_s,d_x)$, consistently adhere to Property \ref{tilted_information_property}. Nevertheless, the conditional distributions $P_{Z^\star Y^\star|X}$ that satisfy Property \ref{tilted_information_property} are not necessarily feasible to problem \eqref{rate_distortion_function}.
\end{remark}

Using Property \ref{tilted_information_property}, for $P_{Z^\star Y^\star}$-a.e. $(z,y)$,
\begin{equation}
\label{relation_of_tilted_information}
\tilde{\jmath}_{S,X}\!(s,x,z,y,d_s,d_x) \!=\! \jmath_{X}\!(x,d_s,d_x) + \lambda_s^\star \textsf{d}_s(s,z) - \lambda_s^\star \bar{\textsf{d}}_s(x,z).
\end{equation}
By \eqref{tilted_information} and \eqref{relation_of_tilted_information}, we have
\begin{align}
R_{S,X}(d_s,d_x) =& \mathbb{E}[\tilde{\jmath}_{S,X}(S,X,Z^\star,Y^\star,d_s,d_x)] \nonumber \\ 
=& \mathbb{E}[\jmath_{X}(X,d_s,d_x)].
\end{align}
Define the noisy rate-dispersion function, $\tilde{\mathcal{V}}(d_s,d_x)$, as
\begin{equation}
\label{noisy_rate_dispersion_func}
\tilde{\mathcal{V}}(d_s,d_x) \triangleq \textrm{Var}\left[\tilde{\jmath}_{S,X}(S,X,Z^\star,Y^\star,d_s,d_x)\right].
\end{equation}
Similarly, define the rate-dispersion function of the surrogate noiseless problem, $\mathcal{V}(d_s,d_x)$, as
\begin{equation}
\mathcal{V}(d_s,d_x) \triangleq \textrm{Var}\left[\jmath_{X}(X,d_s,d_x)\right].
\end{equation}
The following proposition reveals the relationship between $\tilde{\mathcal{V}}(d_s,d_x)$ and $\mathcal{V}(d_s,d_x)$.
\begin{proposition} 
	\label{Prop_relationship_V_tildeV}
	$\tilde{\mathcal{V}}(d_s,d_x)$ can be written as
	\begin{equation}
	\label{relationship_V_tildeV}
	\tilde{\mathcal{V}}(d_s,d_x) = \mathcal{V}(d_s,d_x) + \lambda_s^{\star 2}\textrm{Var}\left[\textsf{d}_s(S,Z^\star)|X,Z^\star\right],
	\end{equation}
	where $\textrm{Var}\left[U|V\right] \triangleq \mathbb{E}\left[(U - \mathbb{E}\left[U|V\right])^2\right]$.
\end{proposition}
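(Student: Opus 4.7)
The plan is to unwind the definition of $\tilde{\mathcal V}(d_s,d_x)$ using the pointwise identity \eqref{relation_of_tilted_information}, which already expresses $\tilde{\jmath}_{S,X}$ as the sum of $\jmath_X(X,d_s,d_x)$ and the residual $\lambda_s^\star(\textsf{d}_s(S,Z^\star)-\bar{\textsf{d}}_s(X,Z^\star))$. All that then remains is a careful variance decomposition, so the proof reduces to showing that these two summands are uncorrelated and identifying the variance of the residual with the quantity $\lambda_s^{\star 2}\,\textrm{Var}[\textsf{d}_s(S,Z^\star)\mid X,Z^\star]$ appearing in \eqref{relationship_V_tildeV}.

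Concretely, I would proceed as follows. First, I would observe that under the joint law $P_{SXZ^\star Y^\star}=P_{SX}P_{Z^\star Y^\star|X}$, $S$ and $Z^\star$ are conditionally independent given $X$, so that $\mathbb{E}[\textsf{d}_s(S,Z^\star)\mid X,Z^\star]=\bar{\textsf{d}}_s(X,Z^\star)$ by the very definition of $\bar{\textsf{d}}_s$. Consequently the residual $R\triangleq\lambda_s^\star(\textsf{d}_s(S,Z^\star)-\bar{\textsf{d}}_s(X,Z^\star))$ satisfies $\mathbb{E}[R\mid X,Z^\star]=0$, and in particular $\mathbb{E}[R]=0$. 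Next, since $\jmath_X(X,d_s,d_x)$ is a function of $X$ alone and hence $(X,Z^\star)$-measurable, the tower property gives
\begin{equation*}
\mathbb{E}[\jmath_X(X,d_s,d_x)\,R]=\mathbb{E}\bigl[\jmath_X(X,d_s,d_x)\,\mathbb{E}[R\mid X,Z^\star]\bigr]=0,
\end{equation*}
so $\textrm{Cov}[\jmath_X(X,d_s,d_x),R]=0$.

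Expanding $\tilde{\mathcal V}(d_s,d_x)=\textrm{Var}[\jmath_X(X,d_s,d_x)+R]=\textrm{Var}[\jmath_X(X,d_s,d_x)]+\textrm{Var}[R]+2\,\textrm{Cov}[\jmath_X(X,d_s,d_x),R]$ and using the vanishing covariance, the first term equals $\mathcal V(d_s,d_x)$ by definition. For the second, since $\mathbb{E}[R]=0$,
\begin{equation*}
\textrm{Var}[R]=\mathbb{E}[R^{2}]=\lambda_s^{\star 2}\,\mathbb{E}\bigl[(\textsf{d}_s(S,Z^\star)-\bar{\textsf{d}}_s(X,Z^\star))^{2}\bigr],
\end{equation*}
which is exactly $\lambda_s^{\star 2}\,\textrm{Var}[\textsf{d}_s(S,Z^\star)\mid X,Z^\star]$ under the (non-standard) convention introduced in the proposition statement. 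Assembling these pieces yields \eqref{relationship_V_tildeV}.

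I do not foresee a substantive technical obstacle here: once \eqref{relation_of_tilted_information} is in hand, the argument is essentially the orthogonality-of-conditional-error step underlying the law of total variance. The only point that requires care is justifying $\mathbb{E}[\textsf{d}_s(S,Z^\star)\mid X,Z^\star]=\bar{\textsf{d}}_s(X,Z^\star)$, which hinges on the Markov structure $S - X - (Z^\star,Y^\star)$ that is baked into the problem formulation, and the careful reading of the author's conditional-variance notation $\textrm{Var}[U\mid V]=\mathbb{E}[(U-\mathbb{E}[U\mid V])^{2}]$, which already integrates out $V$ and so matches $\mathbb{E}[R^{2}]/\lambda_s^{\star 2}$ without any further work.
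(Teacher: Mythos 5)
Your proof is correct and follows the same route the paper indicates: it applies \eqref{relation_of_tilted_information} and then the law of total variance (which you unfold explicitly via the orthogonality of $\jmath_X(X,\cdot)$ and the zero-conditional-mean residual), with correct handling of the Markov structure $S-X-(Z^\star,Y^\star)$ and of the paper's nonstandard $\textrm{Var}[U\mid V]$ convention. The only cosmetic omission is noting that \eqref{relation_of_tilted_information} holds $P_{Z^\star Y^\star}$-a.e., which is all that is needed when taking expectations.
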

\begin{proof}[Proof]
It is a consequence of \eqref{relation_of_tilted_information} and the law of total variance.
\end{proof}

As will be seen, our investigation in the following sections encounters two coupled challenges: 1) how to control the randomness introduced by the hidden source effectively and 2) how to handle the mutual influence between the two distortion constraints. While the first challenge alone has been elegantly solved in \cite{Kostina2016Nonasymptotic}, the coupling of the two challenges arises uniquely in our scenario and will be addressed in the following sections.



\section{Nonasymptotic Achievability Bounds}
\label{Nonasymptotic_Achievability_Bounds}
In this section, we provide two nonasymptotic achievability bounds that are valid for general sources and distortion measures.

\begin{theorem} 
	\label{random_coding_achievability}
	(Achievability): For any $P_{\bar{Z}\bar{Y}}$ defined on $\widehat{\mathcal{S}} \times \widehat{\mathcal{X}}$, there exists a deterministic $(M, d_s,d_x, \epsilon)$ code with 
	\begin{equation}
		\label{achievability_1}
		\epsilon \leq \int_{0}^1 \mathbb{E}\Big[\mathbb{P}^M\left[\pi(X, \bar{Z}, \bar{Y}) > t | X \right]\Big] \mathrm{d} t,
	\end{equation}
	where $P_{X\bar{Z} \bar{Y}} = P_{X} P_{\bar{Z} \bar{Y}}$ and
	\begin{equation}
	\begin{aligned}
	\pi(x,z,y) =& \mathbb{P}[\textsf{d}_s(S, Z) > d_s\\ 
	&\cup \textsf{d}_x(X, Y) > d_x | X=x, Z=z, Y=y].
	\end{aligned}
	\end{equation}
\end{theorem}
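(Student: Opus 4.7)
The plan is a standard random-coding argument in the spirit of \cite{Kostina2012Fixed} and \cite{Kostina2016Nonasymptotic}, adapted to the two-distortion setting where one distortion constraint involves the hidden source $S$ that the encoder cannot observe. I would generate $M$ codewords $(\bar Z_1,\bar Y_1),\ldots,(\bar Z_M,\bar Y_M)$ i.i.d.\ from the given distribution $P_{\bar Z\bar Y}$ on $\widehat{\mathcal S}\times\widehat{\mathcal X}$, independent of $(S,X)$, and then specify a deterministic encoder of the random codebook that, on input $X=x$, picks the index
\[
j^\star(x,\mathcal C)\;=\;\arg\min_{1\leq j\leq M}\pi\bigl(x,\bar Z_j,\bar Y_j\bigr),
\]
where $\pi(x,z,y)$ is defined in the theorem statement. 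This minimization depends only on the observable $X$ (since $\pi$ already marginalizes over $S$ given $X=x$), so it respects the noisy-encoder constraint.

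Next I would analyze the joint excess distortion probability of this random code. Conditioned on $X=x$ and the codebook $\mathcal C$, the conditional probability of the event $\{\textsf{d}_s(S,Z)>d_s\}\cup\{\textsf{d}_x(X,Y)>d_x\}$ equals exactly $\pi(x,\bar Z_{j^\star},\bar Y_{j^\star})=\min_j\pi(x,\bar Z_j,\bar Y_j)$; this is the one place where the definition of $\pi$ and the conditional independence of $S$ from the codebook given $X$ get used. Averaging over the codebook and using the identity $\mathbb E[T]=\int_0^1\mathbb P[T>t]\,\mathrm d t$ for a $[0,1]$-valued random variable $T$, together with the i.i.d.\ structure of the codewords, gives
\[
\mathbb E_{\mathcal C}\!\left[\min_{1\leq j\leq M}\pi(x,\bar Z_j,\bar Y_j)\right]
=\int_0^1\mathbb P^M\!\bigl[\pi(x,\bar Z,\bar Y)>t\bigr]\,\mathrm d t,
\]
where $(\bar Z,\bar Y)\sim P_{\bar Z\bar Y}$ on the right-hand side. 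Taking a further expectation over $X\sim P_X$ (and using $P_{X\bar Z\bar Y}=P_XP_{\bar Z\bar Y}$) yields the right-hand side of \eqref{achievability_1} as the expected joint excess distortion probability of the random code. A standard derandomization then produces a deterministic $(M,d_s,d_x,\epsilon)$ code meeting the bound.

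The only mildly delicate step, and therefore the main thing to verify carefully, is the identity relating the true conditional excess-distortion probability of the deterministic encoder to $\min_j\pi(x,\bar Z_j,\bar Y_j)$. One must observe that, given $X=x$ and the codebook, the chosen $(\bar Z_{j^\star},\bar Y_{j^\star})$ is a deterministic function of $x$ and $\mathcal C$ (and hence independent of $S$ conditionally on $X$), so the probability of the joint excess-distortion event under the induced $P_{ZY|U}P_{U|X}$ is exactly $\pi(x,\bar Z_{j^\star},\bar Y_{j^\star})$. Once this is in hand, the remaining manipulations — pulling $\min$ inside an $M$-fold product via independence and using the tail-integral formula — are routine and produce \eqref{achievability_1}.
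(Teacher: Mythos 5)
Your proposal is correct and takes essentially the same route as the paper: both use the encoder $\textsf{f}(x)=\arg\min_{j}\pi(x,\bar Z_j,\bar Y_j)$, both observe that the conditional excess-distortion probability given $X$ and a fixed codebook is exactly $\min_j\pi(x,\bar Z_j,\bar Y_j)$, and both convert the expectation of this minimum into the $M$-fold product via the tail-integral identity and i.i.d.\ codeword structure before derandomizing. The only cosmetic difference is the order of operations (the paper derives the integral formula for a fixed codebook first and then averages over codebooks; you average over the codebook conditionally on $X=x$ first and then integrate over $t$), but these are interchangeable by Fubini.
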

\begin{proof}[Proof]
	This proof is established by giving an achievability scheme inspired by random coding. Provided $M$ codewords $\{(c_s^{(i)}, c_x^{(i)})\}_{i=1}^M$, the encoder $\textsf{f}$ and decoder $\textsf{c}$ that minimize the excess distortion probability operate as follows. Having observed $x \in \mathcal{X}$, the optimal encoder chooses
	\begin{equation}
		i^\star \in \arg\min_{i} \pi\left(x,c_s^{(i)}, c_x^{(i)}\right),
	\end{equation}
	that is, $\textsf{f}(x) = i^\star$, and the corresponding decoder outputs $\textsf{c}(\textsf{f}(x)) = (c_s^{(i^\star)}, c_x^{(i^\star)})$.
	
	Denote $\textsf{c}_s(\textsf{f}(x)) = c_s^{(i^\star)}$ and $\textsf{c}_x(\textsf{f}(x)) = c_x^{(i^\star)}$. Following the argument in \cite[Theorem 3]{Kostina2016Nonasymptotic}, the excess distortion probability achieved by the above scheme can be computed by
	\begin{align}
	&\mathbb{P}\left[\textsf{d}_s(S, \textsf{c}_s(\textsf{f}(X))) > d_s \cup \textsf{d}_x(X, \textsf{c}_x(\textsf{f}(X))) > d_x \right] \label{ensemble_error_prob} \\
	=& \mathbb{E}\left[\pi \left(X,  \textsf{c}_s(\textsf{f}(X)), \textsf{c}_x(\textsf{f}(X))\right)\right]\\
	=& \int_{0}^1 \mathbb{P}\left[\pi \left(X,  \textsf{c}_s(\textsf{f}(X)), \textsf{c}_x(\textsf{f}(X))\right) > t\right] \mathrm{d} t\\
	=& \int_{0}^1 \mathbb{E} \left[ \mathbb{P}\left[\pi \left(X,  \textsf{c}_s(\textsf{f}(X)), \textsf{c}_x(\textsf{f}(X))\right) > t | X \right]\right] \mathrm{d} t\\
	=&\int_{0}^1 \mathbb{E} \bigg[ \prod_{i=1}^M \mathbb{P}\left[\pi \left(X,  c_s^{(i)}, c_x^{(i)}\right) > t | X \right]\bigg] \mathrm{d} t, \label{achievability_middle1}
	\end{align}
	where \eqref{achievability_middle1} holds since an error event occurs if and only if all the $M$ codewords fail to meet the distortion requirements.
	
	We then average the right side of \eqref{achievability_middle1} with respect to the codewords $\{(Z_i, Y_i)\}_{i=1}^M$ drawn i.i.d. from $P_{\bar{Z}\bar{Y}}$, independently of any other random variable, so that $P_{X Z_1 Y_1 \dots Z_M Y_M} = P_X \times P_{\bar{Z}\bar{Y}} \times \dots \times P_{\bar{Z}\bar{Y}} $, to obtain
	\begin{align}
		&\mathbb{E} \!\bigg[ \int_{0}^1 \! \mathbb{E} \!\bigg[ \prod_{i=1}^M \mathbb{P}\left[\pi \left(X,  Z_i,  Y_i\right) \!>\! t | X , Z_i, Y_i \right] \bigg| \{(Z_i, Y_i)\}_{i=1}^M\! \bigg] \!\!\mathrm{d} t \!\bigg]\\
		=& \int_{0}^1 \!\mathbb{E} \!\bigg[ \mathbb{E}\! \bigg[   \prod_{i=1}^M \mathbb{P}\left[\pi \left(X,  Z_i,  Y_i\right) > t | X , Z_i, Y_i \right] \bigg| X \bigg]  \bigg] \mathrm{d} t\\
		=& \int_{0}^1 \! \mathbb{E}\! \bigg[    \prod_{i=1}^M \mathbb{E} \!\bigg[ \mathbb{P}\left[\pi \left(X,  Z_i,  Y_i\right) > t | X , Z_i, Y_i \right] \bigg| X \bigg]  \bigg] \mathrm{d} t\\
		=& \int_{0}^1 \mathbb{E} \big[  \mathbb{E}^M \big[ \mathbb{P}\left[\pi \left(X,  \bar{Z},  \bar{Y}\right) > t | X , \bar{Z}, \bar{Y} \right] \big| X \big]  \big] \mathrm{d} t\\
		=& \int_{0}^1 \mathbb{E} \left[ \mathbb{P}^M\left[\pi \left(X,  \bar{Z},  \bar{Y}\right) > t | X\right] \right] \mathrm{d} t.
	\end{align}
	Since there exists a codebook achieving excess distortion probability no greater than the average over codebooks, \eqref{achievability_1} follows.
\end{proof}
	

	The following weakening of Theorem \ref{random_coding_achievability} is appropriate for the achievability part of our second-order analysis.
	
	\begin{theorem} 
		\label{weakening_random_coding_achievability}
		(Achievability): For any $P_{\bar{Z}\bar{Y}}$ defined on $\widehat{\mathcal{S}} \times \widehat{\mathcal{X}}$ and $\gamma > 0$, there exists an $(M, d_s,d_x, \epsilon)$ code with 
		\begin{equation}
		\label{achievability_2}
		\epsilon \leq  \mathbb{P}\left[g_{\bar{Z}\bar{Y}}(X,U) \geq \log\gamma \right] + e^{-\frac{M}{\gamma}},
		\end{equation}
		where $U$ is uniform on $[0, 1]$, independent of $X$, and
		\begin{equation}
			\label{achi_inner_problem}
			g_{\bar{Z}\bar{Y}}(x,t) \triangleq \inf_{P_{ZY}:\ \pi(x, Z, Y) \leq t \ \textrm{a.s.}} D(P_{ZY} \Vert P_{\bar{Z}\bar{Y}}).
		\end{equation}
	\end{theorem}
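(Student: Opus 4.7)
The plan is to derive Theorem \ref{weakening_random_coding_achievability} as a direct weakening of Theorem \ref{random_coding_achievability}. First I would reduce the integrand in \eqref{achievability_1} to an expression in the rate function $g_{\bar Z\bar Y}$, and then split the integration over $t\in[0,1]$ at the threshold $\log\gamma$: the two regimes will separately produce the probability term and the $e^{-M/\gamma}$ term in \eqref{achievability_2}.

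The central observation is that
\begin{equation}
\mathbb{P}[\pi(X,\bar Z,\bar Y)\leq t \mid X=x] = e^{-g_{\bar Z\bar Y}(x,t)}.
\end{equation}
To see this, let $A_{x,t}\triangleq\{(z,y):\pi(x,z,y)\leq t\}$, so that the left-hand side equals $P_{\bar Z\bar Y}(A_{x,t})$. Every $P_{ZY}$ feasible in \eqref{achi_inner_problem} is supported on $A_{x,t}$, so letting $Q_{x,t}$ denote $P_{\bar Z\bar Y}$ restricted and renormalized to $A_{x,t}$ and factoring $\frac{\mathrm{d}P_{ZY}}{\mathrm{d}P_{\bar Z\bar Y}} = \frac{\mathrm{d}P_{ZY}}{\mathrm{d}Q_{x,t}}\cdot\frac{\mathbf{1}_{A_{x,t}}}{P_{\bar Z\bar Y}(A_{x,t})}$, one obtains the decomposition $D(P_{ZY}\|P_{\bar Z\bar Y})=D(P_{ZY}\|Q_{x,t})-\log P_{\bar Z\bar Y}(A_{x,t})$. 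Since $D(P_{ZY}\|Q_{x,t})\geq 0$ with equality iff $P_{ZY}=Q_{x,t}$, the infimum in \eqref{achi_inner_problem} equals $-\log P_{\bar Z\bar Y}(A_{x,t})$, which yields the claimed identity.

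Substituting this identity into \eqref{achievability_1} gives $\epsilon \leq \int_0^1 \mathbb{E}[(1-e^{-g_{\bar Z\bar Y}(X,t)})^M]\,\mathrm{d}t$. I then apply the elementary inequality $(1-u)^M\leq e^{-Mu}$ for $u\in[0,1]$ and split the integrand based on whether $g_{\bar Z\bar Y}(X,t)$ exceeds $\log\gamma$: on $\{g_{\bar Z\bar Y}(X,t)\geq\log\gamma\}$, I bound $(1-e^{-g_{\bar Z\bar Y}(X,t)})^M$ trivially by $1$; on its complement, I bound it by $\exp(-Me^{-\log\gamma})=e^{-M/\gamma}$. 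Integrating and using the independence of $U\sim\mathrm{Uniform}[0,1]$ from $X$ to rewrite $\int_0^1\mathbb{P}[g_{\bar Z\bar Y}(X,t)\geq\log\gamma]\,\mathrm{d}t$ as $\mathbb{P}[g_{\bar Z\bar Y}(X,U)\geq\log\gamma]$ yields \eqref{achievability_2}.

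No step presents a substantial obstacle; the only care required is the equality $g_{\bar Z\bar Y}(x,t)=-\log P_{\bar Z\bar Y}(A_{x,t})$ together with the degenerate case $P_{\bar Z\bar Y}(A_{x,t})=0$, for which $g_{\bar Z\bar Y}(x,t)=+\infty$ under the convention that an infimum over the empty set is $+\infty$; this case automatically lies in the large-$g$ regime and contributes consistently to the probability term.
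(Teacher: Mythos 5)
Your proof is correct, and it takes a genuinely different (and in fact cleaner) route than the paper's. The paper follows the template of the corresponding result in \cite{Kostina2016Nonasymptotic}: starting from \eqref{achievability_1} it applies the layered inequality $(1-p)^M \leq e^{-Mp} \leq e^{-M/\gamma}\min\{1,\gamma p\} + |1-\gamma p|^+$ and then, for each $(x,t)$, bounds the term $|1-\gamma\, \mathbb{P}[\pi(X,\bar Z,\bar Y)\leq t\mid X=x]|^+$ by choosing an arbitrary feasible $P_{ZY}$ and invoking a change-of-measure step followed by Jensen's inequality to pull the divergence $D(P_{ZY}\Vert P_{\bar Z\bar Y})$ out of the expectation; it then replaces $D$ by the infimum $g_{\bar Z\bar Y}(x,t)$. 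Your argument short-circuits this: you observe that the infimum defining $g_{\bar Z\bar Y}(x,t)$ is attained at the conditional law $Q_{x,t}=P_{\bar Z\bar Y}(\cdot\mid A_{x,t})$, which gives the exact identity $g_{\bar Z\bar Y}(x,t)=-\log P_{\bar Z\bar Y}(A_{x,t})=-\log\mathbb{P}[\pi(x,\bar Z,\bar Y)\leq t]$; from there $(1-p)^M\leq e^{-Mp}$ and a simple threshold split at $\log\gamma$ immediately yield \eqref{achievability_2}. In effect, you identify the point where the paper's Jensen step is actually tight, so your derivation avoids the auxiliary inequality and makes transparent what $g_{\bar Z\bar Y}$ is measuring (the log-probability of the per-realization joint distortion ball). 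Your handling of the degenerate case $P_{\bar Z\bar Y}(A_{x,t})=0$ is also consistent, since both sides of the identity are then $+\infty$. Both proofs are sound; yours is the more direct one, while the paper's has the advantage of following a general template that transfers to settings where the exact minimizer $Q_{x,t}$ is less convenient to exhibit.
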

	\begin{proof}[Proof]
	Following the proof of \cite[Theorem 4]{Kostina2016Nonasymptotic}, by Theorem \ref{random_coding_achievability}, there exists an $(M, d_s,d_x, \epsilon)$ code with 
	\begin{align}
	\epsilon \leq& \int_{0}^1 \mathbb{E}\Big[\mathbb{P}^M\left[\pi(X, \bar{Z}, \bar{Y}) > t | X \right]\Big] \mathrm{d} t\\
	\leq& e^{-\frac{M}{\gamma}}\mathbb{E}\left[\min\left\{1, \gamma\int_{0}^{1} \mathbb{P}\left[\pi(X, \bar{Z}, \bar{Y}) \leq t | X \right] \mathrm{d} t \right\}\right] \nonumber \\
	&+\int_0^1 \mathbb{E}\left[\left|1 - \gamma \mathbb{P}\left[\pi(X, \bar{Z}, \bar{Y}) \leq t | X \right] \right|^+\right] \mathrm{d} t \label{apply_ine} \\
	\leq& e^{-\frac{M}{\gamma}} + \int_0^1 \mathbb{E}\left[\left|1 - \gamma \mathbb{P}\left[\pi(X, \bar{Z}, \bar{Y}) \leq t | X \right] \right|^+\right] \mathrm{d} t,
	\end{align}
	where $|a|^+ \triangleq \max\{0,a\}$, and to obtain \eqref{apply_ine} we applied
	\begin{equation}
	(1 -p)^M \leq e^{-Mp} \leq e^{-\frac{M}{\gamma}} \min\{1, \gamma p\} + |1 - \gamma p|^+.
	\end{equation}
	To bound $\left|1 - \gamma \mathbb{P}\left[\pi(X, \bar{Z}, \bar{Y}) \leq t | X = x\right] \right|^+$, let $P_{ZY}$ be some distribution (chosen individually for each $x$ and $t$) such that $\pi(x, Z, Y) \leq t$ a.s., and write
	\begin{align}
	&\left|1 - \gamma \mathbb{P}\left[\pi(X, \bar{Z}, \bar{Y}) \leq t | X = x\right] \right|^+\\
	\leq&\left|1 \!-\! \gamma \mathbb{E}\left[\exp\left(- \imath_{ZY\Vert \bar{Z}\bar{Y}}(Z,Y) \right) 1\!\left\{ \pi(X, Z, Y) \!\leq\! t\right\} | X \!=\! x\right] \right|^+ \label{achievability_proof_1} \\
	=&\left|1 - \gamma \mathbb{E}\left[\exp\left(- \imath_{ZY\Vert \bar{Z}\bar{Y}}(Z,Y) \right)\right] \right|^+\label{achievability_proof_2}\\
	\leq&\left|1 - \gamma \exp\left(-D\left(P_{ZY} \Vert P_{\bar{Z}\bar{Y}}\right)\right) \right|^+ \label{achievability_proof_3}\\
	\leq& 1\left\{D\left(P_{ZY} \Vert P_{\bar{Z}\bar{Y}}\right) > \log \gamma \right\}, \label{achievability_proof_4}
	\end{align}
	where \eqref{achievability_proof_1} is by the change of measure argument,
	\begin{equation}
	\imath_{ZY\Vert \bar{Z}\bar{Y}}(z,y) = \log \frac{\mathrm{d} P_{ZY}}{\mathrm{d} P_{\bar{Z}\bar{Y}}}(z,y),
	\end{equation}
	\eqref{achievability_proof_2} is by the choice of $P_{ZY}$, \eqref{achievability_proof_3} is by Jensen's inequality, and \eqref{achievability_proof_4} follows from
	\begin{equation}
	\gamma \exp\left(-D\left(P_{ZY} \Vert P_{\bar{Z}\bar{Y}}\right)\right) \geq 
	\left\{
	\begin{aligned}
	1& \ \ \ \textrm{if} \ D\left(P_{ZY} \Vert P_{\bar{Z}\bar{Y}}\right) \leq \log \gamma,\\
	0& \ \ \ \textrm{otherwise}. 
	\end{aligned}
	\right.
	\end{equation}
\end{proof}
	
\section{Nonasymptotic Converse Bounds}
\label{Nonasymptotic_Converse_Bounds}

In this section, we provide the general nonasymptotic converse bounds valid for general sources and distortion measures.
For fixed $P_X$ and auxiliary conditional distribution $P_{\bar{X}|\bar{Z}\bar{Y}}$, denote
\begin{align}
f_{\bar{X}|\bar{Z}\bar{Y}}(s,x,z,y) \triangleq& \imath_{\bar{X}|\bar{Z}\bar{Y}\Vert X}(x;z,y) + \sup_{\lambda_s \geq 0} \lambda_s(\textsf{d}_s(s,z) - d_s) \nonumber \\ 
&+\sup_{\lambda_x \geq 0} \lambda_x(\textsf{d}_x(x,y) - d_x) \!-\! \log M,
\end{align}
where
\begin{equation}
\imath_{\bar{X}|\bar{Z}\bar{Y}\Vert X}(x;z,y) \triangleq \log \frac{\mathrm{d} P_{\bar{X}|\bar{Z}=z,\bar{Y}=y}}{\mathrm{d} P_X}(x).
\end{equation}
The nonasymptotic converse results can now be stated as follows.
\begin{theorem} 
	\label{general_converse_theorem}
	(Converse): If an $(M, d_s, d_x, \epsilon)$ code exists, then the following inequality must be hold:
	\begin{align}
	\label{general_converse}
	\epsilon \geq& \inf_{\substack{P_{ZY|X}:\\ \mathcal{X} \to \widehat{\mathcal{S}}\times \widehat{\mathcal{X}}}} \sup_{\substack{P_{\bar{X}|\bar{Z}\bar{Y}}:\\ \widehat{\mathcal{S}}\times \widehat{\mathcal{X}} \to \mathcal{X}}} \sup_{\gamma \geq 0}\big\{\mathbb{P}\left[f_{\bar{X}|\bar{Z}\bar{Y}}(S,X,Z,Y) \geq \gamma\right]\nonumber\\ 
	&- \exp(-\gamma)\big\},
	\end{align}
	where the middle supremum is over those $P_{\bar{X}|\bar{Z}\bar{Y}}$ such that Radon-Nikodym derivative of $P_{\bar{X}|\bar{Z}=z,\bar{Y}=y}$ with respect to $P_X$ at $x$ exists for $P_{ZY|X}P_X$-a.e. $(z,y,x)$.
\end{theorem}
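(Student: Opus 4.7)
The plan is to fix an arbitrary $(M, d_s, d_x, \epsilon)$ code $(P_{U|X}, P_{ZY|U})$, summarize its effect by the induced stochastic kernel $\tilde{P}_{ZY|X}(z,y|x)=\sum_{u=1}^{M}P_{U|X}(u|x)\,P_{ZY|U}(z,y|u)$, and then show that for \emph{this} specific $\tilde{P}_{ZY|X}$, every admissible auxiliary $P_{\bar{X}|\bar{Z}\bar{Y}}$, and every $\gamma\geq 0$,
\begin{equation*}
\mathbb{P}\bigl[f_{\bar{X}|\bar{Z}\bar{Y}}(S,X,Z,Y)\geq\gamma\bigr]-e^{-\gamma}\leq\epsilon,
\end{equation*}
where the probability is taken under $P_{SX}\tilde{P}_{ZY|X}$. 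Because $\tilde{P}_{ZY|X}$ is a feasible point for the outer $\inf_{P_{ZY|X}}$ in \eqref{general_converse}, this pointwise inequality will immediately imply the theorem.

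The first step is to read off the structure of $f$ from its variational form. Since $\sup_{\lambda\geq 0}\lambda t$ equals $0$ when $t\leq 0$ and $+\infty$ otherwise, $f_{\bar{X}|\bar{Z}\bar{Y}}(s,x,z,y)=+\infty$ whenever $\textsf{d}_s(s,z)>d_s$ or $\textsf{d}_x(x,y)>d_x$, and reduces to $\imath_{\bar{X}|\bar{Z}\bar{Y}\Vert X}(x;z,y)-\log M$ on the complementary no-excess event. This is exactly how the two distortion constraints enter together, in contrast to the single-constraint Kostina-Verd\'u meta-converse. Partitioning $\{f\geq\gamma\}$ along these cases gives
\begin{align*}
\mathbb{P}[f\geq\gamma]
=&\,\mathbb{P}[\textsf{d}_s(S,Z)>d_s\cup\textsf{d}_x(X,Y)>d_x]\\
&+\mathbb{P}[\text{no excess},\,\imath_{\bar{X}|\bar{Z}\bar{Y}\Vert X}(X;Z,Y)\geq\log M+\gamma],
\end{align*}
and the first summand is at most $\epsilon$ by hypothesis on the code.

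The heart of the proof is the change-of-measure bound $\mathbb{P}[\imath_{\bar{X}|\bar{Z}\bar{Y}\Vert X}(X;Z,Y)\geq\log M+\gamma]\leq e^{-\gamma}$, which exploits the $M$-ary bottleneck $U$. On the event in question one has $P_X(\mathrm{d}x)\leq (Me^{\gamma})^{-1}\,P_{\bar{X}|\bar{Z}\bar{Y}}(\mathrm{d}x|z,y)$; substituting this after expanding $\tilde{P}_{ZY|X}$ through $U$, and interchanging sum and integral, yields
\begin{equation*}
\mathbb{P}[\imath\geq\log M+\gamma]\leq \frac{1}{Me^{\gamma}}\sum_{u=1}^{M}\!\int\! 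P_{ZY|U}(\mathrm{d}z,\mathrm{d}y|u)\!\int\! P_{U|X}(u|x)\,P_{\bar{X}|\bar{Z}\bar{Y}}(\mathrm{d}x|z,y).
\end{equation*}
Since $P_{U|X}(u|\cdot)\leq 1$, the inner $x$-integral is bounded by $\int P_{\bar{X}|\bar{Z}\bar{Y}}(\mathrm{d}x|z,y)=1$, and summing over $u$ produces the factor $M$ that cancels the denominator. Combining with the previous display gives $\mathbb{P}[f\geq\gamma]-e^{-\gamma}\leq\epsilon$; taking suprema in $\gamma$ and $P_{\bar{X}|\bar{Z}\bar{Y}}$ and then the infimum over $P_{ZY|X}$ delivers \eqref{general_converse}.

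The delicate point I expect to be the main obstacle is the legitimacy of this change of measure: it requires $\mathrm{d}P_{\bar{X}|\bar{Z}=z,\bar{Y}=y}/\mathrm{d}P_X$ to exist on a set of full $\tilde{P}_{ZY|X}P_X$-measure, because otherwise $\imath_{\bar{X}|\bar{Z}\bar{Y}\Vert X}$ is ill-defined and the inequality $P_X(\mathrm{d}x)\leq(Me^\gamma)^{-1}P_{\bar{X}|\bar{Z}\bar{Y}}(\mathrm{d}x|z,y)$ cannot be invoked. This is precisely the Radon-Nikodym side-condition placed on the middle supremum in the statement, and enforcing it in a way compatible with the outer infimum (which varies $P_{ZY|X}$) is what makes the universal-quantification structure in \eqref{general_converse} go through. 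A secondary subtlety is that the argument must accommodate \emph{randomized} encoders and decoders; the bound $P_{U|X}(u|\cdot)\leq 1$ used above is the reason no derandomization is needed, and is what allows the $M$-ary bottleneck to do its work even when the code's marginal $P_{ZY}$ has support exceeding $M$.
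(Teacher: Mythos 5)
Your proposal is correct and follows essentially the same route as the paper's proof: you split $\{f\geq\gamma\}$ along the excess-distortion event (bounded by $\epsilon$), recognize that on the complement $f$ reduces to $\imath_{\bar{X}|\bar{Z}\bar{Y}\Vert X}-\log M$, and bound the tail of $\imath$ by $e^{-\gamma}$ via change of measure (equivalently, Markov on $\exp(\imath)$) together with the Markov chain $X-U-(Z,Y)$ and the elementary bound $P_{U|X}(u|x)\leq 1$. The only cosmetic differences are that you phrase the final step as a change of measure restricted to the tail event rather than as a Markov inequality on the full expectation, and you make explicit that the induced kernel $\tilde{P}_{ZY|X}$ witnesses the outer infimum; both points are implicit in the paper's argument.
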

\begin{proof}[Proof]
	Following the proof of \cite[Theorem 2]{Kostina2016Nonasymptotic}, let the encoder and the decoder be the random mappings $P_{U|X}$ and $P_{ZY|U}$, respectively, where $U$ takes values in $\{1, \dots, M\}$. Then, for any $\gamma \geq 0$,
	\begin{align}
	&\mathbb{P}\left[f_{\bar{X}|\bar{Z}\bar{Y}}(S,X,Z,Y) \geq \gamma\right]\\ 
	=& \mathbb{P}\left[f_{\bar{X}|\bar{Z}\bar{Y}}(S,X,Z,Y) \!\geq\! \gamma, \textsf{d}_s(S,Z)\! > \! d_s  \cup  \textsf{d}_x(X,Y) \!>\! d_x \right] \nonumber \\
	& + \mathbb{P}\left[f_{\bar{X}|\bar{Z}\bar{Y}}(S,X,Z,Y) \!\geq\! \gamma, \textsf{d}_s(S,Z) \!\leq\! d_s, \textsf{d}_x(X,Y) \!\leq\! d_x\right]\\
	\leq& \epsilon + \mathbb{P}\big[\imath_{\bar{X}|\bar{Z}\bar{Y}\Vert X}(X;Z,Y) \geq \gamma +\log M, \nonumber \\
	&\textsf{d}_s(S,Z) \leq d_s, \textsf{d}_x(X,Y) \leq d_x\big] \\ 
	\leq& \epsilon + \mathbb{P}\left[\imath_{\bar{X}|\bar{Z}\bar{Y}\Vert X}(X;Z,Y) \geq \gamma +\log M\right]\\
	\leq& \epsilon + \frac{\exp(-\gamma)}{M}\mathbb{E}\left[\exp(\imath_{\bar{X}|\bar{Z}\bar{Y}\Vert X}(X;Z,Y))\right] \label{Markov_Inequality} \\
	\leq& \epsilon + \frac{\exp(-\gamma)}{M}\sum_{u=1}^M \bigg( \int_{\widehat{\mathcal{S}} \times \widehat{\mathcal{X}}} \mathrm{d} P_{ZY|U}(z,y|u)\nonumber \\ 
	&\cdot\int_{\mathcal{X}} \mathrm{d} P_{\bar{X}|\bar{Z}\bar{Y}}(x|z,y)\bigg) \label{Markov_chain_plus_less_than_1} \\
	=& \epsilon + \exp(-\gamma).
	\end{align}
	where \eqref{Markov_Inequality} is by  Markov's inequality, and \eqref{Markov_chain_plus_less_than_1} holds since $X-U-(Z,Y)$ forms a Markov chain in this order and $P_{U|X}(u|x)\leq 1$ for all $(x,u) \in \mathcal{X}\times \{1,\dots, M\}$.
\end{proof}
The following corollary of Theorem \ref{general_converse_theorem} forms the basis of the converse part of our second-order analysis.
\begin{corollary} 
	\label{converse_bound_corollary}
	(Converse):
	Any $(M, d_s, d_x, \epsilon)$ code must satisfy
	\begin{align}
	\label{converse_bound2}
	\epsilon \geq& \sup_{\substack{P_{\bar{X}|\bar{Z}\bar{Y}}:\\ \widehat{\mathcal{S}}\times \widehat{\mathcal{X}} \to \mathcal{X}}} \sup_{\gamma \geq 0} \Big\{\mathbb{E}\Big[\inf_{z \in \widehat{\mathcal{S}}, y \in \widehat{\mathcal{X}}}\mathbb{P}\left[f_{\bar{X}|\bar{Z}\bar{Y}}(S,X,z,y) \geq \gamma | X\right]\Big] \nonumber \\ 
	&- \exp(-\gamma)\Big\},
	\end{align}
	where the first supremum is over those $P_{\bar{X}|\bar{Z}\bar{Y}}$ such that Radon-Nikodym derivative of $P_{\bar{X}|\bar{Z}=z,\bar{Y}=y}$ with respect to $P_X$ at $x$ exists for every $z \in \widehat{\mathcal{S}}$, $y \in \widehat{\mathcal{X}}$ and $P_X$-a.e. $x$.
\end{corollary}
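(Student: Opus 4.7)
The plan is to derive Corollary \ref{converse_bound_corollary} directly from Theorem \ref{general_converse_theorem} by eliminating the innermost infimum over $P_{ZY|X}$ through (i) the elementary max--min inequality and (ii) a Markov-chain argument that exploits the fact that the encoder sees only $X$.

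First I would apply $\inf\sup\sup \geq \sup\sup\inf$ to the right-hand side of \eqref{general_converse}, obtaining
\begin{align}
\epsilon \geq \sup_{P_{\bar{X}|\bar{Z}\bar{Y}}} \sup_{\gamma \geq 0} \inf_{P_{ZY|X}} \big\{\mathbb{P}[f_{\bar{X}|\bar{Z}\bar{Y}}(S,X,Z,Y) \geq \gamma] - e^{-\gamma}\big\}.
\end{align}
It would then suffice, for arbitrarily fixed $P_{\bar{X}|\bar{Z}\bar{Y}}$ and $\gamma$, to lower bound the inner infimum by $\mathbb{E}\big[\inf_{z,y} \mathbb{P}[f_{\bar{X}|\bar{Z}\bar{Y}}(S,X,z,y) \geq \gamma \mid X]\big]$ uniformly in $P_{ZY|X}$.

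Toward this, I would fix an arbitrary $P_{ZY|X}$ and invoke the Markov chain $S - X - (Z,Y)$, which holds in the setting of Section \ref{Preliminaries} because the encoder acts only on $X$. Conditioning on $X = x$, this Markov property gives $\mathbb{P}[f_{\bar{X}|\bar{Z}\bar{Y}}(S,x,z,y) \geq \gamma \mid X=x,Z=z,Y=y] = \mathbb{P}[f_{\bar{X}|\bar{Z}\bar{Y}}(S,x,z,y) \geq \gamma \mid X=x]$, and averaging over $(Z,Y) \sim P_{ZY|X=x}$ yields
\begin{align}
\mathbb{P}[f_{\bar{X}|\bar{Z}\bar{Y}}(S,X,Z,Y) \geq \gamma \mid X=x] \geq \inf_{z \in \widehat{\mathcal{S}},\, y \in \widehat{\mathcal{X}}} \mathbb{P}[f_{\bar{X}|\bar{Z}\bar{Y}}(S,x,z,y) \geq \gamma \mid X=x],
\end{align}
because the right-hand side is a pointwise lower bound on the integrand. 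Taking expectation over $X$ removes the dependence on $P_{ZY|X}$, and substituting back into the displayed bound immediately yields \eqref{converse_bound2}.

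The main point to handle carefully is the restriction on the class of admissible $P_{\bar{X}|\bar{Z}\bar{Y}}$: Theorem \ref{general_converse_theorem} only requires $\mathrm{d} P_{\bar{X}|\bar{Z}=z,\bar{Y}=y}/\mathrm{d} P_X$ to exist on a $P_{ZY|X}P_X$-conull set, whereas the pointwise infimum over $(z,y)$ in the corollary demands existence for \emph{every} $(z,y)$ and $P_X$-a.e.\ $x$. I expect this to be the only non-trivial subtlety, but since $\widehat{\mathcal{S}}$ and $\widehat{\mathcal{X}}$ are finite it reduces to a book-keeping restriction on the supremum set; the max--min swap and the Markov-chain manipulation are otherwise routine.
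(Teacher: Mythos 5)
Your proposal is correct and follows essentially the same route as the paper: apply the max--min inequality to \eqref{general_converse} and then push the infimum down to a pointwise infimum over $(z,y)$ inside the conditional expectation given $X$. The only cosmetic difference is that you make the Markov chain $S-X-(Z,Y)$ explicit and argue via a pointwise $\geq$ for each fixed $P_{ZY|X}$, whereas the paper records the exchange $\inf_{P_{ZY|X}}\mathbb{E}[\cdot]=\mathbb{E}[\inf_{z,y}\cdot]$ as an equality (both give the needed inequality, and the equality does hold here since the infimizing $P_{ZY|X}$ can be taken deterministic over finite alphabets); your handling of the restricted class of admissible $P_{\bar{X}|\bar{Z}\bar{Y}}$ matches the paper's as well.
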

\begin{proof}[Proof]
	We weaken \eqref{general_converse} using the max–min inequality and
	\begin{align}
	&\inf_{P_{ZY|X}}\mathbb{P}\left[f_{\bar{X}|\bar{Z}\bar{Y}}(S,X,Z,Y) \geq \gamma\right]\\
	=& \inf_{P_{ZY|X}}\mathbb{E}\left[\mathbb{P}\left[f_{\bar{X}|\bar{Z}\bar{Y}}(S,X,Z,Y) \geq \gamma | X\right]\right]\\
	=&\mathbb{E}\Big[\inf_{z \in \widehat{\mathcal{S}}, y \in \widehat{\mathcal{X}}}\mathbb{P}\left[f_{\bar{X}|\bar{Z}\bar{Y}}(S,X,z,y) \geq \gamma | X\right]\Big].
	\end{align}
\end{proof}

\section{Asymptotic Analysis}
\label{Asymptotic_Analysis}

In this section, we consider the block coding setting by letting the alphabets be Cartesian products $\mathcal{S} = \mathcal{M}^k$, $\mathcal{X}=\mathcal{A}^k$, $\widehat{\mathcal{S}} = \hat{\mathcal{M}}^k$, $\widehat{\mathcal{X}}=\hat{\mathcal{A}}^k$, and study the second-order asymptotics in blocklength $k$. We make the following assumptions.
\begin{enumerate}
	\item 
	\label{memoryless_sources}
	(Stationary Memoryless Sources):
	$P_{S^k X^k} = P_S P_{X|S} \times \dots \times P_S P_{X|S}$. 
	
	\item 
	\label{separable_distortion}
	(Separable Distortion Measures):
	\begin{equation}
	\textsf{d}_s(s^k,z^k) = \frac{1}{k}\sum_{i=1}^k \textsf{d}_s(s_i,z_i),
	\end{equation}
	\begin{equation}
	\textsf{d}_x(x^k,y^k) = \frac{1}{k}\sum_{i=1}^k \textsf{d}_x(x_i,y_i).
	\end{equation}

	\item
	\label{Finite_alphabets_for_asymptotic}
	(Finite Alphabets):
	The alphabets $\mathcal{M}$, $\mathcal{A}$, $\hat{\mathcal{M}}$, $\hat{\mathcal{A}}$ are finite sets. 
	
	\item
	\label{differentiability}
	(Differentiability): 
	For all $P_{\bar{X}}$ in some neighborhood of $P_X$, $\textsf{supp}(P_{\bar{Z}^\star \bar{Y}^\star}) = \textsf{supp}(P_{Z^\star Y^\star})$, where $P_{\bar{Z}^\star \bar{Y}^\star}$ achieves $R_{\bar{X}}(d_s,d_x)$; $R_{\bar{X}}(d_s,d_x)$ is twice continuously differentiable with respect to $P_{\bar{X}}$.
	
	\item
	\label{Assumption_positive_definite}
	(Non-Degenerate Single-Letter Distortion Measures):
	There exist possibly identical elements $x_1,x_2 \in \mathcal{A}$ such that functions $\bar{\textsf{d}}_s(x_1,\cdot)$ and $\textsf{d}_x(x_2,\cdot)$ vary non-constantly across the elements in $\hat{\mathcal{M}}$ and $\hat{\mathcal{A}}$, respectively. 
\end{enumerate}

Define the minimum achievable codebook size at blocklength $k$, maximum admissible distortions $(d_s,d_x)$, and joint excess distortion probability $\epsilon$ as
\begin{equation}
M^\star(k, d_s, d_x, \epsilon) \triangleq \min \{M: \exists (k, M, d_s, d_x, \epsilon)\ \textrm{code}\}.
\end{equation}
Then, we have the following dispersion theorem for the problem of indirect lossy source coding with observed source reconstruction.
\begin{theorem} 
	\label{theorem_dispersion}
	(Second-Order Asymptotics): Under assumptions \ref{memoryless_sources}-\ref{Assumption_positive_definite}, fixing $(d_s, d_x) \in \mathcal{D}_{\mathrm{in}}$ and $0 < \epsilon < 1$, the minimum achievable codebook size $M^\star(k, d_s, d_x, \epsilon)$ satisfies
	\begin{align}
	\log M^\star(k, d_s, d_x, \epsilon) =& k R_{S,X}(d_s,d_x)+ \sqrt{k \tilde{\mathcal{V}}(d_s,d_x)} Q^{-1}(\epsilon) \nonumber \\ 
	&+ O(\log k),
	\end{align}
	where $Q^{-1}(\cdot)$ denotes the inverse of the complementary standard Gaussian cumulative distribution function, and $f(k)=O(g(k))$ means $\lim\sup_{k \to \infty}\left|f(k)/g(k)\right| < \infty$.
	
\end{theorem}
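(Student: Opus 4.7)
The plan is to establish matching achievability and converse bounds---both tight up to an $O(\log k)$ remainder---by specializing Theorem \ref{weakening_random_coding_achievability} and Corollary \ref{converse_bound_corollary} to the block setting and then invoking a Berry--Esseen central limit argument on an i.i.d.\ sum of noisy tilted information random variables. Under Assumptions \ref{memoryless_sources}--\ref{Assumption_positive_definite} the $k$-fold noisy rate-distortion function tensorizes as $R_{S^k,X^k}(d_s,d_x)=kR_{S,X}(d_s,d_x)$ and is achieved by the product distribution $(P_{Z^\star Y^\star})^k$, which will serve as the natural auxiliary output distribution in both directions. The identities in Property \ref{tilted_information_property} together with Proposition \ref{Prop_relationship_V_tildeV} are the bridge between the single-shot divergence bounds and the Gaussian approximation with the correct asymptotic variance $\tilde{\mathcal{V}}(d_s,d_x)$.

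For the achievability side I would invoke Theorem \ref{weakening_random_coding_achievability} with $P_{\bar{Z}^k\bar{Y}^k}=(P_{Z^\star Y^\star})^k$, $\log M=kR_{S,X}(d_s,d_x)+\sqrt{k\tilde{\mathcal{V}}(d_s,d_x)}\,Q^{-1}(\epsilon)+c\log k$, and $\gamma=M/k^{c'}$, so that the term $e^{-M/\gamma}$ is $o(1)$. The core step is to upper-bound $g_{\bar{Z}^k\bar{Y}^k}(x^k,t)$ in \eqref{achi_inner_problem} by constructing a feasible $P_{Z^kY^k}$ that concentrates on sequences $(z^k,y^k)$ with $\textsf{d}_x(x^k,y^k)\le d_x$ and with conditional $S^k$-probability at least $1-t$ of satisfying $\textsf{d}_s(S^k,z^k)\le d_s$ given $X^k=x^k$; computing the corresponding divergence via the tilted reverse channel and \eqref{d_tilted_information_of_surrogate} yields a bound of the shape $\sum_{i=1}^k \jmath_{X}(x_i,d_s,d_x)+O(\log k)$ once the auxiliary $U$ absorbs the hidden-source fluctuation. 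Applying the Berry--Esseen theorem to the i.i.d.\ sum $\sum_{i=1}^k \tilde{\jmath}_{S,X}(S_i,X_i,Z_i^\star,Y_i^\star,d_s,d_x)$ of mean $R_{S,X}(d_s,d_x)$ and variance $\tilde{\mathcal{V}}(d_s,d_x)$ then closes out the achievability.

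For the converse I would plug $P_{\bar{X}^k|\bar{Z}^k\bar{Y}^k}=\prod_{i=1}^k P_{X|Z^\star Y^\star}$ (or a small mixture over types of $x^k$, following \cite{Kostina2016Nonasymptotic}, if boundary regularity forces it) into Corollary \ref{converse_bound_corollary} and evaluate the two suprema in the definition of $f_{\bar{X}|\bar{Z}\bar{Y}}$ at $\lambda_s^\star$ and $\lambda_x^\star$---a valid lower bound since both suprema run over $\lambda_s,\lambda_x\ge 0$. Using \eqref{d_tilted_information_of_surrogate} and \eqref{relation_of_tilted_information} coordinate-wise then gives
\[
f_{\bar{X}^k|\bar{Z}^k\bar{Y}^k}(s^k,x^k,z^k,y^k)\ge \sum_{i=1}^k \tilde{\jmath}_{S,X}(s_i,x_i,z_i,y_i,d_s,d_x)-\log M,
\]
which is additive in the coordinates, so the coordinate-wise infimum over $(z_i,y_i)$ in the outer expectation of \eqref{converse_bound2} can be handled term by term. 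Taking $\gamma=\tfrac{1}{2}\log k$ (so that $e^{-\gamma}=o(1)$) and setting $\log M$ just below the target expression, a second Berry--Esseen estimate on the same i.i.d.\ sum forces $\epsilon$ above the required level.

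The main obstacle, as the authors themselves flag, is the simultaneous handling of the two distortion constraints and the irreducible randomness introduced by the unobservable $S^k$. In the achievability direction the bound on $\pi(x^k,z^k,y^k)$ must control the deviation of $\textsf{d}_s(S^k,z^k)$ about $\bar{\textsf{d}}_s(x^k,z^k)$ (a conditional concentration question for $S^k\mid X^k=x^k$) in addition to the deviation of $\textsf{d}_x(X^k,y^k)$ about $d_x$, and the final divergence cost must match the full variance $\tilde{\mathcal{V}}$ rather than the surrogate $\mathcal{V}$---this is where Proposition \ref{Prop_relationship_V_tildeV} becomes indispensable, since its extra $\lambda_s^{\star 2}\,\textrm{Var}[\textsf{d}_s(S,Z^\star)\mid X,Z^\star]$ term is precisely the price paid for the hidden randomness. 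The cleanest route I foresee is to restrict the constructed $P_{Z^kY^k}$ to a typicality shell shrunk by $O(\sqrt{\log k / k})$, so that Bernstein-type conditional tail bounds on $\textsf{d}_s(S^k,z^k)$ given $X^k=x^k$ absorb the slack at an extra divergence cost of at most $O(\log k)$. On the converse side the analogous difficulty is to show that the product reverse channel truly realizes the full tilted information sum, which reduces to the tightness of the Lagrangian suprema at $(\lambda_s^\star,\lambda_x^\star)$ and thus relies on Assumption \ref{differentiability}. Keeping the slack at $O(\log k)$ in both directions---rather than a worse polynomial---will be the real technical core of the argument.
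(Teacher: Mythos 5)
Your high-level architecture is the same as the paper's: achievability via Theorem~\ref{weakening_random_coding_achievability} with the product output $(P_{Z^\star Y^\star})^k$, converse via Corollary~\ref{converse_bound_corollary} with a product-type reverse channel, and a Berry--Esseen closing step in each direction. However, each direction glosses over the specific technical obstacle that the paper flags as novel, and your converse contains a concrete error.

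On the achievability side, when you upper-bound $g_{\bar{Z}^k\bar{Y}^k}(x^k,t)$ by a type-class construction and invoke the tilted-information identity, the Lagrange multipliers that appear are the \emph{type-dependent} quantities $\boldsymbol{\lambda}(X^k) = -\nabla R_{\bar{X};Z^\star Y^\star}(d_s,d_x)$ with $\bar{X}\sim\mathrm{type}(X^k)$, not the fixed $\boldsymbol{\lambda}^\star$. The resulting sum $\sum_{i=1}^k\Lambda_{Z^\star Y^\star}(X_i,\boldsymbol{\lambda}(X^k))$ is a sum of \emph{dependent} random variables (the shared argument $\boldsymbol{\lambda}(X^k)$ couples all coordinates), and you cannot apply Berry--Esseen to it. The paper spends Appendix~\ref{GA_Preliminaries} (Lemmas~\ref{Lemma_confrontation}, \ref{Lemma_confrontation_02}, \ref{Lemma_Lambda_logn} and Remark~\ref{remark_core_lemma}) proving that $\boldsymbol{\lambda}(X^k)$ concentrates near $\boldsymbol{\lambda}^\star$ at an $O(\log k)$ cost, with separate treatment of the four regions $\mathcal{D}_{sx},\mathcal{D}_{\bar{s}x},\mathcal{D}_{s\bar{x}},\mathcal{D}_{\bar{s}\bar{x}}$ because the concentration argument is genuinely two-dimensional when both constraints are tight. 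Your ``typicality shell'' remark does not touch this: it addresses fluctuations of $\mathrm{type}(X^k)$ around $P_X$ but not the induced fluctuation of the multipliers, which is where the new two-constraint geometry (e.g., Lemma~\ref{lemma_concave} applied to an ellipse determined by the Hessian of $R_{X;Z^\star Y^\star}$) enters.

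On the converse side, your claim that the coordinate-wise infimum over $(z_i,y_i)$ in \eqref{converse_bound2} ``can be handled term by term'' is incorrect. Given $X^k=x^k$, the inner object is $\mathbb{P}\bigl[\sum_{i=1}^k W_i(z_i,y_i) \ge \gamma+\log M \mid X^k=x^k\bigr]$ for $\{W_i\}$ independent but not identically distributed; the infimum over $(z^k,y^k)$ does not factorize because the event is a tail event of a sum, not an intersection of per-coordinate events. The paper instead parametrizes the minimization by the conditional type $P_{\bar{Z}\bar{Y}|\bar{X}}$ and invokes the approximate-optimization lemma \cite[Theorem 11]{Kostina2016Nonasymptotic} to lower-bound it by a single Gaussian $Q$-term with the right mean and variance. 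Relatedly, the paper evaluates the suprema over $\lambda_s,\lambda_x$ at the \emph{type-dependent} values $\lambda_s(x^k),\lambda_x(x^k)$ rather than at $\lambda_s^\star,\lambda_x^\star$; this is what allows the residual after Taylor expanding $R_{\bar{X}}(d_s,d_x)$ around $P_X$ (via Lemma~\ref{finite_alphabet_derivative}) to be $O(\log k)$ rather than $O(\sqrt{k\log k})$. Finally, the mixture reverse channel is taken over all \emph{conditional $k$-types} $P_{X|ZY}\in\mathcal{P}_{[k]}$, not over types of $x^k$, and this mixture is structurally essential (it is what produces the $-\log|\mathcal{P}_{[k]}| = O(\log k)$ penalty in \eqref{lower_bound_i1}), not something invoked only ``if boundary regularity forces it.''
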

\begin{proof}[Proof]
	The achievability and converse parts of the proof can be found in Appendices \ref{proof_achievability_theorem_Gaussian_approximation} and \ref{proof_converse_theorem_Gaussian_approximation}, respectively.
\end{proof}

Theorem \ref{theorem_dispersion} provides a closed-form second-order approximation of the optimum finite blocklength coding rate $\log M^\star(k, d_s, d_x, \epsilon) / k$, i.e., for $(d_s, d_x) \in \textsf{int}(\mathcal{D}_{sx})$,
\begin{align}
\label{second_order_cong}
\frac{\log M^\star(k, d_s, d_x, \epsilon)}{k} \cong R_{S,X}(d_s,d_x)+ \sqrt{\frac{\tilde{\mathcal{V}}(d_s,d_x)}{k}} Q^{-1}(\epsilon),
\end{align}
where the notation $\cong$ denotes that the equality holds up to a term of $O\left(\log k/k\right)$. In the next section, we illustrate the approximation performance of \eqref{second_order_cong} in the case of erased fair coin flips.

\section{Case Study: Erased Fair Coin Flips}
\label{Case_Study}

In the erased fair coin flips (EFCF) case, we examine two correlated sources, $S$ and $X$. Source $S$ is a binary equiprobable source taking values in $\{0,1\}$, and source $X$, taking values in $\{0,1,e\}$, corresponds to the source obtained by observing source $S$ through a binary erasure channel with an erasure rate of $\delta$. We consider Hamming distortion measures, i.e., $\textsf{d}_s(s^k,z^k) = \frac{1}{k}\sum_{i=1}^k \textrm{1}\{s_i \neq z_i\}$ and $\textsf{d}_x(x^k,y^k) = \frac{1}{k}\sum_{i=1}^k \textrm{1}\{x_i \neq y_i\}$.

\subsection{Rate-Distortion Function}

In this section, we abbreviate $R_{S,X}(d_s,d_x)$ as $R(d_s,d_x)$.
For $0 < \delta \leq 1/3$, $d_s \geq \delta/2$ and $d_x \geq 0$, the rate-distortion function $R(d_s,d_x)$ is given by the following theorem.
\begin{theorem} 
	\label{theorem_Rate_Distortion_func}
	Let $0 < \delta < 1/3$, $d_s \geq \delta / 2$, and $d_x \geq 0$. Define sets
	\begin{itemize}
		\item [(i)] $\mathcal{D}_1 \triangleq \{(d_s,d_x): 0 \leq d_x \leq 2 \delta,\ d_s \geq d_x/2 + \delta/2 \}$;
		\item [(ii)] $\mathcal{D}_2 \triangleq \{(d_s,d_x): 2 \delta \leq d_x \leq 1/2 + \delta/2,\ d_s \geq d_x - \delta/2 \}$;
		\item [(iii)] $\mathcal{D}_3 \triangleq \{(d_s,d_x): d_x \geq d_s + \delta/2,\  \delta/2 \leq d_s \leq 1/2 \}$;
		\item [(iv)] $\mathcal{D}_4 \triangleq \{(d_s,d_x): 2d_s - \delta \leq d_x \leq d_s + \delta/2,\  \delta/2 \leq d_s \leq 3\delta/2 \}$;
		\item [(v)] $\mathcal{D}_5 \triangleq \{(d_s,d_x): d_x \geq 1/2 + \delta/2, \ d_s \geq 1 / 2 \}$.
	\end{itemize}
	Then,
	\begin{itemize}
		\item [(i)] if $(d_s, d_x) \in \mathcal{D}_1$, 
		\begin{equation}
		R(d_s,d_x) = h(\delta) + (1 - \delta) \log 2 - h(d_x) - d_x \log 2,
		\end{equation}
		which is achieved by $X-Y^{\star}-Z^{\star}$ with
		\begin{equation}
		P_{Y^\star}(0) \!=\! P_{Y^\star}(1) \!=\! \frac{1 \!-\! \delta \!-\! d_x}{2 \!-\! 3 d_x}, P_{Y^\star}(e) \!=\! \frac{2\delta \!-\! d_x}{2 \!-\! 3 d_x},
		\end{equation}
		\begin{equation}
		P_{X|Y^{\star}}(x|y)=\left\{
		\begin{array}{rcl}
		1 - d_x & & {x=y}\\
		d_x/2 & & {\mbox{otherwise}}
		\end{array} \right.,
		\end{equation}
		and
		\begin{equation}
		P_{Z^{\star}|Y^{\star}}(z|y)=\left\{
		\begin{array}{rcl}
		1 & & {z=y}\\
		\theta & & {z=0,\ y = e}\\
		1 - \theta & & {z=1,\ y = e}\\
		0 & & {\mbox{otherwise}}
		\end{array} \right.,
		\end{equation}
		where $0 \leq \theta \leq 1$. 
		
		\item [(ii)] if $(d_s, d_x) \in \mathcal{D}_2$, 
		\begin{equation}
		R(d_s,d_x) = (1 - \delta) \left[\log 2 - h((d_x - \delta)/(1 - \delta))\right],
		\end{equation}
		which is achieved by $X-Y^{\star}-Z^{\star}$ with
		\begin{equation}
		P_{Y^\star}(0) = P_{Y^\star}(1) = \frac{1}{2},\ P_{Y^\star}(e) = 0,
		\end{equation}
		\begin{equation}
		P_{X|Y^{\star}}(x|y)=\left\{
		\begin{array}{rcl}
		1 - d_x & & {x=y \neq e}\\
		d_x - \delta & & {x\neq y,\ x, y \neq e}\\
		\delta & & {x= e,\ y \neq e}
		\end{array} \right.,
		\end{equation}
		and
		\begin{equation}
		P_{Z^{\star}|Y^{\star}}(z|y)=\left\{
		\begin{array}{rcl}
		1 & & {z=y \neq e}\\
		0 & & {z \neq y,\ y \neq e}
		\end{array} \right.;
		\end{equation}

		\item [(iii)] if $(d_s, d_x) \in \mathcal{D}_3$,
		\begin{equation}
		R(d_s,d_x) \!=\! (1 \!-\! \delta) \left[\log 2 \!-\! h((d_s \!-\! \delta/2)/(1 \!-\! \delta))\right],
		\end{equation}
		which is achieved by $X - Z^{\star} - Y^{\star}$ with
		\begin{equation}
		P_{Z^\star}(0) = P_{Z^\star}(1) = \frac{1}{2},
		\end{equation}
		\begin{equation}
		P_{X|Z^{\star}}(x|z)\!=\!\left\{
		\begin{array}{rcl}\!\!
		1 \!-\! d_s \!-\! \delta/2 & & {x=z}\\
		d_s \!-\! \delta/2 & & {x\neq z,\ x \neq e}\\
		\delta & & {x= e}
		\end{array} \right.,
		\end{equation}
		and
		\begin{equation}
		P_{Y^{\star}|Z^{\star}}(y|z)=\left\{
		\begin{array}{rcl}
		1 & & {y=z}\\
		0 & & {y \neq z}
		\end{array} \right.;
		\end{equation}
		
		\item [(iv)] if $(d_s, d_x) \in \mathcal{D}_4$, 
		\begin{equation}
		\label{R_D_for_D4}
		\begin{aligned}
		R(d_s,d_x) \!= &h(\delta) + (1 - \delta) \log 2\\ 
		&- H\!\left(d_s \!-\! \delta/2,\, d_x \!-\! d_s \!+\! \delta/2,\, 1 \!-\! d_x\right),
		\end{aligned}
		\end{equation}
		which is achieved by
		\begin{equation}
		\label{D4_optimal_edge}
		P_{Z^{\star}Y^{\star}}(z, y) = \left\{
		\begin{array}{rcl}
		\frac{\delta + d_x -1}{\delta + 4 d_x - 2 d_s - 2} & & {z=y}\\
		\frac{d_x - d_s - \delta/2}{\delta + 4d_x - 2 d_s -2} & & {y=e}\\
		0 & & {\mbox{otherwise}}
		\end{array} \right.,
		\end{equation}
		and
			\begin{align}
			\label{optimal_back_condition}
			&P_{X|Z^{\star}Y^{\star}}(x|z, y) \nonumber \\ 
			&= \left\{
			\begin{array}{rcl}
			1-d_x & & {(x|z,y)=\begin{aligned}[t]
				&(0|0,0),(1|1,1),\\
				&(e|0,e),(e|1,e)
				\end{aligned}}\\
			d_s-\frac{\delta}{2} & & {(x|z,y)=\begin{aligned}[t]
				&(1|0,0),(0|1,1),\\
				&(1|0,e),(0|1,e)
				\end{aligned}}\\
			d_x - d_s + \frac{\delta}{2} & & {(x|z,y)=\begin{aligned}[t]
				&(e|0,0),(e|1,1),\\
				&(0|0,e),(1|1,e)
				\end{aligned}}\\
			0 & & {\mbox{otherwise}}
			\end{array} \right..
			\end{align}
		\item [(v)] if $(d_s, d_x) \in \mathcal{D}_5$, $R(d_s,d_x) = 0$,
	\end{itemize}
	where $x,y \in \{0,1,e\}$, $z \in \{0,1\}$, $h(\cdot)$ is the binary entropy function, $H(\cdot,\cdot,\cdot)$ is the discrete entropy function, and $0\log(0)$ is taken to be $0$.
	
\end{theorem}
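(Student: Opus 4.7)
The plan is to handle each of the five regions $\mathcal{D}_1,\dots,\mathcal{D}_5$ separately, establishing in each case both an achievability direction (verifying that the proposed $P_{Z^\star Y^\star|X}$ is feasible and attains the claimed value of $I(X;Z^\star,Y^\star)$) and a matching converse. The uniform tool for both directions is the Lagrangian/KKT characterization in Property \ref{tilted_information_property}: to certify optimality of a candidate $P_{Z^\star Y^\star|X}$, it suffices to exhibit Lagrange multipliers $\lambda_s^\star,\lambda_x^\star\geq 0$ such that \eqref{property4} holds with equality on $\textsf{supp}(P_{Z^\star Y^\star})$ and strict inequality elsewhere; the correct multipliers can be read off by differentiating the proposed $R(d_s,d_x)$ via \eqref{lambda_s}--\eqref{lambda_x}.

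For achievability, I would first compute the induced joint $P_{XZ^\star Y^\star}$ from the factorization given in each case (for example, $P_{XZ^\star Y^\star}=P_{Y^\star}P_{X|Y^\star}P_{Z^\star|Y^\star}$ under the Markov chain $X-Y^\star-Z^\star$ in $\mathcal{D}_1$, and analogously $P_{XZ^\star Y^\star}=P_{Z^\star}P_{X|Z^\star}P_{Y^\star|Z^\star}$ under $X-Z^\star-Y^\star$ in $\mathcal{D}_3$). Then verify in turn that the $X$-marginal equals the BEC-induced $P_X$ with $P_X(0)=P_X(1)=(1-\delta)/2$ and $P_X(e)=\delta$, that the average distortions $\mathbb{E}[\bar{\textsf{d}}_s(X,Z^\star)]$ and $\mathbb{E}[\textsf{d}_x(X,Y^\star)]$ do not exceed $d_s$ and $d_x$ respectively (using the closed forms $\bar{\textsf{d}}_s(0,z)=\textrm{1}\{z\neq 0\}$, $\bar{\textsf{d}}_s(1,z)=\textrm{1}\{z\neq 1\}$, $\bar{\textsf{d}}_s(e,z)=1/2$), and that $I(X;Z^\star,Y^\star)$ expands via $H(Z^\star,Y^\star)-H(Z^\star,Y^\star|X)$ into the stated binary or ternary entropy expression. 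For the converse in $\mathcal{D}_2$ one has $\lambda_s^\star=0$, reducing the problem to the classical binary rate-distortion function for $X$; in $\mathcal{D}_3$ one has $\lambda_x^\star=0$, reducing to the Dobrushin--Tsybakov noisy rate-distortion function for $S$ observed through $P_{X|S}$; $\mathcal{D}_1$ is an indirect-rate-distortion problem where $Y^\star$ carries all the encoding information and $Z^\star$ is a randomized relabeling of $Y^\star$ on erasures; and $\mathcal{D}_5$ is trivial. All four admit standard textbook converses.

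The main obstacle will be region $\mathcal{D}_4$, the unique region in which both distortion constraints are strictly active, the joint $P_{Z^\star Y^\star}$ in \eqref{D4_optimal_edge} is not a product of its marginals, and neither Markov chain $X-Y-Z$ nor $X-Z-Y$ holds at the optimum. Here I would read off $(\lambda_s^\star,\lambda_x^\star)$ by computing the partial derivatives of the right-hand side of \eqref{R_D_for_D4}, write the tilted-information condition \eqref{property4} as a finite system of inequalities indexed by $(z,y)\in\{0,1\}\times\{0,1,e\}$, and verify equality on the four $(z,y)$ pairs in $\textsf{supp}(P_{Z^\star Y^\star})$ from \eqref{D4_optimal_edge} while checking strict inequality on the remaining two pairs; the ternary entropy $H(d_s-\delta/2,d_x-d_s+\delta/2,1-d_x)$ appearing in \eqref{R_D_for_D4} foreshadows that both $\lambda_s^\star$ and $\lambda_x^\star$ will be expressible as logarithms of ratios of the three probabilities inside $H(\cdot,\cdot,\cdot)$. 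A secondary subtlety is that $R(d_s,d_x)$ is only piecewise smooth across the boundaries $d_x=2\delta$, $d_x=d_s+\delta/2$, and $d_x=2d_s-\delta$, so the derivatives defining $\lambda_s^\star,\lambda_x^\star$ are one-sided on the boundaries; continuity of $R$ across these boundaries should be checked directly to glue the five local results into the global theorem.
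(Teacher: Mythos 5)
Your plan is correct, and in spirit it runs parallel to the paper's own sketch: both use a KKT/Lagrangian certificate, and the computational heart of the argument in either case is region $\mathcal{D}_4$, where both constraints bind. The difference is where the work is put elsewhere. The paper's proof handles $\mathcal{D}_1\cup\mathcal{D}_2$ and $\mathcal{D}_3$ by a shortcut you do not use: the chain-rule inequality $I(X;Z,Y)\geq I(X;Y)$ (resp.\ $\geq I(X;Z)$), with equality iff $X-Y-Z$ (resp.\ $X-Z-Y$) is Markov. Once one of the two constraints is known to be slack, this immediately identifies the Markov structure of the optimizer, reduces the two-constraint problem \eqref{rate_distortion_function} to a single-constraint classical rate-distortion problem (the rate-distortion function of $X$ alone for $\mathcal{D}_1\cup\mathcal{D}_2$, the Dobrushin--Tsybakov noisy rate-distortion function of $S$ through the BEC for $\mathcal{D}_3$), and avoids re-deriving KKT in those regions. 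Your uniform KKT verification over all five regions is sound and arguably more systematic, but it does more work than necessary on $\mathcal{D}_1$--$\mathcal{D}_3$ and $\mathcal{D}_5$. For $\mathcal{D}_4$ the two approaches are essentially equivalent: the paper parametrizes the Lagrangian by the support set $\mathcal{Q}=\{(z,y,x): (z,y)\neq(0,1),(1,0)\}$ and solves the resulting KKT system, while you parametrize by $(\lambda_s^\star,\lambda_x^\star)$ and verify the tilted-information conditions \eqref{property4} over the six $(z,y)$ pairs --- the same finite system written in a different coordinate.

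Two minor imprecisions worth fixing if you flesh this out. First, $\mathcal{D}_1$ and $\mathcal{D}_2$ are not ``the classical binary rate-distortion function for $X$'' nor an ``indirect rate-distortion problem'': once $\lambda_s^\star=0$, both are the ordinary (non-indirect) Hamming rate-distortion function of the ternary BEC-output source $X$, which itself is piecewise (with the $e$-reproduction active for $d_x<2\delta$, giving $h(\delta)+(1-\delta)\log 2 - h(d_x)-d_x\log 2$, and inactive for $d_x\geq 2\delta$, giving $(1-\delta)[\log 2 - h((d_x-\delta)/(1-\delta))]$). Second, when you ``read off'' $(\lambda_s^\star,\lambda_x^\star)$ by differentiating the claimed $R(d_s,d_x)$, you should present this as a guess to be verified rather than a derivation --- otherwise the argument is circular; the certificate of optimality is feasibility of the candidate channel together with \eqref{property4} holding with the guessed multipliers ($\leq 1$ everywhere, $=1$ on $\textsf{supp}(P_{Z^\star Y^\star})$), which by convexity of \eqref{rate_distortion_function} is sufficient.
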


\begin{proof}[Proof]
	We only give a sketch of the proof. 
	
	The rate-distortion function and the optimal test channel are obtained by solving problem \eqref{rate_distortion_function}. By the chain rule of mutual information, $I(X;Z,Y) \geq I(X;Y)$, with the equality hold if and only if $X-Y-Z$ form a Markov chain in this order. As a consequence, if there exist $P_{Y^{\star}|X}$ and $P_{Z^{\star}|Y^{\star}}$ such that $P_{Y^{\star}|X} = \arg\min_{P_{Y|X}}I(X;Y)\
	\mathrm{s.t.}\ \mathbb{E}\left[\textsf{d}_x(X,Y)\right] \leq d_x$, $X-Y^{\star}-Z^{\star}$ form a Markov chain, and $\mathbb{E}\left[\bar{\textsf{d}}_s(X,Z^{\star})\right] \leq d_s$, we have the optimal test channel $P_{Z^{\star}Y^{\star}|X} =P_{Z^{\star}|Y^{\star}} P_{Y^{\star}|X}$. This corresponds to the case when $(d_s, d_x) \in \mathcal{D}_1 \cup \mathcal{D}_2$. Similarly, $(d_s, d_x) \in \mathcal{D}_3$ implies the case when $X-Z^{\star}-Y^{\star}$ form a Markov chain. Given that $\mathbb{P}[Z^{\star}=0, Y^{\star}=0] = 1$ leads to the rate-distortion function when $(d_s, d_x) \in \mathcal{D}_5$, our focus lies solely on the case when $(d_s, d_x) \in \mathcal{D}_4$.
	
	The optimal test channel for $(d_s, d_x) \in \mathcal{D}_4$ is obtained by solving the Karush-Kuhn-Tucker (KKT) conditions of problem \eqref{rate_distortion_function}. Specifically, the Lagrangian associated with problem \eqref{rate_distortion_function} is written as
	\begin{align}
	&\mathcal{L}\left(\{P_{ZY|X}(z,y|x): (z,y,x) \in \mathcal{Q}\} , s_1, s_2, \lambda(x), \mu(x,z,y)\right) \nonumber \\ 
	=&\sum_{(z,y,x) \in \mathcal{Q}}P_X P_{ZY|X} \log \frac{P_{ZY|X}}{\sum_{x' \in \mathcal{X}}P_X P_{ZY|X}}\nonumber\\ 
	&+ s_1 \bigg(\sum_{(z,y,x) \in \mathcal{Q}}P_X P_{ZY|X}\bar{\textsf{d}}_s(x,z) - d_s \bigg) \nonumber \\
	&+ s_2 \bigg(\sum_{(z,y,x) \in \mathcal{Q}}P_X P_{ZY|X}\textsf{d}_x(x,y) - d_x \bigg)\nonumber\\ 
	&- \sum_{(z,y,x) \in \mathcal{Q}} \mu(x,z,y) P_{ZY|X}\nonumber\\ 
	&+ \sum_{x \in \mathcal{X}} \lambda(x)\bigg(\sum_{z \in \widehat{\mathcal{M}}, y \in \widehat{\mathcal{X}}}P_{ZY|X} - 1\bigg),
	\end{align}
	where $\mathcal{Q} \triangleq \{(z,y,x): P_{ZY|X} > 0 \}$ and the Lagrange multipliers $s_1 \geq 0$, $s_2 \geq 0$, $\mu(x,z,y) \geq 0$. Note that, varying selections of $\mathcal{Q}$ lead to different Lagrangians and, consequently, different KKT conditions. The optimal test channel for $(d_s, d_x) \in \mathcal{D}_4$ is obtained by solving the KKT conditions correspond to $\mathcal{Q} = \{(z,y,x): (z,y) \neq (0, 1),  (z,y) \neq (1, 0)\}$.

\end{proof}

\begin{figure}[htbp]
	\centering
	\includegraphics[width=1\linewidth]{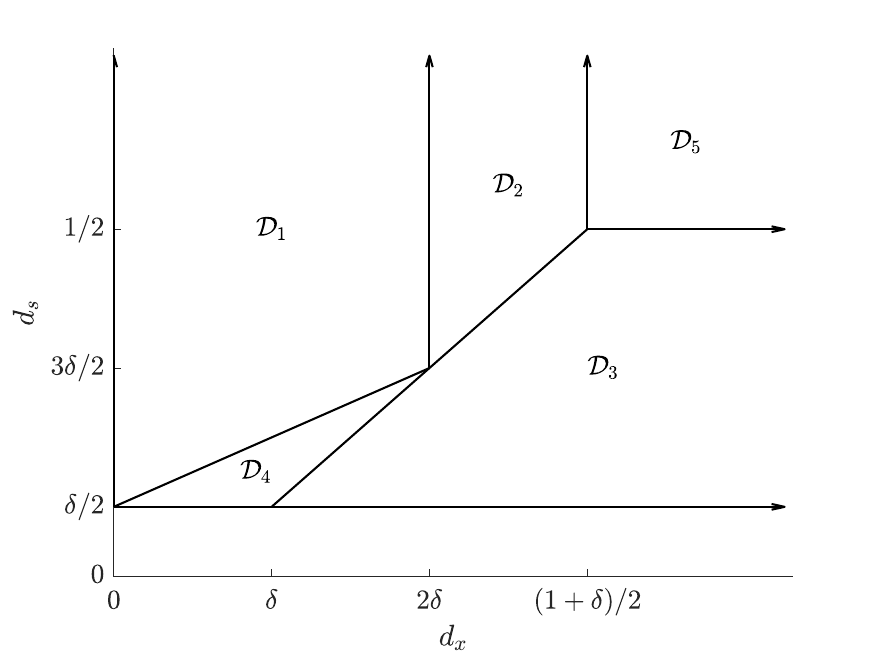}
	\caption{The five regions of the distortion plane with $\delta = 0.2$.}
	\label{five_region}
\end{figure}

\begin{figure}[htbp]
	\centering
	\includegraphics[width=1\linewidth]{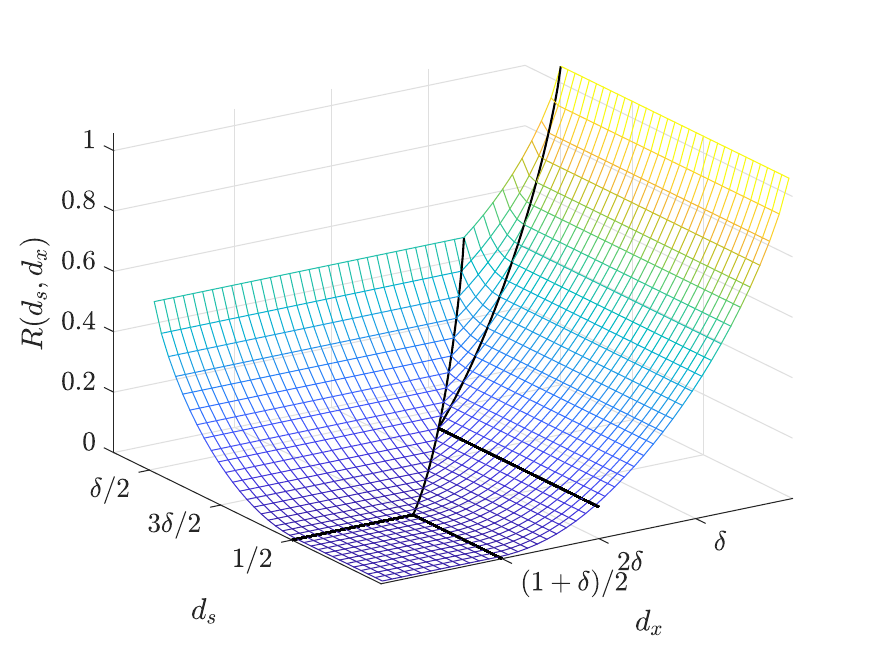}
	\caption{The rate-distortion function of the EFCF case with $\delta = 0.2$.}
	\label{R_D}
\end{figure}

\begin{figure}[htbp]
	\centering
	\includegraphics[width=1\linewidth]{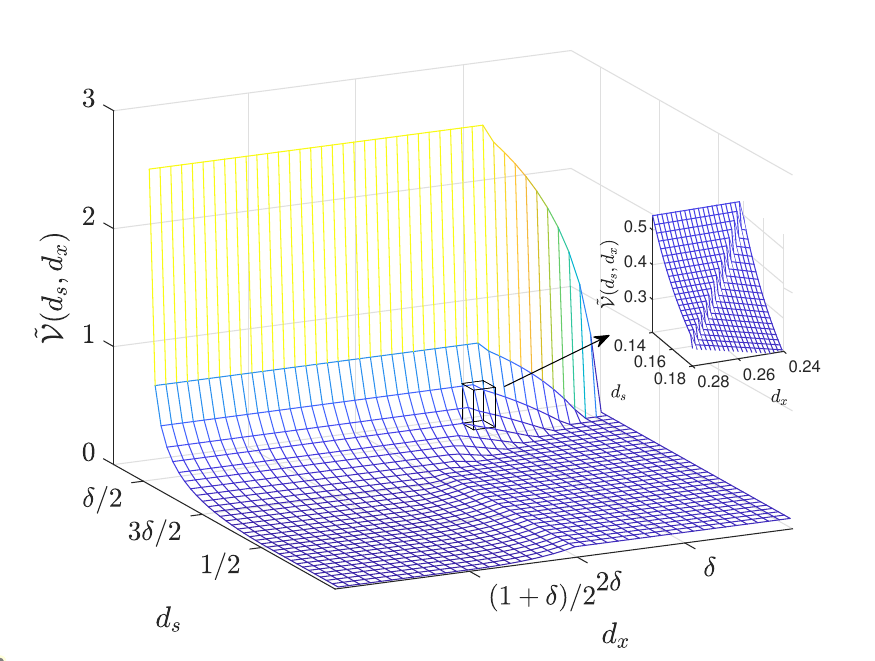}
	\caption{The rate dispersion function under the EFCF case with $\delta = 0.2$.}
	\label{rate_dispersion}
\end{figure}

The five regions and the rate-distortion function in Theorem \ref{theorem_Rate_Distortion_func} with $\delta=0.2$ is plotted in Fig. \ref{five_region} and Fig. \ref{R_D}, respectively. As implied in Theorem \ref{theorem_Rate_Distortion_func}, when $(d_s, d_x) \in \mathcal{D}_1 \cup \mathcal{D}_2$, the distortion constraint for source $X$ is tight and that for source $S$ is loose. Conversely, when $(d_s, d_x) \in \mathcal{D}_3$, the distortion constraint for $S$ is tight and for $X$ is loose. For $(d_s, d_x) \in \mathcal{D}_4$, both distortion constraints are tight, whereas for $(d_s, d_x) \in \mathcal{D}_5$, both constraints are loose. The corresponding noisy rate-dispersion function, computed by \eqref{relationship_V_tildeV}, is plotted in Fig. \ref{rate_dispersion}. As shown in Fig. \ref{rate_dispersion}, the noisy rate-dispersion function is neither convex nor concave and is discontinuous at the boundary between $\mathcal{D}_3$ and $ \mathcal{D}_2 \cup \mathcal{D}_4$ (except for several isolated points).

For $(d_s, d_x) \in \mathcal{D}_4$, some useful quantities are computed as follows:
\begin{equation}
\lambda_s^{\star} = \log\left(\frac{d_x - d_s + \delta / 2}{d_s - \delta/2}\right),
\end{equation}
\begin{equation}
\lambda_x^{\star} = \log\left(\frac{1 - d_x}{d_x - d_s + \delta / 2}\right),
\end{equation}

\begin{align}
\jmath_{X}(0,d_s,d_x) =&\jmath_{X}(1,d_s,d_x) \nonumber \\ 
=& -\lambda_s^{\star}d_s - \lambda_x^{\star} d_x - \log\! \left(\!\frac{1 \!-\! \delta}{2 (1 \!-\! d_x)}\!\right)
\end{align}
\begin{equation}
\jmath_{X}(e,d_s,d_x) =-\lambda_s^{\star}d_s - \lambda_x^{\star} d_x - \log\!\left(\!\sqrt{\frac{d_s \!-\! \delta/2}{d_x \!-\! d_s \!+\! \delta/2}}\!\cdot\! \frac{\delta}{1 \!-\! d_x}\!\right)
\end{equation}
\begin{equation}
\mathcal{V}(d_s,d_x) = \delta  (1 - \delta) \log^2 \left(\sqrt{\frac{d_s \!-\! \delta/2}{d_x \!-\! d_s \!+\! \delta/2}} \cdot \frac{2\delta}{1\!-\!\delta} \right),
\end{equation}
\begin{equation}
\tilde{\mathcal{V}}(d_s,d_x) = \mathcal{V}(d_s,d_x) + \frac{\delta}{4} \left(\lambda_s^{\star}\right)^2.
\end{equation}

\subsection{Nonasymptotic Achievability Bounds}

We first give the nonasymptotic achievability bound for $(d_s, d_x) \in \mathcal{D}_1 \cup \mathcal{D}_4$.
\begin{theorem} 
	\label{case_study_achievability}
	(Achievability, EFCF with $(d_s, d_x) \in \mathcal{D}_1 \cup \mathcal{D}_4$): In erased fair coin flips, for $(d_s, d_x) \in \mathcal{D}_1 \cup \mathcal{D}_4$, there exists an $(k, M, d_s, d_x, \epsilon)$ code such that 
	\begin{align}
	\label{case_study_achievability_bound}
	\epsilon \leq &\sum_{t=0}^k \mathsf{binopmf}(t; k, \delta) \cdot \sum_{i=0}^t \mathsf{binopmf}(i; t, 1/2)\nonumber\\ 
	&\cdot \bigg(1 - \sum_{j=0}^t \mathsf{binopmf}(j; t, \mathbb{P}[Y^{\star} \neq e]) \nonumber \\ 
	&\cdot \sum_{r=0}^{k-t} \mathsf{binopmf}(r; k - t, \mathbb{P}[Y^{\star} = e])\nonumber\\ 
	&\cdot \sum_{v=0}^{\lfloor k d_x \rfloor - j} \mathsf{binopmf}(v - r; k - t - r, 1/2) \nonumber \\ 
	&\cdot \mathsf{binocdf}(\lfloor k d_s \rfloor - i -(v-r); r, 1/2) \bigg)^M,
	\end{align} 
	where $\mathsf{binopmf}(\cdot; n, p)$ and $\mathsf{binocdf}(\cdot; n, p)$ denote the probability mass function (PMF) and the cumulative distribution function (CDF) of the binomial distribution, respectively, with $n$ degrees of freedom and success probability $p$.
	
\end{theorem}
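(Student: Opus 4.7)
My plan is to specialize Theorem~\ref{random_coding_achievability} to the EFCF setting by taking $P_{\bar{Z}\bar{Y}} = P_{Z^\star Y^\star}^{\otimes k}$, where $P_{Z^\star Y^\star}$ is the symmetric rate-distortion-achieving test channel from Theorem~\ref{theorem_Rate_Distortion_func}---namely \eqref{D4_optimal_edge} on $\mathcal{D}_4$, and the distribution in Part~(i) with $\theta = 1/2$ on $\mathcal{D}_1$. In both cases $P_{Z^\star Y^\star}$ is supported on $\{(0,0),(1,1),(0,e),(1,e)\}$ with symmetric weights $p_0, p_0, p_e, p_e$, so $Z^\star$ is uniform on $\{0,1\}$ marginally, and conditioned on $Y^\star = e$ the value of $Z^\star$ is independently uniform on $\{0,1\}$.

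By the bit-flip and permutation symmetry of the EFCF source, the Theorem~\ref{random_coding_achievability} bound reduces to $\sum_{t=0}^k \mathsf{binopmf}(t;k,\delta) \int_0^1 \mathbb{P}^M[\pi(x^k_t,\bar{Z}^k,\bar{Y}^k) > \tau \,|\, X^k = x^k_t] \, d\tau$, where $x^k_t$ is any fixed sequence with exactly $t$ erasures and erasure set $T$. For such $x^k_t$ and any codeword realization $(\bar{z}^k,\bar{y}^k)$, define $A' \triangleq |\{i \notin T : \bar{z}_i \neq (x^k_t)_i\}|$. Since $S_i = (x^k_t)_i$ at non-erased positions and $S_i$ is an independent fair coin flip at erased positions, $\pi(x^k_t,\bar{z}^k,\bar{y}^k)$ equals $1$ when $\textsf{d}_x(x^k_t,\bar{y}^k) > d_x$, and otherwise equals $F(A') \triangleq 1 - \mathsf{binocdf}(\lfloor k d_s\rfloor - A';\,t,\,1/2)$. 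Thus on the $\textsf{d}_x$-good event, $\pi$ is a nondecreasing step function of $A'$.

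The next step is a discrete layer-cake evaluation of the inner integral. The integrand $f(\tau) \triangleq \mathbb{P}[\pi > \tau \,|\, X^k = x^k_t]$ is a decreasing step function; on each plateau $\tau \in [F(A), F(A+1))$ it equals $q_A \triangleq \mathbb{P}[\textsf{d}_x(x^k_t,\bar{Y}^k) > d_x \text{ or } A'(\bar{Z}^k) \geq A+1]$, and each plateau has width $F(A+1) - F(A) = \mathsf{binopmf}(\lfloor k d_s\rfloor - A;\,t,\,1/2)$. Reparameterizing by $i \triangleq \lfloor k d_s\rfloor - A$ converts this to the outer sum $\sum_{i=0}^t \mathsf{binopmf}(i;t,1/2)$ in \eqref{case_study_achievability_bound}, and the complementary quantity $1 - q_A = \mathbb{P}[\textsf{d}_x(x^k_t,\bar{Y}^k) \leq d_x \text{ and } A'(\bar{Z}^k) \leq \lfloor k d_s\rfloor - i]$ plays the role of the single-codeword success probability $P_{\mathrm{good}}(t,i)$.

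Finally, I would evaluate $P_{\mathrm{good}}(t,i)$ via a four-type decomposition of codeword positions based on (erased/non-erased source position) $\times$ ($\bar{Y}_i = e$ / $\bar{Y}_i \neq e$). Letting $j$ be the number of erased positions with $\bar{Y}_i \neq e$ (distributed $\mathrm{Bin}(t, \mathbb{P}[Y^\star \neq e])$), $r$ the number of non-erased positions with $\bar{Y}_i = e$ (distributed $\mathrm{Bin}(k-t,\mathbb{P}[Y^\star = e])$), and $v - r$ the number of $X$-mismatches at non-erased positions with $\bar{Y}_i \neq e$ (distributed $\mathrm{Bin}(k-t-r,1/2)$, since $\bar{Y}_i = \bar{Z}_i$ is uniform on $\{0,1\}$ there), the $\textsf{d}_x$-good constraint becomes $v + j \leq \lfloor k d_x\rfloor$. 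Moreover, $A'$ equals the $v - r$ deterministic $Z$-mismatches at non-erased $\bar{Y}_i \neq e$ positions plus an independent $\mathrm{Bin}(r,1/2)$ contribution from $\bar{Z}_i$ uniform at non-erased $\bar{Y}_i = e$ positions, so enforcing $A' \leq \lfloor k d_s\rfloor - i$ yields the $\mathsf{binocdf}(\lfloor k d_s\rfloor - i - (v-r);r,1/2)$ factor. Summing over $j$, $r$, and $v$ produces exactly the inner triple sum in \eqref{case_study_achievability_bound}. The main obstacle is the layer-cake reparameterization in the third paragraph---correctly identifying that the jumps of $F$ coincide with the $\mathrm{Bin}(t,1/2)$ PMF---after which the remainder is routine but careful bookkeeping of the four position types.
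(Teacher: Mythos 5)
Your proposal is correct and reaches the paper's bound by a genuinely different decomposition of the same ensemble error probability. The paper conditions directly on the number $t$ of erasures in $X^k$ and then on the number $i$ of $S$-bit errors at the erased positions --- a $\mathrm{Bin}(t,1/2)$ quantity that is decoder-independent because the decoder never sees $S_i$ at erased coordinates --- and only afterward multiplies the per-codeword failure probability over the $M$ i.i.d.\ codewords (equations \eqref{ensemble_error_prob_01}--\eqref{Last_Prob_03}). You instead remain in the integral form of Theorem~\ref{random_coding_achievability}, observe that for fixed $x^k_t$ the conditional excess probability $\pi$ is a nondecreasing step function of the codeword's $Z$-mismatch count $A'$ at non-erased positions (with levels $F(A) = 1 - \mathsf{binocdf}(\lfloor kd_s\rfloor - A;t,1/2)$), and evaluate the integral by a discrete layer-cake; the reparameterization $i = \lfloor kd_s\rfloor - A$ then converts the plateau widths $F(A+1)-F(A)$ into the $\mathsf{binopmf}(i;t,1/2)$ factor. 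The two routes agree because $\mathsf{binopmf}(i;t,1/2)$ admits two numerically identical readings: the probability of exactly $i$ $S$-bit errors at erased positions (the paper's conditioning) and the Lebesgue measure of the $\tau$-plateau at level $F(\lfloor kd_s\rfloor - i)$ (your layer-cake). Your route makes the dependence on Theorem~\ref{random_coding_achievability} explicit, which is conceptually cleaner; the paper's route avoids plateau bookkeeping by working with the error event directly. The subsequent four-type classification of positions and the associated binomial counts are the same in both. One thing you correctly make explicit that the paper leaves implicit is the choice $\theta=1/2$ on $\mathcal{D}_1$, which is needed so that $P_{Z^\star|Y^\star=e}$ is uniform on $\{0,1\}$ and hence the $\mathsf{binocdf}(\cdot;r,1/2)$ factor in \eqref{case_study_achievability_bound} is actually justified.
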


\begin{proof}[Proof]
	
	We adopt the achievability scheme in \cite[Theorem 1]{Yang2024Joint}. Specifically, provided $M$ codewords $\{(\tilde{z}_m^k, \tilde{y}_m^k)\}_{m=1}^M$ and having observed $x^k$, the encoder $\textsf{f}$ chooses 
	\begin{equation}
	\begin{aligned}
	m^\star \in \arg\min_{m} \mathbb{P}[&\textsf{d}_s(S^k, \tilde{z}_m^k) > d_s\\ 
	&\cup \textsf{d}_x(X^k, \tilde{y}_m^k) > d_x | X^k=x^k],
	\end{aligned}
	\end{equation}
	i.e., $\textsf{f}(x^k) = m^\star$. Then, the decoder outputs $\textsf{c}(\textsf{f}(x^k)) = (\tilde{z}_m^k, \tilde{y}_m^k)$. We separate the decoder into $\textsf{c}_s(\textsf{f}(x^k)) = \tilde{z}_m^k$ and $\textsf{c}_x(\textsf{f}(x^k)) = \tilde{y}_m^k$ for convenience.
	
	Consider the ensemble of codes with codewords
	drawn i.i.d. from the optimal edge distribution $P_{Z^\star Y^\star}$ for $(d_s, d_x) \in \mathcal{D}_1 \cup \mathcal{D}_4$. We discuss cases $(d_s, d_x) \in \mathcal{D}_1$ and $(d_s, d_x) \in \mathcal{D}_4$ together because both cases have $P_{Z^\star Y^\star}$ with the same structure. Specifically, given $(d_s, d_x) \in \mathcal{D}_1 \cup \mathcal{D}_4$, let $P_{\tilde{Z}\tilde{Y}}$ be the corresponding optimal edge distribution $P_{Z^\star Y^\star}$ in Theorem \ref{theorem_Rate_Distortion_func}, and define $M$ mutually independent random codewords $\{(\tilde{Z}_m^k, \tilde{Y}_m^k) \sim P_{\tilde{Z}\tilde{Y}} \times \dots \times P_{\tilde{Z}\tilde{Y}}\}_{m=1}^M$. Let $(\tilde{Z}^k, \tilde{Y}^k) \sim P_{\tilde{Z}\tilde{Y}} \times \dots \times P_{\tilde{Z}\tilde{Y}}$. 
	Then, the ensemble error probability can be computed by \eqref{ensemble_error_prob_01}-\eqref{achi_1},
	\begin{figure*}[!t]
		\begin{align}
		&\mathbb{P}\left[\textsf{d}_s(S^k, \textsf{C}_s(\textsf{f}(X^k))) > d_s \cup \textsf{d}_x(X^k,\textsf{C}_x(\textsf{f}(X^k))) > d_x\right] \label{ensemble_error_prob_01} \\
		=&\sum_{t=0}^k \mathbb{P}\left[t\ \text{erasures}\ \text{in}\ X^k \right] \cdot \mathbb{P}\big[ \textsf{d}_s(S^k, \textsf{C}_s(\textsf{f}(X^k))) > d_s\cup \textsf{d}_x(X^k,\textsf{C}_x(\textsf{f}(X^k))) > d_x | t\ \text{erasures}\ \text{in}\ X^k \big]\\
		=&\sum_{t=0}^k \mathbb{P}\left[t\ \text{erasures}\ \text{in}\ X^k \right]\cdot \mathbb{P}\big[ k\textsf{d}_s(S^k, \textsf{C}_s(\textsf{f}(X^k))) \!>\! \lfloor k d_s \rfloor\cup k \textsf{d}_x(X^k,\textsf{C}_x(\textsf{f}(X^k))) \!>\! \lfloor k d_x \rfloor | t\ \text{erasures}\ \text{in}\ X^k \big]\\
		=&\sum_{t=0}^k \mathbb{P}\left[t\ \text{erasures}\ \text{in}\ X^k \right] \cdot \sum_{i=0}^t \bigg \{ \mathbb{P}\left[t \textsf{d}_s(S^t,\textsf{C}_s(\textsf{f}(X^t))) = i | X^t = e,\dots, e \right] \nonumber \\
		&\cdot \mathbb{P}\big[ (k-t)\textsf{d}_s(S^{k-t}, \textsf{C}_s(\textsf{f}(X^{k-t}))) \!>\! \lfloor k d_s \rfloor - i \cup k \textsf{d}_x(X^{k},\textsf{C}_x(\textsf{f}(X^{k}))) \!>\! \lfloor k d_x \rfloor | X^{k-t} \!=\! S^{k-t},t\ \text{erasures}\ \text{in}\ X^k  \big] \bigg \} \label{achi_1}
		\end{align}
		\hrulefill
	\end{figure*}
	where \eqref{achi_1} holds since the distribution of $k\textsf{d}_s(S^t, \textsf{C}_s(\textsf{f}(X^t)))$, given $ X^t = e,\dots, e$, does not dependent on the codebook. 
	
	Next, we compute the probability terms in \eqref{achi_1}. Clearly, since $\mathbb{P}\left[X=e \right] = \delta$,
	\begin{align}
	\mathbb{P}\left[t\ \text{erasures}\ \text{in}\ X^k \right] = C_{k}^t \cdot \delta^t \cdot (1 - \delta)^{k-t};
	\end{align}
	since $P_{S|X}(0|e) = P_{S|X}(1|e) = 1/2$,
	\begin{align}
	\mathbb{P}\left[ t\textsf{d}_s(S^t, \textsf{C}_s(\textsf{f}(X^t))) = i| X^t = e,\dots, e \right] = C_{t}^{i} \cdot \frac{1}{2^t}.
	\end{align}
	For the computation of the last probability term in \eqref{achi_1}, we have \eqref{last_term_01}-\eqref{last_term_04},
	\begin{figure*}[!t]
		\begin{align}
		&\mathbb{P}\left[ (k-t)\textsf{d}_s(S^{k-t}, \textsf{C}_s(\textsf{f}(X^{k-t}))) > \lfloor k d_s \rfloor - i \cup k \textsf{d}_x(X^{k},\textsf{C}_x(\textsf{f}(X^{k}))) > \lfloor k d_x \rfloor | X^{k-t} = S^{k-t},\ t\ \text{erasures}\ \text{in}\ X^k  \right] \label{last_term_01} \\
		=&\prod_{m=1}^M \mathbb{P}\left[ (k-t)\textsf{d}_s(S^{k-t}, \tilde{Z}^{k-t}_m) > \lfloor k d_s \rfloor - i \cup k \textsf{d}_x(X^k, \tilde{Y}^k_m) > \lfloor k d_x \rfloor | X^{k-t} = S^{k-t},\ t\ \text{erasures}\ \text{in}\ X^k  \right]\\
		=&\left( \mathbb{P}\left[ (k-t)\textsf{d}_s(S^{k-t}, \tilde{Z}^{k-t}) > \lfloor k d_s \rfloor - i \cup k \textsf{d}_x(X^{k}, \tilde{Y}^{k}) > \lfloor k d_x \rfloor | X^{k-t} = S^{k-t},\ t\ \text{erasures}\ \text{in}\ X^k  \right] \right)^M \\
		=&\left(1 - \mathbb{P}\left[ (k-t)\textsf{d}_s(S^{k-t}, \tilde{Z}^{k-t}) \leq \lfloor k d_s \rfloor - i \cap k \textsf{d}_x(X^k, \tilde{Y}^k) \leq \lfloor k d_x \rfloor | X^{k-t} = S^{k-t},\ t\ \text{erasures}\ \text{in}\ X^k \right] \right)^M \label{last_term_04}
		\end{align}
		\hrulefill
	\end{figure*}
	where the probability term in \eqref{last_term_04} can be computed by \eqref{prob_term_01}-\eqref{prob_term_04},
	\begin{figure*}[!t]
		\begin{align}
		&\mathbb{P}\left[ (k-t)\textsf{d}_s(S^{k-t}, \tilde{Z}^{k-t}) \leq \lfloor k d_s \rfloor - i \cap k \textsf{d}_x(X^k, \tilde{Y}^k) \leq \lfloor k d_x \rfloor | X^{k-t} = S^{k-t},\ t\ \text{erasures}\ \text{in}\ X^k \right] \label{prob_term_01} \\
		=&\sum_{j=0}^t \mathbb{P}\left[ t \textsf{d}_x(X^t, \tilde{Y}^t) = j | X^t = e,\dots, e \right] \nonumber \\ 
		&\cdot \mathbb{P}\left[ (k-t)\textsf{d}_s(S^{k-t}, \tilde{Z}^{k-t}) \leq \lfloor k d_s \rfloor - i \cap (k - t) \textsf{d}_x(X^{k - t}, \tilde{Y}^{k-t}) \leq \lfloor k d_x \rfloor - j | X^{k-t} = S^{k-t}\right]\\
		=&\sum_{j=0}^t \bigg \{ \mathbb{P}\left[ t \textsf{d}_x(X^t, \tilde{Y}^t) = j | X^t = e,\dots, e \right] \cdot \sum_{r = 0}^{k-t} \mathbb{P}\left[ \text{$r$ erasures in $\tilde{Y}^{k-t}$} \right] \nonumber \\
		&\cdot \mathbb{P}\left[ (k-t)\textsf{d}_s(S^{k-t}, \tilde{Z}^{k-t}) \!\leq\! \lfloor k d_s \rfloor - i \cap (k - t) \textsf{d}_x(X^{k - t}, \tilde{Y}^{k-t}) \!\leq\! \lfloor k d_x \rfloor - j | X^{k-t} \!=\! S^{k-t}, \text{$r$ erasures in $\tilde{Y}^{k-t}$} \right] \bigg \}\\
		=&\sum_{j=0}^t \bigg \{ \mathbb{P}\left[ t \textsf{d}_x(X^t, \tilde{Y}^t) = j | X^t = e,\dots, e \right] \cdot \sum_{r = 0}^{k-t} \mathbb{P}\left[ \text{$r$ erasures in $\tilde{Y}^{k-t}$} \right] \nonumber \\
		&\cdot \sum_{v=0}^{\lfloor k d_x \rfloor - j} \mathbb{P}\left[ (k\!-\!t)\textsf{d}_s(S^{k-t}, \tilde{Z}^{k-t}) \!\leq\! \lfloor k d_s \rfloor - i \cap (k - t) \textsf{d}_x(X^{k - t}, \tilde{Y}^{k-t}) =v | X^{k-t} \!=\! S^{k-t}, \text{$r$ erasures in $\tilde{Y}^{k-t}$} \right] \bigg \} \label{prob_term_04}
		\end{align}
		\hrulefill
	\end{figure*}
	with
	\begin{equation}
	\begin{aligned}
	&\mathbb{P}\left[ t \textsf{d}_x(X^t, \tilde{Y}^t) = j | X^t = e,\dots, e \right]\\ 
	&= C_t^j \cdot (\mathbb{P}[\tilde{Y} \neq e])^j \cdot (\mathbb{P}[\tilde{Y} = e])^{t-j},
	\end{aligned}
	\end{equation}
	\begin{equation}
	\begin{aligned}
	&\mathbb{P}\left[ \text{$r$ erasures in $\tilde{Y}^{k-t}$} \right]\\ 
	&= C_{k-t}^r \cdot (\mathbb{P}[\tilde{Y} = e])^r \cdot (\mathbb{P}[\tilde{Y} \neq e])^{k-t-r},
	\end{aligned}
	\end{equation}
	and the last probability term in \eqref{prob_term_04} being computed by \eqref{Last_Prob_01}-\eqref{Last_Prob_03}.
	\begin{figure*}[!t]
		\begin{align}
		&\mathbb{P}\left[ (k-t)\textsf{d}_s(S^{k-t}, \tilde{Z}^{k-t}) \leq \lfloor k d_s \rfloor - i  \cap (k - t) \textsf{d}_x(X^{k - t}, \tilde{Y}^{k-t}) = v | X^{k-t} = S^{k-t}, \text{$r$ erasures in $\tilde{Y}^{k-t}$} \right] \label{Last_Prob_01} \\
		=& \mathbb{P}\left[ \text{The $(k-t-r)$ non-erasure elements in $\tilde{Y}^{k-t}$ intorduce $(v-r)$ bit errors}  \right] \nonumber \\
		&\cdot \mathbb{P}\Big[ \text{The $r$ elements in $\tilde{Z}^{k-t}$ corresponding to the $r$ erasures in $\tilde{Y}^{k-t}$ intorduce}\nonumber\\ 
		&\text{ less than or equal to $(\lfloor k d_s \rfloor \!-\! i\!-\!(v\!-\!r))$ bit errors}  \Big] \label{language_description} \\
		=&\left\{
		\begin{array}{rcl}
		C_{k-t-r}^{v-r} \cdot \frac{1}{2^{k-t-r}} \cdot \sum_{u=0}^{\lfloor k d_s \rfloor - i -(v-r)} C_{r}^{u} \cdot \frac{1}{2^{r}} & & {
			\begin{array}{l}
			v-r \in \{0, \dots, k-t-r\}\\
			\text{and}\ \lfloor k d_s \rfloor - i - (v-r) \in \{0, \dots, r\}
			\end{array}
		}  \\
		C_{k-t-r}^{v-r} \cdot \frac{1}{2^{k-t-r}} & & {
			\begin{array}{l}
			v-r \in \{0, \dots, k-t-r\}\\
			\text{and}\ \lfloor k d_s \rfloor - i - (v-r) > r
			\end{array}
		}  \\
		0 & & {\text{otherwise}}
		\end{array} \right. \label{Last_Prob_03}
		\end{align}
		\hrulefill
	\end{figure*}
	Note that \eqref{language_description} can be obtained by realizing that $\mathbb{P}[\textsf{d}_x(X, \tilde{Y}) = 1|X=S, \tilde{Y}=e]=1$ and $\mathbb{P}[\textsf{d}_s(S, \tilde{Z}) = \textsf{d}_x(X, \tilde{Y})|X=S, \tilde{Y} \neq e]=1$. 
	
	Since there exists at least one code with an excess probability no greater than the ensemble average, the code satisfying \eqref{case_study_achievability_bound} must exist.
	
\end{proof}

The nonasymptotic achievability bound for $(d_s, d_x) \in \mathcal{D}_2 \cup \mathcal{D}_3$ is given by the following theorem.
\begin{theorem} 
	\label{case_study_achievability2}
	(Achievability, EFCF with $(d_s, d_x) \in \mathcal{D}_2 \cup \mathcal{D}_3$): In erased fair coin flips, for $(d_s, d_x) \in \mathcal{D}_2 \cup \mathcal{D}_3$, there exists an $(k, M, d_s, d_x, \epsilon)$ code such that 
	\begin{align}
	\label{case_study_achievability2_bound}
	\epsilon \! \leq \! &\sum_{t=0}^k \mathsf{binopmf}(t; k, \delta) \cdot \sum_{i=0}^t \mathsf{binopmf}(i; t, 1/2)\nonumber\\ 
	&\cdot \bigg(\!1 - \mathsf{binocdf}(\min\{\lfloor k d_s \rfloor \!-\! i, \lfloor k d_x \rfloor \!-\! t\}; k\!-\!t, 1/2)\!\bigg)^M.
	\end{align} 
	
\end{theorem}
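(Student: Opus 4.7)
The plan is to mirror the random coding argument used in Theorem \ref{case_study_achievability}, but to exploit the much simpler optimal edge distribution governing $\mathcal{D}_2 \cup \mathcal{D}_3$. According to Theorem \ref{theorem_Rate_Distortion_func}, for $(d_s,d_x) \in \mathcal{D}_2 \cup \mathcal{D}_3$ the optimal test channel satisfies $Z^\star = Y^\star$ almost surely, and the common marginal is a fair coin on $\{0,1\}$ with $\mathbb{P}[Y^\star = e] = 0$. Hence the codebook consists of $M$ i.i.d. codewords drawn from this product distribution, with each codeword being a pair of identical length-$k$ binary sequences. Using the minimum-distortion encoder introduced in the proof of Theorem \ref{case_study_achievability}, I would pass to the ensemble-average excess distortion probability and show that it is upper-bounded by the right-hand side of \eqref{case_study_achievability2_bound}.

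The central simplification is structural. I would first condition on $t$, the number of erasures in $X^k$ (which is $\mathrm{Binomial}(k,\delta)$), and on $i$, the number of bit mismatches between $S$ and any fixed codeword at the erasure positions; since $S_j$ at an erased position is Bernoulli$(1/2)$ independent of the codebook, $i$ is $\mathrm{Binomial}(t,1/2)$ and independent of the codebook. Given $(t,i)$, each codeword contributes exactly $t$ to $k\,\textsf{d}_x$ at the erasure positions, because $\tilde{Y}_j \in \{0,1\} \neq e = X_j$, and exactly $i$ to $k\,\textsf{d}_s$ by construction. At the remaining $k-t$ non-erasure positions, $X_j = S_j$ and $\tilde{Z}_j = \tilde{Y}_j$, so the indicators $\mathbf{1}\{S_j \neq \tilde{Z}_j\}$ and $\mathbf{1}\{X_j \neq \tilde{Y}_j\}$ coincide; let $W_m$ denote their common sum for codeword $m$, which is $\mathrm{Binomial}(k-t,1/2)$ and independent across $m$.

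Consequently the joint excess-distortion event for codeword $m$ collapses to $W_m > \min\{\lfloor kd_s\rfloor - i,\ \lfloor kd_x \rfloor - t\}$, so the probability that all $M$ codewords fail is $(1-F)^M$ with $F = \mathsf{binocdf}(\min\{\lfloor kd_s\rfloor - i, \lfloor kd_x \rfloor - t\};\, k-t,\, 1/2)$. Summing over $t$ with weight $\mathsf{binopmf}(t;k,\delta)$ and over $i$ with weight $\mathsf{binopmf}(i;t,1/2)$, and invoking the standard argument that at least one code meets the ensemble average, yields \eqref{case_study_achievability2_bound}.

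The main obstacle, and really the only non-bookkeeping step, is the observation that at non-erasure positions $\mathbf{1}\{S_j \neq \tilde{Z}_j\} = \mathbf{1}\{X_j \neq \tilde{Y}_j\}$, so the two distortion counts are coupled into a single binomial $W_m$. This is precisely what eliminates the nested sums over codeword-erasure counts and auxiliary indices that appear in Theorem \ref{case_study_achievability}, and it is a direct consequence of the fact that the optimal $P_{Z^\star Y^\star}$ for $\mathcal{D}_2 \cup \mathcal{D}_3$ is supported on the diagonal of $\{0,1\}\times\{0,1\}$ with no erasures in the reproduction alphabet.
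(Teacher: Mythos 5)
Your argument is correct and is exactly the route the paper intends: the paper states only that the proof ``closely resembles that of Theorem~\ref{case_study_achievability}'' and omits it, and your derivation is that proof, specialized to the diagonal-supported $P_{Z^\star Y^\star}$ with $\mathbb{P}[Y^\star = e] = 0$, which collapses the inner sums over $j$, $r$, $v$ in \eqref{case_study_achievability_bound} (setting $j=t$, $r=0$) and couples the two non-erasure distortion counts into a single $\mathrm{Binomial}(k-t,1/2)$ variable, yielding \eqref{case_study_achievability2_bound}.
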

\begin{proof}[Proof]
	This proof closely resembles that of Theorem \ref{case_study_achievability} and is omitted here.
\end{proof}

\subsection{Nonasymptotic Converse Bound}

\begin{theorem} 
	\label{case_study_converse}
	(Converse, EFCF): In erased fair coin flips, any $(k, M, d_s, d_x, \epsilon)$ code must satisfy
	\begin{equation}
	\label{case_study_converse_bound}
	\begin{aligned}
	\epsilon \geq& \sup_{\gamma \geq 0} \bigg\{\mathbb{P}\bigg[\sum_{i=1}^k \jmath_{X}(X_i,d_s,d_x) + \lambda_s^{\star} \big(\textsf{d}_s(S_i, 0)\\ 
	&- \bar{\textsf{d}}_s(X_i, 0)\big) \geq \gamma + \log M\bigg] - \exp(-\gamma)\bigg\},
	\end{aligned}
	\end{equation}
	where 
	\begin{equation}
	\label{EFCF_distortion_measures}
	\textsf{d}_s(s, z)\!=\!\textrm{1}\{s \!\neq\! z\},
	\
	\bar{\textsf{d}}_s(x, z)\!=\!\left\{\!\!
	\begin{array}{rcl}
	\textrm{1}\{x \!\neq\! z\} & & {x \neq e}\\
	1/2 & & {x=e}
	\end{array} \right.\!\!,
	\end{equation}
	and $\{(S_i, X_i)\}_{i=1}^k$ are independently drawn from the EFCF source.
\end{theorem}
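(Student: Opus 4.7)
The plan is to specialize the general nonasymptotic converse in Corollary~\ref{converse_bound_corollary} to the block-$k$ i.i.d.\ EFCF source with the Hamming distortion measures in \eqref{EFCF_distortion_measures}. Because the argument only depends on the single-letter tilted-information structure of Property~\ref{tilted_information_property}, it applies uniformly to the four nontrivial regions $\mathcal{D}_1$--$\mathcal{D}_4$ via the optimal test channel $P_{Z^\star Y^\star|X}$ identified in Theorem~\ref{theorem_Rate_Distortion_func}.

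First I would choose the auxiliary conditional distribution $P_{\bar{X}^k|\bar{Z}^k\bar{Y}^k}$ as the $k$-fold product of the single-letter tilted distribution $P_{\bar{X}|\bar{Z}\bar{Y}}(x|z,y)\propto P_X(x)\exp\{\jmath_X(x,d_s,d_x) - \lambda_s^\star\bar{\textsf{d}}_s(x,z) - \lambda_x^\star\textsf{d}_x(x,y) + \lambda_s^\star d_s + \lambda_x^\star d_x\}$. Its normalizing constant is $\leq 1$ by \eqref{property4} (with equality on $\textsf{supp}(P_{Z^\star Y^\star})$), so the resulting information density satisfies $\imath_{\bar{X}|\bar{Z}\bar{Y}\|X}(x;z,y) \geq \jmath_X(x,d_s,d_x) + \lambda_s^\star(d_s - \bar{\textsf{d}}_s(x,z)) + \lambda_x^\star(d_x - \textsf{d}_x(x,y))$ for every $(z,y)$, and the product structure turns this into a coordinate-wise sum.

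Next I would lower-bound the two suprema in $f_{\bar{X}^k|\bar{Z}^k\bar{Y}^k}$ by the specific Lagrangians $\lambda_s = k\lambda_s^\star$ and $\lambda_x = k\lambda_x^\star$ (valid since both are nonnegative) and invoke the separability of the block distortions. Upon adding the per-coordinate information-density bound to the two sums and subtracting $\log M$, the $\lambda_x^\star\textsf{d}_x(x_i,y_i)$ terms cancel pairwise and the constants involving $\lambda_s^\star d_s,\lambda_x^\star d_x$ telescope, leaving
\begin{equation*}
f_{\bar{X}^k|\bar{Z}^k\bar{Y}^k}(s^k,x^k,z^k,y^k)\geq \sum_{i=1}^k\Big[\jmath_X(x_i,d_s,d_x) + \lambda_s^\star\big(\textsf{d}_s(s_i,z_i) - \bar{\textsf{d}}_s(x_i,z_i)\big)\Big] - \log M,
\end{equation*}
which is manifestly independent of $y^k$.

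Finally I would take the infimum over $(z^k,y^k)$ demanded by Corollary~\ref{converse_bound_corollary}. The $y^k$-dependence has already vanished; for the $z^k$-infimum, a direct case analysis of EFCF shows that, conditional on $X^k=x^k$, the random variable $\textsf{d}_s(S_i,z_i)-\bar{\textsf{d}}_s(x_i,z_i)$ is almost surely zero when $x_i\in\{0,1\}$ (since then $S_i=x_i$), and equals $\textrm{1}\{S_i\neq z_i\}-1/2$ with symmetric $\pm 1/2$ law when $x_i=e$ irrespective of $z_i$. The conditional law of the whole summand is therefore invariant in $z^k$, so the infimum equals the value at any fixed choice such as $z^k=(0,\dots,0)$, and the tower property $\mathbb{E}[\mathbb{P}[\cdot|X^k]]=\mathbb{P}[\cdot]$ yields \eqref{case_study_converse_bound}. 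The hard part will be this symmetry-plus-zero-conditional-variance step: one must rigorously verify that the $z^k$-dependence of the surrogate tilted-information sum is completely washed out by the law of $S^k\mid X^k$, so that the infimum over $z^k\in\hat{\mathcal{M}}^k$ is attained losslessly at a deterministic point. A secondary technicality is designing the off-support extension of $P_{\bar{X}|\bar{Z}\bar{Y}}$ so that the lower bound from \eqref{property4} remains valid for every $(z,y)$; this is exactly why the tilted construction above is preferred to a bare restriction of $P_{X|Z^\star Y^\star}$.
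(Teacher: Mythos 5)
Your proof is correct and follows essentially the same route as the paper's: weaken Corollary \ref{converse_bound_corollary} with the product auxiliary distribution, pick $\lambda_s = k\lambda_s^\star$ and $\lambda_x = k\lambda_x^\star$, collapse to $\sum_i\big[\jmath_X(X_i,d_s,d_x)+\lambda_s^\star(\textsf{d}_s(S_i,z_i)-\bar{\textsf{d}}_s(X_i,z_i))\big]$, and observe that the conditional law of this sum given $X^k$ is invariant in $z^k$ and independent of $y^k$ for EFCF. Your tilted off-support extension of $P_{\bar{X}|\bar{Z}\bar{Y}}$ together with the one-sided bound from \eqref{property4} is in fact a bit more careful than the paper's proof, which writes $P_{\bar{X}|\bar{Z}\bar{Y}} = P_{X|Z^\star Y^\star}$ and invokes \eqref{relation_of_tilted_information} for all $(z^k,y^k)$ even though that identity is only guaranteed $P_{Z^\star Y^\star}$-a.e. (e.g.\ $(z,y)\in\{(0,1),(1,0)\}$ lies outside $\textsf{supp}(P_{Z^\star Y^\star})$ in region $\mathcal{D}_4$).
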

\begin{proof}[Proof]
	We weaken the bound in \eqref{converse_bound2} by choosing
	\begin{align}
	P_{\bar{X}^k|\bar{Z}^k=z^k,\bar{Y}^k=y^k}(x^k)=&\prod_{i=1}^k P_{X|Z^{\star}=z_i, Y^{\star}=y_i} (x_i), \\
	\lambda_s=& k \lambda_s^{\star},\\
	\lambda_x=& k \lambda_x^{\star}.
	\end{align}
	By Corollary \ref{converse_bound_corollary}, any $(k, M, d_s, d_x, \epsilon)$ code must satisfy
	\begin{align} 
	\epsilon \geq& \sup_{\gamma \geq 0} \bigg\{\mathbb{E}\bigg[\min_{z^k \in \hat{\mathcal{S}}^k, y^k \in \hat{\mathcal{A}}^k}
	\mathbb{P}\bigg[\sum_{i=1}^k\imath_{X;Z^\star Y^\star}(X_i;z_i,y_i)\nonumber\\
	&+ \lambda_s^{\star}(\textsf{d}_s(S_i,z_i) - d_s) +  \lambda_x^{\star} (\textsf{d}_x(X_i,y_i) - d_x)\nonumber\\
	&\geq \log M + \gamma| X^k \bigg]\bigg] - \exp(-\gamma)\bigg\}\\
	=& \sup_{\gamma \geq 0} \bigg\{\mathbb{E}\bigg[\min_{z^k \in \hat{\mathcal{S}}^k}
	\mathbb{P}\bigg[\sum_{i=1}^k\jmath_{X}(X_i,d_s,d_x) + \lambda_s^\star \big( \textsf{d}_s(S_i,z_i)\nonumber \\ 
	&- \bar{\textsf{d}}_s(X_i,z_i) \big)  
	\geq \log M + \gamma| X^k \bigg]\bigg] - \exp(-\gamma)\bigg\}, \label{EFCF_converse_temp1}
	\end{align}
	where \eqref{EFCF_converse_temp1} is by \eqref{relation_of_tilted_information}. By \eqref{EFCF_distortion_measures}, the probability in \eqref{EFCF_converse_temp1} is the same for all $z^k \in \hat{\mathcal{S}}^k$, leading to \eqref{case_study_converse_bound}.
\end{proof}
\begin{figure}[htbp]
	\centering
	\includegraphics[width=1\linewidth]{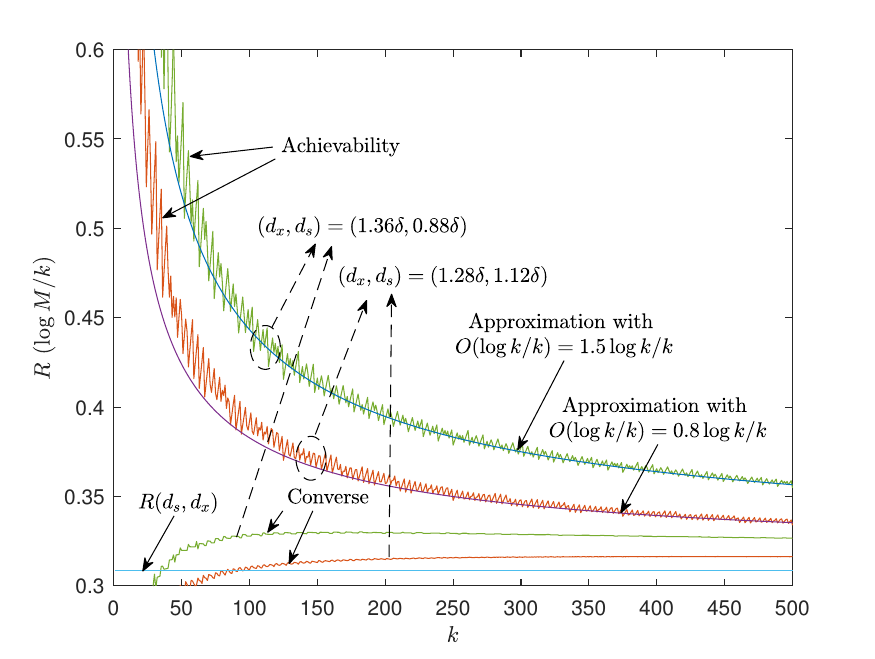}
	\caption{Rate-blocklength trade-off in the erased fair coin flips case with $\delta = 0.2$ and $\epsilon = 0.1$. Note that $(d_x, d_s) = (1.36\delta, 0.88\delta)$ and $(d_x, d_s) = (1.28\delta, 1.12\delta)$ share the same value of $R(d_s,d_x)$.}
	\label{convergence_result}
\end{figure}

\subsection{Numerical Results}

The second-order approximation in Theorem \ref{theorem_dispersion}, the nonasymptotic converse bound with $\gamma=\log k / 2$ in Theorem \ref{case_study_converse}, the nonasymptotic achievability bound in Theorem \ref{case_study_achievability}, and the asymptotically achievable rate \eqref{R_D_for_D4} are plotted in Fig. \ref{convergence_result} for both maximum admissible distortion settings $(d_x,d_s) = (1.36\delta, 0.88\delta)$ and $(d_x,d_s) = (1.28\delta, 1.12\delta)$ with $\delta = 0.2$ and $\epsilon = 0.1$. The nonasymptotic bounds in Fig. \ref{convergence_result} are plotted by using binary search to find the values of $M$ such that the ensemble excess-distortion probabilities computed by the right hand sides of \eqref{case_study_achievability_bound} and \eqref{case_study_converse_bound} equal $\epsilon$ with a given blocklength $k$. Note that the two pairs of maximum admissible distortions fall in $\mathcal{D}_4$ and share the same value of the asymptotically achievable rate $R(d_s,d_x)$. From Fig. \ref{convergence_result}, we observe that the second-order approximations well approximate their corresponding finite-blocklength rates, requiring only slightly adapting the remainder term. 
Furthermore, despite both the $(d_s,d_x)$ settings share the same asymptotically achievable rate $R(d_s,d_x)$, we observe a noticeable gap between their finite-blocklength rates. By Theorem \ref{theorem_dispersion}, we know this gap mainly arises from their distinct dispersions, technically, distinct values of the rate-dispersion function $\tilde{\mathcal{V}}(d_s,d_x)$.

\section{Conclusion}
\label{Conclusion}

In this paper, we derived general nonasymptotic achievability and converse bounds for indirect lossy source coding with observed source reconstruction. The dispersion of this problem was also obtained by second-order asymptotic analysis of the nonasymptotic bounds, which allows us to approximate the excess rate over the rate-distortion function incurred in the finite-blocklength regime. This also implies that the nonasymptotic bounds proposed in Sections \ref{Nonasymptotic_Achievability_Bounds} and \ref{Nonasymptotic_Converse_Bounds} are asymptotically tight up to the second order. Finally, we obtained nonasymptotic achievability and converse bounds for the case of erased fair coin flips. Based on these results, we numerically illustrated the accuracy of the second-order approximation.

An interesting direction for future research is to extend our second-order asymptotic results to the quadratic Gaussian case. Extending the techniques used in this paper to the Gaussian case is not straightforward. Specifically, our achievability proof relies heavily on Theorem \ref{weakening_random_coding_achievability}, which connects the random coding union bound in Theorem \ref{random_coding_achievability} with the KL divergence $D(P_{ZY} \Vert P_{\bar{Z}\bar{Y}})$ through a change of measure argument. This connection allows us to construct the summation $\sum_{i=1}^k \jmath_{X}(X_i,d_s,d_x)$ of independent random variables via \eqref{achi_step_1}-\eqref{achi_step_6}. However, Theorem \ref{weakening_random_coding_achievability} becomes particularly challenging to apply in the Gaussian case, as a proper $P_{ZY}$ that satisfies the constraint of \eqref{achi_inner_problem} is difficult to construct. At the very least, type-based construction methods, such as the one below \eqref{general_rate_distortion_function}, are no longer applicable. Therefore, in the Gaussian case, we may have to forgo Theorem \ref{weakening_random_coding_achievability} and instead directly connect Theorem \ref{random_coding_achievability} with $\sum_{i=1}^k \jmath_{X}(X_i,d_s,d_x)$ by leveraging properties of joint Gaussian distributions.

The converse part of the second-order analysis seems even harder for Gaussian sources. The core challenge lies in handling the outermost optimization in the min-max bound in Theorem \ref{general_converse_theorem}. For finite-alphabet sources considered in this paper, the outermost optimization can be tackled by first shifting it to the innermost layer to relax the min-max bound into the max-min form in Corollary \ref{converse_bound_corollary}, and then tackling the obtained innermost minimization using the approximate optimization results developed in \cite[Appendix A.3]{Kostina2013Lossy}. However, while the penalty introduced by the relaxation from Theorem \ref{general_converse_theorem} to Corollary \ref{converse_bound_corollary} can be properly bounded for finite-alphabet sources as implied by \eqref{low_1}-\eqref{lower_bound_Wi}, it appears that this relaxation may be too loose for Gaussian sources due to their unbounded dense alphabets. For this converse proof, we may need to explore approaches beyond Corollary \ref{converse_bound_corollary} to handle the outermost optimization in Theorem \ref{general_converse_theorem} or try to employ more basic tools of probability theory.

\appendices

\section{Proof of Property \ref{tilted_information_property}} \label{proof_tilted_information_property}

In this proof, we extend the methodology from \cite[Appendix B.1]{Kostina2013Lossy} to accommodate our two-constraint case.

Since Slater's condition is satisfied by convex problem \eqref{rate_distortion_function} when $(d_s,d_x) \in \mathcal{D}_{\mathrm{in}}$, the Karush-Kuhn-Tucker (KKT) coefficients for problem \eqref{rate_distortion_function} exists.  Furthermore, following the methodology in the proof of \cite[Theorem 2.5.1]{Berger1971Rate}, we can compute the KKT coefficients using \eqref{lambda_s} and \eqref{lambda_x}. As a consequence, \eqref{property1} holds. We note that any optimal solution of problem \eqref{rate_distortion_function} is also optimal for the problem in \eqref{property1}, but the converse is not necessarily true.

We now show \eqref{d_tilted_information_of_surrogate}, which automatically show \eqref{property3}. Consider function
\begin{align}
&F(P_{ZY|X}, P_{\bar{Z} \bar{Y}})\\
\triangleq& D(P_{ZY|X} \Vert P_{\bar{Z}\bar{Y}}| P_X) + \lambda_s^\star \mathbb{E}\left[\bar{\textsf{d}}_s(X,Z)\right]\nonumber\\ 
&+ \lambda_x^\star \mathbb{E}\left[\textsf{d}_x(X,Y)\right] - \lambda_s^\star d_s - \lambda_x^\star d_x\\
=& I(X;Z,Y) + D(P_{ZY}\Vert P_{\bar{Z}\bar{Y}})+ \lambda_s^\star \mathbb{E}\left[\bar{\textsf{d}}_s(X,Z)\right]\nonumber\\ 
&+ \lambda_x^\star \mathbb{E}\left[\textsf{d}_x(X,Y)\right] - \lambda_s^\star d_s - \lambda_x^\star d_x\\
\geq& I(X;Z,Y) + \lambda_s^\star \mathbb{E}\left[\bar{\textsf{d}}_s(X,Z)\right] + \lambda_x^\star \mathbb{E}\left[\textsf{d}_x(X,Y)\right]\nonumber\\ 
&- \lambda_s^\star d_s - \lambda_x^\star d_x. \label{B11}
\end{align}
Since the equality in \eqref{B11} holds if and only if $P_{\bar{Z}\bar{Y}} = P_{ZY}$, $R_{S,X}(d_s,d_x)$ can be expressed as
\begin{align}
R_{S,X}(d_s,d_x) =& F(P_{Z^\star Y^\star| X}, P_{Z^\star Y^\star}) \label{opt_R} \\ =&\min_{P_{\bar{Z}\bar{Y}}}\min_{P_{ZY|X}}F(P_{ZY|X}, P_{\bar{Z} \bar{Y}}). \label{double_minimizer}
\end{align}


For an arbitrary $P_{\bar{Z} \bar{Y}}$, define the conditional distribution $P_{\bar{Z}^\star \bar{Y}^\star| X}$ through
\begin{equation}
\label{optimal_PZYX}
\begin{aligned}
&\mathrm{d} P_{\bar{Z}^\star \bar{Y}^\star | X=x}(z,y)\\ \triangleq& \frac{\mathrm{d} P_{\bar{Z} \bar{Y}}(z,y) \exp\left(- \lambda_s^\star \bar{\textsf{d}}_s(x,z) - \lambda_x^\star \textsf{d}_x(x,y)\right)}{\mathbb{E}\left[\exp\left(- \lambda_s^\star \bar{\textsf{d}}_s(x,\bar{Z}) - \lambda_x^\star \textsf{d}_x(x,\bar{Y})\right)\right]},
\end{aligned}
\end{equation}
where the expectation is with respect to unconditional distribution of $(\bar{Z}, \bar{Y})$. Further, define
\begin{equation}
J_{\bar{Z}\bar{Y}} (x, \lambda_s, \lambda_x) \triangleq \log \frac{1}{\mathbb{E}\left[\exp\left(- \lambda_s \bar{\textsf{d}}_s(x,\bar{Z}) - \lambda_x \textsf{d}_x(x,\bar{Y})\right)\right]},
\end{equation}
where the expectation is also with respect to unconditional distribution of $(\bar{Z}, \bar{Y})$. Then, by \cite[Lemma A.2]{Kostina2013Lossy}, only $P_{\bar{Z}^\star \bar{Y}^\star| X}$ that satisfies \eqref{optimal_PZYX} achieves the equality in
\begin{equation}
\label{inner_problem_solve}
\begin{aligned}
&D(P_{ZY|X=x} \Vert P_{\bar{Z} \bar{Y}}) + \lambda_s^\star \mathbb{E}\left[\bar{\textsf{d}}_s(X,Z) | X=x\right]\\ 
&+ \lambda_x^\star \mathbb{E}\left[\textsf{d}_x(X,Y) | X=x\right] \geq J_{\bar{Z}\bar{Y}} (x, \lambda_s^\star, \lambda_x^\star).
\end{aligned}
\end{equation}
Applying \eqref{inner_problem_solve} to solve the inner minimizer in \eqref{double_minimizer}, we have
\begin{align}
R_{S,X}(d_s,d_x) =&
F(P_{Z^\star Y^\star| X}, P_{Z^\star Y^\star})  \label{B18} \\
=&\min_{P_{\bar{Z}\bar{Y}}}F(P_{\bar{Z}^\star \bar{Y}^\star| X}, P_{\bar{Z} \bar{Y}}). \label{B19}
\end{align}
Clearly, $(P_{Z^\star Y^\star| X}, P_{Z^\star Y^\star})$ must satisfy \eqref{optimal_PZYX} since otherwise, we have $R_{S,X}(d_s,d_x) \leq \min_{P_{ZY|X}}F(P_{ZY|X}, P_{Z^\star Y^\star}) < F(P_{Z^\star Y^\star| X}, P_{Z^\star Y^\star})$, where the first inequality follows from \eqref{double_minimizer}, and the last inequality holds since the equality in \eqref{inner_problem_solve} holds only for $(P_{\bar{Z}^\star \bar{Y}^\star| X}, P_{\bar{Z} \bar{Y}})$ that satisfies \eqref{optimal_PZYX}. Thus the pair $(P_{Z^\star Y^\star| X}, P_{Z^\star Y^\star})$ that achieves the rate-distortion function must satisfy \eqref{d_tilted_information_of_surrogate}, which is a particularization of \eqref{optimal_PZYX}.


We next show \eqref{property4}. Although for a fixed $P_{\bar{Z} \bar{Y}}$ we can always define $P_{\bar{Z}^\star \bar{Y}^\star | X}$ via \eqref{optimal_PZYX}, in general we cannot claim that $P_{\bar{Z}^\star \bar{Y}^\star} = P_{\bar{Z} \bar{Y}}$, where $P_X \to P_{\bar{Z}^\star \bar{Y}^\star | X} \to P_{\bar{Z}^\star \bar{Y}^\star}$, unless $P_{\bar{Z} \bar{Y}}$ is such that for $P_{\bar{Z} \bar{Y}}$-a.e. $(z,y)$,
\begin{equation}
\label{B16}
\mathbb{E}\left[\exp\left(J_{\bar{Z}\bar{Y}} (X, \lambda_s, \lambda_x) \!-\! \lambda_s^\star \bar{\textsf{d}}_s(X,z) \!-\! \lambda_x^\star \textsf{d}_x(X,y)\right)\right] \!=\! 1.
\end{equation}
Since $P_X \to P_{Z^\star Y^\star | X} \to P_{Z^\star Y^\star}$, equality in \eqref{B16} particularized to $P_{Z^\star Y^\star}$ holds for $P_{Z^\star Y^\star}$-a.e. $(z,y)$, which corresponds to the equality in \eqref{property4}. To show \eqref{property4} for all $(z,y)$, write, using \eqref{d_tilted_information_of_surrogate} and \eqref{property3},
\begin{align}
R_{S,X}(d_s,d_x) 
=& \mathbb{E}\left[J_{Z^\star Y^\star} (X, \lambda_s^\star, \lambda_x^\star)\right] -\lambda_s^\star d_s - \lambda_x^\star d_x  \\
\leq&\min_{P_{Z Y| X}}  F(P_{Z Y| X}, P_{\bar{Z} \bar{Y}})\\
=&\mathbb{E}\left[J_{\bar{Z} \bar{Y}} (X, \lambda_s^\star, \lambda_x^\star)\right] -\lambda_s^\star d_s - \lambda_x^\star d_x. \label{B22}
\end{align}
For arbitrary $z \in \widehat{\mathcal{M}}$, $y \in \widehat{\mathcal{X}}$ and $0 \leq \alpha \leq 1$, let
\begin{equation}
P_{\bar{Z}\bar{Y}} = (1 - \alpha) P_{Z^\star Y^\star} + \alpha \delta_{\bar{z}\bar{y}},
\end{equation}
for which 
\begin{align}
\label{B24}
J_{\bar{Z} \bar{Y}} (x, \lambda_s^\star, \lambda_x^\star) \!=\! &- \log \big[(1 - \alpha) \exp(- J_{Z^\star Y^\star} (x, \lambda_s^\star, \lambda_x^\star))\nonumber\\ 
&+\! \alpha \exp(-\! \lambda_s^\star \bar{\textsf{d}}_s(x,\bar{z}) \!-\! \lambda_x^\star \textsf{d}_x(x,\bar{y})) \big]\!.
\end{align}
Substituting \eqref{B24} in \eqref{B22}, we obtain
\begin{align}
0 \geq& \mathbb{E}\big[J_{Z^\star Y^\star} (X, \lambda_s^\star, \lambda_x^\star) - J_{\bar{Z} \bar{Y}} (X, \lambda_s^\star, \lambda_x^\star)\big] \label{B25} \\
=& \mathbb{E}\big[\log \big[1 - \alpha + \alpha \exp(J_{Z^\star Y^\star} (X, \lambda_s^\star, \lambda_x^\star)\nonumber\\ 
&- \lambda_s^\star \bar{\textsf{d}}_s(X,\bar{z}) - \lambda_x^\star \textsf{d}_x(X,\bar{y})) \big] \big]. \label{B26}
\end{align}
The derivative of the right side of \eqref{B26} with respect to $\alpha$ evaluated at $\alpha = 0$ is
\begin{equation}
\begin{aligned}
&\mathbb{E}\big[-1 + \exp(J_{Z^\star Y^\star} (X, \lambda_s^\star, \lambda_x^\star)\\ 
&- \lambda_s^\star \bar{\textsf{d}}_s(X,\bar{z}) - \lambda_x^\star \textsf{d}_x(X,\bar{y}))\big] \log e \leq 0,
\end{aligned}
\end{equation}
where the inequality holds since otherwise \eqref{B25} would be violated for sufficiently small $\alpha$. This concludes the proof of \eqref{property4}.

\section{The Berry–Ess\'een Theorem}

The Berry–Ess\'een Theorem introduced in the following lemma plays a central role in our proof.
\begin{lemma} 
	\label{Berry_Esseen}
	(Berry–Ess\'een CLT, e.g., \cite[Theorem 13]{Kostina2012Fixed}, \cite[Chapter XVI.5, Theorem 2]{Feller1971Introduction}):
	Fix an integer $k > 0$. Let random variables $\{W_i \in \mathbb{R}\}_{i=1}^k $ be independent. Denote
	\begin{align}
	\mu_k =& \frac{1}{k}\sum_{i=1}^k \mathbb{E}[W_i],\\
	V_k =& \frac{1}{k}\sum_{i=1}^k \textrm{Var}[W_i],\\
	T_k =&\frac{1}{k}\sum_{i=1}^k \mathbb{E}\left[|W_i - \mathbb{E}[W_i]|^3\right],\\
	B_k =&6 \frac{T_k}{V_k^{3/2}}.
	\end{align}
	Then, for any real $t$,
	\begin{align}
	\label{P_Berry_Esseen}
	\left|\mathbb{P}\left[\sum_{i=1}^k W_i > k \left(\mu_k + t\sqrt{\frac{V_k}{k}}\right)\right] - Q(t)\right| \leq \frac{B_k}{\sqrt{k}},
	\end{align}
	where $Q(t)$ denotes the complementary standard Gaussian cumulative distribution function.
\end{lemma}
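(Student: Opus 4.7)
The plan is to reduce to the standard (Lyapunov/Berry–Ess\'een) statement for a sum of centered, normalized independent variables, and then establish that normalized statement via the Fourier-analytic (characteristic-function) route. First, I would set $X_i \triangleq (W_i - \mathbb{E}[W_i])/\sqrt{k V_k}$ for $i=1,\dots,k$, so that $S_k \triangleq \sum_{i=1}^k X_i$ has mean zero and variance one, and rewrite the event in \eqref{P_Berry_Esseen} as $\{S_k > t\}$. Under this reparametrization, the Lyapunov ratio $L_k \triangleq \sum_{i=1}^k \mathbb{E}[|X_i|^3]$ equals $T_k/(k^{1/2} V_k^{3/2})$, so $6 L_k = B_k/\sqrt{k}$, and the claim reduces to $\sup_{t\in\mathbb{R}} |\mathbb{P}[S_k > t] - Q(t)| \leq 6 L_k$.

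Next I would apply Esseen's smoothing inequality: for any distribution function $F$, the standard normal c.d.f.\ $\Phi = 1 - Q$, and any $T > 0$,
\begin{equation}
\sup_{x\in\mathbb{R}} |F(x) - \Phi(x)| \leq \frac{1}{\pi}\int_{-T}^{T} \left|\frac{\varphi_F(\tau) - e^{-\tau^2/2}}{\tau}\right| d\tau + \frac{C_0}{T\sqrt{2\pi}},
\end{equation}
where $\varphi_F$ is the characteristic function of $F$ and $C_0$ is an explicit absolute constant coming from $\sup_x \Phi'(x) = 1/\sqrt{2\pi}$. This reduces the task to bounding $|\varphi_{S_k}(\tau) - e^{-\tau^2/2}|$ on a suitable interval.

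The heart of the argument is a uniform bound on $|\varphi_{S_k}(\tau) - e^{-\tau^2/2}|$ obtained as follows. Because the $X_i$ are independent, $\varphi_{S_k}(\tau) = \prod_{i=1}^k \varphi_{X_i}(\tau)$. Taylor expanding each factor and using $\mathbb{E}[X_i] = 0$ gives $|\varphi_{X_i}(\tau) - 1 + \tfrac{\tau^2}{2}\sigma_i^2| \leq \tfrac{|\tau|^3}{6}\mathbb{E}[|X_i|^3]$, where $\sigma_i^2 = \mathbb{E}[X_i^2]$. A standard telescoping argument, combined with the bound $|e^{z_1}\cdots e^{z_k} - e^{w_1}\cdots e^{w_k}| \leq \sum_i |z_i - w_i|\max_j \exp(\max\{\mathrm{Re}(z_j), \mathrm{Re}(w_j)\})$ and the identity $\sum_i \sigma_i^2 = 1$, then yields an inequality of the form $|\varphi_{S_k}(\tau) - e^{-\tau^2/2}| \leq c_1 L_k |\tau|^3 e^{-\tau^2/4}$ on the range $|\tau| \leq c_2/L_k$, where $c_1, c_2$ are explicit absolute constants. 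Dividing by $|\tau|$, integrating, and choosing $T = c_2/L_k$ makes both terms in the smoothing bound of order $L_k$, and a careful accounting of the constants produces the claimed factor $6$.

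The main obstacle is the constant tracking: the generic Fourier argument gives an absolute constant bound $C \cdot L_k$ but pinning down the specific constant $6$ stated in the lemma requires care in (i) the constant in Esseen's smoothing inequality, (ii) the remainder in the cubic Taylor expansion, and (iii) the range on which the product-versus-exponential comparison is valid. I would handle this by first assuming $L_k$ is sufficiently small (if $6 L_k \geq 1$ the inequality is trivial, since the left-hand side of \eqref{P_Berry_Esseen} is at most $1$), which allows $T$ to be taken large enough that the $C_0/T$ tail of the smoothing bound is negligible compared with the integrated Fourier-difference term. Since the lemma is cited verbatim from \cite[Chapter XVI.5, Theorem 2]{Feller1971Introduction} and \cite[Theorem 13]{Kostina2012Fixed}, in the paper itself I would simply invoke those references rather than reprove them; the sketch above is the route I would follow if a self-contained proof were required.
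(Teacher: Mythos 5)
Your proposal is correct and matches the paper's treatment: the paper simply cites \cite[Theorem 13]{Kostina2012Fixed} and \cite[Chapter XVI.5, Theorem 2]{Feller1971Introduction} without reproducing a proof, and your reduction to centered, normalized summands followed by Esseen's smoothing inequality is precisely Feller's own route to the constant $6$. Your bookkeeping (that $6L_k = B_k/\sqrt{k}$ with $L_k = T_k/(k^{1/2}V_k^{3/2})$) is also accurate, so there is nothing to fix.
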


\section{Proof of the Achievability Part of Theorem \ref{theorem_dispersion}} \label{proof_achievability_theorem_Gaussian_approximation}

%

We first give a brief overview of the proof. Our proof is an asymptotic analysis of Theorem \ref{weakening_random_coding_achievability}. As implied in Theorem \ref{weakening_random_coding_achievability}, we first tackled the randomness introduced by $S^k$ for each $X^k = x^k$ and then handled the randomness of $X^k$. Tackling the randomness of the hidden source $S^k$ benefits from the following observation: given $X^k = x^k$, the newly introduced error event $\frac{1}{k}\sum_{i=1}^k \textsf{d}_x(x_i, y_i) > d_x$ exhibits no uncertainty. Consequently, given $X^k = x^k$, the newly introduced distortion constraint does not affect tackling the randomness of the remaining error event $\frac{1}{k}\sum_{i=1}^k \textsf{d}_s(S_i, z_i) > d_s$, which allows us to postpone the analysis of the constraint interaction and enables us to extend the approach in [41] to our case through appropriate modifications, as specified in Appendix \ref{proof_achievability_theorem_Gaussian_approximation}.
	
However, when tackling the randomness of $X^k$, the analysis of the interaction between the constraints can no longer be postponed and, to the best of the authors' knowledge, has not been addressed in the literature. In this step, our core target is to bound the probability that a random codeword simultaneously falls into the two distortion balls\footnote{Given $X^k = x^k$, we say a codeword $(z^k, y^k)$ simultaneously falls into the two distortion balls if $\frac{1}{k}\sum_{i=1}^k \bar{\textsf{d}}_s(x_i, z_i) \leq d_s$ and $\frac{1}{k}\sum_{i=1}^k \textsf{d}_x(x_i, y_i) \leq d_x$.} using some summation of independent random variables $\sum_{i=1}^k f_i(X_i)$, so that we can tackle the randomness of $X^k$ by the Berry-Ess\'een theorem. By selecting an appropriate auxiliary distribution $P_{Z^k Y^k}$ and employing the type counting argument, the aforementioned probability can be bounded using the summation $\sum_{i=1}^k \Lambda_{Z^\star Y^\star}(X_i, \boldsymbol{\lambda}(X^k))$, which, however, is a summation of dependent random variables. To break the dependence among them, which is brought about by the shared argument $\boldsymbol{\lambda}(X^k)$, we discuss separately according to the different constraint interaction modes and give Lemmas \ref{Lemma_confrontation} and \ref{Lemma_confrontation_02} in Appendix \ref{GA_Preliminaries} to bound $\boldsymbol{\lambda}(X^k)$. With the aid of Lemmas \ref{Lemma_confrontation} and \ref{Lemma_confrontation_02}, we derive Lemma \ref{Lemma_Lambda_logn} in Appendix \ref{GA_Preliminaries} to break the dependence among $\{ \Lambda_{Z^\star Y^\star}(X_i, \boldsymbol{\lambda}(X^k)) \}_{i=1}^k$ with an acceptable relaxation, as reflected by \eqref{achi_step_5}.




We now give the proof details. This proof consists of an analysis of the random code described in Theorem \ref{random_coding_achievability} with $M$ codewords drawn from the distribution $P_{Z^{k \star} Y^{k \star}} = P_{Z^{\star} Y^{\star}} \times \dots \times P_{Z^{\star} Y^{\star}}$, where $(Z^{\star}, Y^{\star})$ achieves the rate-distortion function $R_{S,X}(d_s,d_x)$. Theorem \ref{weakening_random_coding_achievability} gives a way to analyze the performance of the code ensemble. Specifically, let
\begin{align}
\log M =& \log \gamma + \log \log_e \sqrt{k}, \label{log_M_achi} \\
\log \gamma =& k R(d_s, d_x) + \sqrt{k \tilde{\mathcal{V}}(d_s,d_x)} Q^{-1}\left(\epsilon\right) + O(\log k), \label{log_gamma_achi}
\end{align}
for a properly chosen $O(\log k)$. We note that both $M$ and $\gamma$ are functions of $\epsilon$. By Theorem \ref{weakening_random_coding_achievability}, there exists an $(M, d_s,d_x, \epsilon')$ code with
\begin{align}
\epsilon' \leq& \mathbb{P}\left[g_{Z^{k \star} Y^{k \star}}(X^k,U) \geq \log\gamma \right] + e^{-\frac{M}{\gamma}}\\
\leq& \mathbb{P}\left[g_{Z^{k \star} Y^{k \star}}(X^k,U) \geq \log\gamma, U \in I_k, X^k \in \mathcal{T}_k \right] \nonumber \\ 
&+ \frac{2 + 2 |\mathcal{A}|}{\sqrt{k}} + e^{-\frac{M}{\gamma}}, \label{weakening_Theorem_3}
\end{align}
where $I_k \triangleq [1/\sqrt{k}, 1 - 1/\sqrt{k}]$, and $\mathcal{T}_k$ is the typical set of $X^k$ defined as
\begin{equation}
\label{definition_typical_set}
\mathcal{T}_k \triangleq \left\{x^k \in \mathcal{A}^k: \|\textrm{type}(x^k) - P_X\|^2 \leq \left|\mathcal{A}\right|\frac{\log k}{k}\right\}
\end{equation}
so that
\begin{equation}
\label{Xk_doesnot_typical}
\mathbb{P}\left[X^k \notin \mathcal{T}_k \right] \leq \frac{2 |\mathcal{A}|}{\sqrt{k}}
\end{equation}
holds by \cite[Lemma 1]{Kostina2016Nonasymptotic},\footnote{Denote by $\|\cdot\|$ the Euclidean norm.} 
and \eqref{weakening_Theorem_3} is an obvious weakening. 
The existence of an $(M, d_s,d_x, \epsilon)$ code can thus be justified by showing that $\epsilon'\leq \epsilon$. In the following, we establish $\epsilon'\leq \epsilon$ by showing that the right side of \eqref{weakening_Theorem_3} is upper-bounded by $\epsilon$.

Instead of computing the infimum in \eqref{achi_inner_problem}, we compute an upper bound to $g_{Z^{k \star} Y^{k \star}}(x^k,t)$ by choosing $P_{Z^k Y^k}$ individually for each $x^k \in \mathcal{T}_k$ and $t$. 
Write the following optimization problem:
\begin{subequations}
	\label{general_rate_distortion_function}
	\begin{align}
	R_{X; Z^\star Y^\star}(d_s,d_x) \triangleq \min_{P_{ZY|X}}\ &D(P_{ZY|X} \Vert P_{Z^\star Y^\star}| P_X) \\
	\mathrm{s.t.}\
	&\mathbb{E}\left[\bar{\textsf{d}}_s(X,Z)\right] \leq d_s, \\
	&\mathbb{E}\left[\textsf{d}_x(X,Y)\right] \leq d_x.
	\end{align}
\end{subequations}
Given $t$ and $x^k$, we let $P_{Z^k Y^k}$ be equiprobable on set $\{(z^k, y^k)\ |\ \textrm{type}(z^k,y^k|x^k) = P_{\bar{Z}^\star \bar{Y}^\star| X}\}$, where $P_{\bar{Z}^\star \bar{Y}^\star| X}$ achieves $R_{\bar{X}; Z^\star Y^\star}(d_s-\delta_s, d_x)$ with $\bar{X} \sim P_{\bar{X}} \triangleq \textrm{type}\left(x^k\right)$,
\begin{align}
	\delta_s =& \sqrt{\frac{V_{\bar{X}}}{k}} Q^{-1}\left(t - \frac{B_{\bar{X}}}{\sqrt{k}}\right),
\end{align}
$V_{\bar{X}}$ and $B_{\bar{X}}$ is the normalized variance and the Berry-Ess\'een coefficient of the sum of independent random variables $\sum_{i=1}^k \textsf{d}_s(S_i, z_i)$, and $S_i$ follows the distribution $P_{S|X=x_i}$. We next show that the $P_{Z^k Y^k}$ we choose is feasible to problem \eqref{achi_inner_problem}. By the feasibility of $P_{\bar{Z}^\star \bar{Y}^\star| X}$, we have for $P_{Z^k Y^k}$-a.e. $(z^k, y^k)$, $\sum_{i=1}^k \textsf{d}_x(x_i, y_i) = k\mathbb{E}[\textsf{d}_x(\bar{X}, \bar{Y}^{\star})] \leq k d_x$ 
and $\sum_{i=1}^k \mathbb{E}[\textsf{d}_s(S_i, z_i)| X_i=x_i] = k\mathbb{E}\left[\bar{\textsf{d}}_s(\bar{X},\bar{Z}^{\star})\right] \leq k d_s - k\delta_s$. Therefore, if $V_{\bar{X}} > 0$, by the Berry-Ess\'een theorem,
\begin{align}
&\pi(x^k,z^k,y^k)\\ 
=&\mathbb{P}\Bigg[\sum_{i=1}^k \textsf{d}_s(S_i, z_i) \!>\! k d_s\cup \sum_{i=1}^k \textsf{d}_x(X_i, y_i) \!>\! k d_x \Bigg| X^k\!=\!x^k\Bigg]\\
=&\mathbb{P}\left[\sum_{i=1}^k \textsf{d}_s(S_i, z_i) > k d_s\Bigg| X^k=x^k\right]\\
\leq& \mathbb{P}\left[\sum_{i=1}^k \textsf{d}_s(S_i, z_i) \!>\! k(\mathbb{E}\left[\bar{\textsf{d}}_s(\bar{X},\bar{Z})\right] \!+\! \delta_s)\Bigg| X^k\!=\!x^k\right]\\
\leq& t;\label{S_half}
\end{align}
if $V_{\bar{X}} = 0$, \eqref{S_half} holds trivially since then $\sum_{i=1}^k \textsf{d}_s(S_i, z_i) = kd_s -k\delta_s$, almost surely.

Observe that, for all $x^k \in \mathcal{T}_k$,
\begin{align}
	&D(P_{Z^k Y^k} \Vert P_{Z^{\star} Y^{\star}} \times \dots \times P_{Z^{\star} Y^{\star}}) \label{achi_step_1} \\
	=& k D(P_{\bar{Z}^\star \bar{Y}^\star| X} \Vert P_{Z^{\star} Y^{\star}}| P_{\bar{X}}) + O(\log k) \label{achi_step_2} \\
	=& k R_{\bar{X}; Z^\star Y^\star}(d_s-\delta_s, d_x) + O(\log k) \label{achi_step_3} \\
	=& k R_{\bar{X}; Z^\star Y^\star}(d_s, d_x) + \lambda_{\bar{X},Z^\star Y^\star,s} k \delta_s + O(\log k) \label{achi_step_4} \\
	=& \sum_{i=1}^k \jmath_{X}(x_i,d_s,d_x) + \lambda_{s, X} k \delta_s + O(\log k) \label{achi_step_5} \\
	=& \sum_{i=1}^k \jmath_{X}(x_i,d_s,d_x) \!+\! \lambda_{s, X} \sqrt{k V_X}Q^{-1}\left(t\right) \!+\! O(\log k), \label{achi_step_6}
\end{align}
where
\begin{itemize}
	\item \eqref{achi_step_2} is by type counting;
	\item \eqref{achi_step_4} is by Taylor's theorem, where $\lambda_{\bar{X},Z^\star Y^\star,s} = -\partial R_{\bar{X}; Z^\star Y^\star}(d_s, d_x)/\partial d_s$;
	\item \eqref{achi_step_5} follows from \eqref{generalized_tilted_information_property}, \eqref{J_equals_Lambda}, Lemma \ref{Lemma_Lambda_logn}, and Remark \ref{remark_core_lemma} in Appendix \ref{GA_Preliminaries};
	\item \eqref{achi_step_6} is by applying a Taylor expansion to $V_{\bar{X}}$ in some neighborhood of $P_X$ and to $Q^{-1}(\cdot)$ in some neighborhood of $t$.
\end{itemize}

Finally, following \cite{Kostina2016Nonasymptotic}, note that for scalars $\mu$, $\gamma$, and $v > 0$, we have
\begin{align}
	\int_{0}^1 1\left\{\mu + v Q^{-1}\left(t\right) > \gamma \right\} \mathrm{d}t
	= \mathbb{P} \left[\mu + v G > \gamma \right], \label{math_insight}
\end{align}
where $G \sim \mathcal{N}(0, 1)$. Combining \eqref{achi_inner_problem}, \eqref{achi_step_6} and \eqref{math_insight}, we have
\begin{align}
	&\mathbb{P}\left[g_{Z^{k \star} Y^{k \star}}(X^k,U) \geq \log\gamma, U \in I_k, X^k \in \mathcal{T}_k \right]\\
	\leq& \int_{0}^1 \mathrm{d}t \cdot \mathbb{P}\Bigg[\sum_{i=1}^k \jmath_{X}(X_i,d_s,d_x) + \lambda_{s, X} \sqrt{k V_X}Q^{-1}\left(t\right)\nonumber\\ 
	&+ O(\log k) \geq \log \gamma \Bigg]\\
	=& \mathbb{P}\left[\sum_{i=1}^k \jmath_{X}(X_i,d_s,d_x) + \lambda_{s, X} \sqrt{k V_X}G + O(\log k) \geq \log \gamma \right]. \label{JX_sum}
\end{align}
Utilizing the Berry-Ess\'een theorem and \eqref{log_gamma_achi}, we can always choose the term $O(\log k)$ such that the right side of \eqref{JX_sum} is upper-bounded by $\epsilon - (3 + 2|\mathcal{A}|)/\sqrt{k}$. Juxtaposing this with \eqref{weakening_Theorem_3} and \eqref{JX_sum}, we finally obtain $\epsilon' \leq \epsilon$.

\section{Proof of the Converse Part of Theorem \ref{theorem_dispersion}} \label{proof_converse_theorem_Gaussian_approximation}

If $\lambda_s^\star = 0$, by Proposition \ref{converse_bound_corollary2}, the converse part of Theorem \ref{theorem_dispersion} can be derived directly through the proof methodology employed for the converse part of \cite[Theorem 12]{Kostina2012Fixed}. In the following, we consider the case $\lambda_s^\star > 0$.

Our subsequent proof is adapted from the converse proof of \cite[Theorem 5]{Kostina2016Nonasymptotic}. Let
\begin{equation}
\label{log_M}
\begin{aligned}
\log M = &k R_X(d_s, d_x) + \sqrt{k \tilde{\mathcal{V}}(d_s,d_x)} Q^{-1}(\epsilon_k)\\ 
&- \frac{1}{2} \log k - \log |\mathcal{P}_{[k]}| - c |\mathcal{A}|\log k,
\end{aligned}
\end{equation}
where $\epsilon_k = \epsilon + O\left(\frac{\log k}{\sqrt{k}}\right)$, $c$ is a constant that will be specified in the sequel, and $\mathcal{P}_{[k]}$ denotes the set of all conditional $k$-types $\hat{\mathcal{S}} \times \hat{\mathcal{A}} \to \mathcal{A}$. By type counting, we have $|\mathcal{P}_{[k]}| \leq (k+1)^{|\mathcal{A}| |\hat{\mathcal{S}}|  |\hat{\mathcal{A}}|}$.

We weaken the bound in Corollary \ref{converse_bound_corollary} by choosing
\begin{gather}
P_{\!\bar{X}^k|\bar{Z}^k=z^k,\bar{Y}^k=y^k}\!(x^k)\!=\!\frac{1}{|\mathcal{P}_{[k]}|}\!\!\sum_{P_{X\!|\!ZY} \in \mathcal{P}_{[k]}}\! \prod_{i=1}^k\! P_{X|Z=z_i, Y\!=y_i} \!(x_i), \label{sum_P_X_ZY} \\
\lambda_s= k \lambda_s(x^k) = -k\frac{\partial R_{\textrm{type}(x^k)}(d_s,d_x)}{\partial d_s}\\
\lambda_x= k \lambda_x(x^k) = -k\frac{\partial R_{\textrm{type}(x^k)}(d_s,d_x)}{\partial d_x}\\
\gamma=\frac{1}{2} \log k. \label{gamma_value}
\end{gather}
By Corollary \ref{converse_bound_corollary}, any $(k, M, d_s, d_x, \epsilon')$ code with $M$ given in \eqref{log_M} must satisfy
\begin{align} 
\label{block_converse_bound}
\epsilon' \geq& \mathbb{E}\Big[\min_{z^k \in \hat{\mathcal{S}}^k, y^k \in \hat{\mathcal{A}}^k}
\mathbb{P}\Big[\imath_{\bar{X}^k|\bar{Z}^k \bar{Y}^k\Vert X^k}(X^k;z^k,y^k)\nonumber\\
&+ k \lambda_s(X^k)(\textsf{d}_s(S^k,z^k) - d_s)\nonumber\\
&+ k \lambda_x(X^k) (\textsf{d}_x(X^k,y^k) - d_x) \geq \log M + \gamma \Big| X^k \Big]\Big]\nonumber\\ 
&- \exp(-\gamma).
\end{align}

Below, we prove that the right side of \eqref{block_converse_bound} is lower-bounded by $\epsilon$ for $M$ in \eqref{log_M}, implying that the logarithm of $M$ in any $(k, M, d_s, d_x, \epsilon)$ code must be no less than the right side of \eqref{log_M}. For each triple $(x^k, z^k, y^k)$, abbreviate
\begin{align}
\textrm{type}\left(x^k\right)=&P_{\bar{X}}, \\
\textrm{type}\left(z^k, y^k|x^k\right)=&P_{\bar{Z}\bar{Y}|\bar{X}}, \\
\lambda_s(x^k)=& \lambda_{s,\bar{X}}, \\
\lambda_x(x^k)=& \lambda_{x,\bar{X}},
\end{align}
and define independent random variables
\begin{equation}
\begin{aligned}
W_i \triangleq& I(\bar{X}; \bar{Z}, \bar{Y}) + \lambda_{s, \bar{X}}(\textsf{d}_s(S_i, z_i) - d_s)\\ 
&+ \lambda_{x, \bar{X}}(\textsf{d}_x(x_i, y_i) - d_x),\ i=1,\dots,k,
\end{aligned}
\end{equation}
where $S_i \sim P_{S|X=x_i}$, $i=1,\dots,k$. Since 
\begin{gather}
\mathbb{E}\!\Big[\sum_{i=1}^k \textsf{d}_s(S_i, z_i)\Big]=k \mathbb{E}\!\left[\mathbb{E}\left[\textsf{d}_s\!(S, \bar{Z})|\bar{X}, \bar{Z}\right]\right]\!=\!k \mathbb{E}\left[\textsf{d}_s\!(S, \bar{Z})\right]\!, \\
\sum_{i=1}^k \textsf{d}_x(x_i, y_i) =k\mathbb{E}\left[\textsf{d}_x(\bar{X}, \bar{Y})\right], \\
\begin{aligned}
&\text{Var}\Big[\sum_{i=1}^k \textsf{d}_s(S_i, z_i)\Big]\\
=& k\mathbb{E}\big[\mathbb{E}\big[\big(\textsf{d}_s(S, \bar{Z})- \mathbb{E}\big[\textsf{d}_s(S, \bar{Z})|\bar{X}, \bar{Z}\big]\big)^2| \bar{X}, \bar{Z}\big]\big]\\
=& k\textrm{Var}\left[\textsf{d}_s(S, \bar{Z})|\bar{X}, \bar{Z}\right],
\end{aligned}
\end{gather}
where $P_{S\bar{X}\bar{Z}\bar{Y}} = P_{S|X}P_{\bar{X}\bar{Z}\bar{Y}}$,
in the notation of \cite[Theorem 11]{Kostina2016Nonasymptotic}, we have
\begin{align}
\mu_k(P_{\bar{Z}\bar{Y}|\bar{X}})=&I(\bar{X};\bar{Z}, \bar{Y}) + \lambda_{s, \bar{X}} \left(\mathbb{E}\left[\textsf{d}_s(S,\bar{Z})\right] - d_s\right)\nonumber \\ 
&+ \lambda_{x, \bar{X}} \left(\mathbb{E}\left[\textsf{d}_x(\bar{X},\bar{Y})\right] - d_x\right), \\
V_k(P_{\bar{Z}\bar{Y}|\bar{X}}) =&\lambda_{s,\bar{X}}^2 \textrm{Var}\left[\textsf{d}_s(S, \bar{Z})|\bar{X}, \bar{Z}\right], \\
T_k(P_{\bar{Z}\bar{Y}|\bar{X}})=& \lambda_{s,\bar{X}}^3 \mathbb{E}\!\left[\!\left|\textsf{d}_s(S, \bar{Z}) \!-\! \mathbb{E}\!\left[\textsf{d}_s(S,\bar{Z})|\bar{X}, \bar{Z}\right]\!\right|^3\!\right]\!\!.
\end{align}

Besides, we can write
\begin{align}
&\imath_{\bar{X}^k|\bar{Z}^k \bar{Y}^k\Vert X^k}(x^k;z^k,y^k)
+ k \lambda_s(x^k)(\textsf{d}_s(S^k,z^k) - d_s)\nonumber\\
&+ k \lambda_x(x^k) (\textsf{d}_x(x^k,y^k) - d_x) \label{low_1} \\
\geq&kI(\bar{X};\bar{Z}, \bar{Y}) + kD(\bar{X}\Vert X) + \lambda_{s, \bar{X}} \Big(\sum_{i=1}^k\textsf{d}_s(S_i,z_i) - k d_s\Big)\nonumber\\ 
&+ \lambda_{x, \bar{X}} \Big(\sum_{i=1}^k\textsf{d}_x(x_i,y_i) - k d_x\Big) - \log |\mathcal{P}_{[k]}| \label{lower_bound_i1} \\
=&\sum_{i=1}^k W_i + kD(\bar{X}\Vert X) - \log |\mathcal{P}_{[k]}|\\
\geq&\sum_{i=1}^k W_i - \log |\mathcal{P}_{[k]}|, \label{lower_bound_Wi}
\end{align}
where \eqref{lower_bound_i1} is by lower-bounding the sum in \eqref{sum_P_X_ZY} using the term containing $P_{\bar{X}|\bar{Z}\bar{Y}}$, which is defined through $P_{\bar{X}} P_{\bar{Z}\bar{Y}|\bar{X}}$.
Weakening \eqref{block_converse_bound} using \eqref{lower_bound_Wi}, we have
\begin{align} 
\label{block_converse_bound2}
\epsilon' \!\geq& \mathbb{E}\Big[\!\min_{z^k \in \hat{\mathcal{S}}^k, y^k \in \hat{\mathcal{A}}^k}\!
\mathbb{P}\Big[\sum_{i=1}^k W_i \geq \log M + \gamma + \log |\mathcal{P}_{[k]}| \Big| X^k  \Big]\Big]\nonumber\\ 
&- \exp(-\gamma).
\end{align}

Recall that the typical set of $x^k$, $\mathcal{T}_k$, is defined as
\begin{equation}
\label{definition_typical_set_conv}
\mathcal{T}_k \triangleq \left\{x^k \in \mathcal{A}^k: \|\textrm{type}(x^k) - P_X\|^2 \leq \left|\mathcal{A}\right|\frac{\log k}{k}\right\},
\end{equation}
where $\|\cdot\|$ denotes the Euclidean norm, which satisfies
\begin{equation}
\label{Xk_doesnot_typical_conv}
\mathbb{P}\left[X^k \notin \mathcal{T}_k \right] \leq \frac{2 |\mathcal{A}|}{\sqrt{k}}.
\end{equation}
Next, we evaluate the minimum in \eqref{block_converse_bound2} for $x^k \in \mathcal{T}_k$.

If $V_k(P_{Z^\star Y^\star|X}) = 0$, we have $\textsf{d}_s(S,z) - \bar{\textsf{d}}_s(x,z) \overset{a.s.}{=} 0$ for all $z \in \textrm{supp}(P_{Z^\star})$. Therefore, the considered problem simplifies to a noiseless two-constraint source coding problem \cite[Section VI]{Blahut1972Computation}, \cite[Problem 10.19]{Cover2006Elements}, allowing us to establish our second-order result in a manner analogous to the proof of \cite[Theorem 12]{Kostina2012Fixed}. In the following, we assume that $V_k(P_{Z^\star Y^\star|X}) > 0$. Similar to the argument in the converse proof of \cite[Theorem 5]{Kostina2016Nonasymptotic},
the conditions of \cite[Theorem 11]{Kostina2016Nonasymptotic} are satisfied by $\{W_i\}_{i=1}^k$, with $\mu_k^\star = \mu_k(P_{\bar{Z}^\star\bar{Y}^\star| \bar{X}})$ and $V_k^\star = V_k(P_{\bar{Z}^\star\bar{Y}^\star| \bar{X}})$. Therefore, denoting
\begin{equation}
\Delta_k(P_{\bar{X}}) = \log M + \gamma + \log |\mathcal{P}_{[k]}| - k R_{\bar{X}}(d_s,d_x),
\end{equation}
where $M$ and $\gamma$ are chosen in \eqref{log_M} and \eqref{gamma_value}, by \cite[Theorem 11]{Kostina2016Nonasymptotic}, we have
\begin{align}
\label{Q_lower_bound_1}
&\min_{P_{\bar{Z}\bar{Y}|\bar{X}}} \mathbb{P}\Big[\sum_{i=1}^k W_i \geq \log M + \gamma + \log |\mathcal{P}_{[k]}| \Big| \textrm{type}(X^k) = P_{\bar{X}} \Big]\nonumber\\ 
&\geq Q\Bigg(\frac{\Delta_k(P_{\bar{X}})}{ \lambda_{s,\bar{X}}\sqrt{k \textrm{Var}\left[\textsf{d}_s(S, \bar{Z}^{\star})|\bar{X}, \bar{Z}^{\star}\right]}}\Bigg) - \frac{K}{\sqrt{k}},
\end{align}
where $K > 0$ is that in \cite[Theorem 11]{Kostina2016Nonasymptotic}.

By assumption \ref{differentiability} and \eqref{Prop2_c2}, we can apply a Taylor series expansion in a neighborhood of $P_X$ to $\frac{1}{ \lambda_{s,\bar{X}}\sqrt{\textrm{Var}\left[\textsf{d}_s(S, \bar{Z}^{\star})|\bar{X}, \bar{Z}^{\star}\right]}}$. Consequently, for some scalars $a$ and $K_1 \geq 0$, we have
\begin{align}
&Q\left(\frac{\Delta_k(P_{\bar{X}})}{ \lambda_{s,\bar{X}}\sqrt{k \textrm{Var}\left[\textsf{d}_s(S, \bar{Z}^{\star})|\bar{X}, \bar{Z}^{\star}\right]}}\right)\\
\geq& Q\left(\frac{\Delta_k(P_{\bar{X}})}{ \lambda_{s,X}\sqrt{k \textrm{Var}\left[\textsf{d}_s(S, Z^\star)|X, Z^\star \right]}}\left(1 + a \sqrt{\frac{\log k}{k}}\right)\right)\\
\geq& Q\left(\frac{\Delta_k(P_{\bar{X}})}{ \lambda_{s,X}\sqrt{k \textrm{Var}\left[\textsf{d}_s(S, Z^\star)|X, Z^\star \right]}}\right) - K_1 \frac{\log k}{\sqrt{k}}, \label{Q_lower_bound1}
\end{align}
where \eqref{Q_lower_bound1} holds since
$Q(x + \xi) \geq Q(x) - \frac{|\xi|^+}{\sqrt{2 \pi}}$
and $\Delta_k(P_{\bar{X}}) = O(\sqrt{k \log k})$ for $x^k \in \mathcal{T}_k$.

Besides, for $x^k \in \mathcal{T}_k$, there exists $c > 0$ such that
\begin{align}
&R_{\bar{X}}(d_s,d_x) \nonumber \\ 
\geq& R_{X}(d_s,d_x) + \sum_{a \in \mathcal{A}}(P_{\bar{X}}(a) - P_{X}(a)) \dot{R}_X(a, d_s,d_x)\nonumber\\ 
&- c \|P_{\bar{X}} - P_X\|^2 \label{Taylor_apply} \\
=& R_{X}(d_s,d_x) + \frac{1}{k} \sum_{i=1}^k \dot{R}_X(x_i, d_s, d_x)\nonumber\\ 
&- \mathbb{E}\left[\dot{R}_X(X, d_s, d_x)\right] - c \|P_{\bar{X}} - P_X\|^2\\
=& \mathbb{E}\left[\jmath_{X}(\bar{X},d_s,d_x)\right] - c \|P_{\bar{X}} - P_X\|^2 \label{app_Lemma1} \\
\geq& \mathbb{E}\left[\jmath_{X}(\bar{X},d_s,d_x)\right] - c |\mathcal{A}| \frac{\log k}{k}, \label{R_X_bar_lower_bound}
\end{align}
where \eqref{Taylor_apply} is by Taylor's theorem, applicable because assumption \ref{differentiability} holds, \eqref{app_Lemma1} is by Lemma \ref{finite_alphabet_derivative}, and \eqref{R_X_bar_lower_bound} is by the definition \eqref{definition_typical_set_conv}. Then, by introducing independent random variable $G \sim \mathcal{N}(0, 1)$, we have
\begin{align}
&Q\left(\frac{\Delta_k(P_{\bar{X}})}{ \lambda_{s,X}\sqrt{k \textrm{Var}\left[\textsf{d}_s(S, Z^\star)|X, Z^\star \right]}}\right)\\
=& \mathbb{P}\Big[k R_{\bar{X}}(d_s,d_x) + \lambda_{s,X}\sqrt{k \textrm{Var}\left[\textsf{d}_s(S, Z^\star)|X, Z^\star \right]} G\nonumber\\ 
&\geq \log M + \gamma + \log |\mathcal{P}_{[k]}| \Big]\\
\geq& \mathbb{P}\Big[k\mathbb{E}\left[\jmath_{X}(\bar{X},d_s,d_x)\right] + \lambda_{s,X}\sqrt{k \textrm{Var}\left[\textsf{d}_s(S, Z^\star)|X, Z^\star \right]} G\nonumber\\ 
&\geq \log M + a_k\Big], \label{lower_bound_Q_2}
\end{align}
where $a_k \triangleq \gamma + \log |\mathcal{P}_{[k]}| + c |\mathcal{A}| \log k$.

Finally, we have \eqref{step1}-\eqref{step7},
\begin{figure*}[!t]
	\begin{align}
	\epsilon'
	\geq& \mathbb{E}\left[\min_{z^k \in \hat{\mathcal{S}}^k, y^k \in \hat{\mathcal{A}}^k}
	\mathbb{P}\left[\sum_{i=1}^k W_i \geq \log M + \gamma + \log |\mathcal{P}_{[k]}| \bigg| X^k  \right]\right] - \exp(-\gamma) \label{step1}\\
	=&\mathbb{E}\left[\min_{P_{\bar{Z}\bar{Y}|\bar{X}}} \mathbb{P}\left[\sum_{i=1}^k W_i \geq \log M + \gamma + \log |\mathcal{P}_{[k]}| \bigg| \textrm{type}(X^k) = P_{\bar{X}} \right]\right] - \exp(-\gamma) \label{step2}\\
	\geq&\mathbb{E}\left[\min_{P_{\bar{Z}\bar{Y}|\bar{X}}} \mathbb{P}\left[\sum_{i=1}^k W_i \geq \log M + \gamma + \log |\mathcal{P}_{[k]}| \bigg| \textrm{type}(X^k) = P_{\bar{X}} \right] \textbf{1}\left\{X^k \in \mathcal{T}_k\right\}\right] - \exp(-\gamma) \label{step3}\\
	\geq&\mathbb{E}\!\left[\mathbb{P}\!\left[k\mathbb{E}\left[\jmath_{X}\!(\bar{X},d_s,d_x)\right] \!+\! \lambda_{s,X}\!\sqrt{k \textrm{Var}\left[\textsf{d}_s(S, Z^\star)|X, Z^\star \right]} G \!\geq\! \log M \!+\! a_k\right] \!\textbf{1}\!\left\{X^k \!\in\! \mathcal{T}_k\right\}\!\right] \!-\! \frac{K}{\sqrt{k}} \!-\! K_1 \frac{\log k}{\sqrt{k}} \!-\! \exp(-\gamma) \label{step4}\\
	\geq&\mathbb{P}\!\left[\sum_{i=1}^k\jmath_{X}(X_i,d_s,d_x) \!+\! \lambda_{s,X}\sqrt{k \textrm{Var}\left[\textsf{d}_s(S, Z^\star)|X, Z^\star \right]} G \!\geq\! \log M \!+\! a_k\right] \!-\! \mathbb{P}\left[X^k \notin \mathcal{T}_k\right]\!-\! \frac{K}{\sqrt{k}} \!-\! K_1 \frac{\log k}{\sqrt{k}} \!-\! \exp(-\gamma) \label{step5}\\
	\geq&\epsilon_k - \frac{B}{\sqrt{k+1}} - \mathbb{P}\left[X^k \notin \mathcal{T}_k\right] 
	- \frac{K}{\sqrt{k}} - K_1 \frac{\log k}{\sqrt{k}} - \exp(-\gamma) \label{step6}\\
	\geq& \epsilon_k - \frac{B + 2|\mathcal{A}| + K + K_1 \log k + 1}{\sqrt{k}} \label{step7}
	\end{align}
	\hrulefill
\end{figure*}
where 
\begin{itemize}
	\item \eqref{step1} is by \eqref{block_converse_bound2};
	\item \eqref{step2} is by the observation that $\sum_{i=1}^k W_i$ remains unchanged for fixed $P_{\bar{X}\bar{Z}\bar{Y}}$;
	\item \eqref{step4} is by \eqref{Q_lower_bound_1}, \eqref{Q_lower_bound1} and \eqref{lower_bound_Q_2};
	\item \eqref{step5} is by the union bound and the definition of $\mathbb{E}\left[\jmath_{X}(\bar{X},d_s,d_x)\right]$;
	\item \eqref{step6} is by the Berry-Ess\'een theorem (Lemma \ref{Berry_Esseen}), the choice of $M$ in \eqref{log_M}, and Proposition \ref{Prop_relationship_V_tildeV}, where $B$ is that in Lemma \ref{Berry_Esseen};
	\item \eqref{step7} is by the choice of $\gamma$ in \eqref{gamma_value} and property \eqref{Xk_doesnot_typical_conv}.
\end{itemize}
Our proof completes by letting
\begin{equation}
\epsilon_k = \epsilon + \frac{B + 2|\mathcal{A}| + K + K_1 \log k + 1}{\sqrt{k}}.
\end{equation}

\section{Auxiliary Results for the Proof of the Achievability Part of Theorem \ref{theorem_dispersion}}
\label{GA_Preliminaries}

In this appendix, we derive Lemma \ref{Lemma_Lambda_logn} (and thereby Remark \ref{remark_core_lemma}) to assist the proof of the achievability part of Theorem \ref{theorem_dispersion}. Prior to proving Lemma \ref{Lemma_Lambda_logn}, we present some preliminary results that will be employed.

\begin{lemma} 
	\label{lemma_concave}
	If a concave function $f: \mathbb{R}^r \to \mathbb{R}$ and a convex subset $\mathcal{E}$ of the domain of $f$ satisfy
	\begin{itemize}
		\item [(a)] $f$ is differentiable at the boundary points of $\mathcal{E}$;
		\item [(b)] for each boundary point $\boldsymbol{x}$ of $\mathcal{E}$, there exists $t > 0$ such that $\boldsymbol{x} + t \nabla f (\boldsymbol{x}) \in \textsf{int}(\mathcal{E})$, where $\textsf{int}(\mathcal{E})$ denotes the interior of set $\mathcal{E}$,
	\end{itemize}
	we have $\arg\sup_{\boldsymbol{x}}f(\boldsymbol{x}) \subseteq \textsf{int}(\mathcal{E})$.
\end{lemma}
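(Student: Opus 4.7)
The plan is to argue by contradiction: assume that some maximizer $\boldsymbol{x}^\star$ of $f$ (over $\mathcal{E}$) lies on the boundary $\partial \mathcal{E}$ rather than in $\textsf{int}(\mathcal{E})$, and then use assumptions (a) and (b) together with concavity and convexity to exhibit a point in $\mathcal{E}$ at which $f$ strictly exceeds $f(\boldsymbol{x}^\star)$, yielding a contradiction. Condition (a) guarantees $\nabla f(\boldsymbol{x}^\star)$ is well-defined, and (b) supplies a $t>0$ for which $\boldsymbol{y}\triangleq \boldsymbol{x}^\star + t\nabla f(\boldsymbol{x}^\star)\in \textsf{int}(\mathcal{E})$.

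The argument then splits into two short cases. If $\nabla f(\boldsymbol{x}^\star)=\boldsymbol{0}$, then $\boldsymbol{y}=\boldsymbol{x}^\star$, so condition (b) forces $\boldsymbol{x}^\star\in \textsf{int}(\mathcal{E})$, contradicting the assumption that $\boldsymbol{x}^\star\in\partial\mathcal{E}$. If instead $\nabla f(\boldsymbol{x}^\star)\neq \boldsymbol{0}$, I will consider the one-dimensional restriction $g(s)\triangleq f\!\left(\boldsymbol{x}^\star + s \nabla f(\boldsymbol{x}^\star)\right)$ for $s\in[0,t]$. The function $g$ is concave (composition of $f$ with an affine map) and differentiable at $s=0$ with $g'(0)=\lVert \nabla f(\boldsymbol{x}^\star)\rVert^2>0$, so there exists $s_0\in(0,t]$ with $g(s_0)>g(0)=f(\boldsymbol{x}^\star)$.

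To close the contradiction I need the point $\boldsymbol{x}^\star + s_0 \nabla f(\boldsymbol{x}^\star)$ to actually lie in $\mathcal{E}$. This follows from the convexity of $\mathcal{E}$: since $\boldsymbol{x}^\star\in\mathcal{E}$ and $\boldsymbol{y}\in \textsf{int}(\mathcal{E})\subseteq \mathcal{E}$, the whole segment $[\boldsymbol{x}^\star,\boldsymbol{y}]$ lies in $\mathcal{E}$, and this segment parametrizes exactly the points $\boldsymbol{x}^\star+s\nabla f(\boldsymbol{x}^\star)$ for $s\in[0,t]$. Therefore $\boldsymbol{x}^\star+s_0\nabla f(\boldsymbol{x}^\star)\in\mathcal{E}$ and $f$ is strictly larger there than at $\boldsymbol{x}^\star$, contradicting optimality of $\boldsymbol{x}^\star$ and forcing $\arg\sup f \subseteq \textsf{int}(\mathcal{E})$.

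The main obstacle I anticipate is only a bookkeeping one, namely justifying that the directional derivative $g'(0)=\lVert \nabla f(\boldsymbol{x}^\star)\rVert^2$ is strictly positive and that concavity yields $g(s_0)>g(0)$ for some $s_0\in(0,t]$. This is standard once differentiability at $\boldsymbol{x}^\star$ is in hand (Taylor expansion gives $g(s)=g(0)+s\lVert\nabla f(\boldsymbol{x}^\star)\rVert^2+o(s)$, and concavity keeps $g$ above any chord), so no real technical difficulty arises beyond carefully invoking assumptions (a) and (b). If $\boldsymbol{x}^\star\notin\mathcal{E}$ (e.g., if the supremum is not attained inside $\mathcal{E}$), then $\arg\sup f$ is empty and the inclusion holds vacuously, so this edge case needs no separate treatment.
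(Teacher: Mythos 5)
You have misread the quantifier in the lemma. The statement concerns $\arg\sup_{\boldsymbol{x}} f(\boldsymbol{x})$ with $f:\mathbb{R}^r\to\mathbb{R}$, i.e.\ the \emph{unconstrained} set of maximizers over all of $\mathbb{R}^r$, not the constrained maximizers over $\mathcal{E}$. (This is also how the lemma is invoked: $\boldsymbol{\lambda}(x^k)$ is an unconstrained maximizer of $\frac{1}{k}\sum_i \Lambda_{Z^\star Y^\star}(x_i,\boldsymbol{\lambda})$, and the lemma is used to place it inside the ellipse $\mathcal{E}_{\boldsymbol{\lambda}^\star}$.) Because you treat the supremum as taken over $\mathcal{E}$, you only consider the case $\boldsymbol{x}^\star\in\partial\mathcal{E}$ and then dismiss $\boldsymbol{x}^\star\notin\mathcal{E}$ as vacuous. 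But with an unconstrained supremum, the case $\boldsymbol{x}^\star\notin\textsf{cl}(\mathcal{E})$ is exactly the nontrivial content of the lemma: one must show that a global maximizer of a concave $f$ cannot sit strictly outside $\textsf{cl}(\mathcal{E})$, and your argument never addresses it. Note also that, for an unconstrained maximizer lying on $\partial\mathcal{E}$, differentiability from (a) forces $\nabla f(\boldsymbol{x}^\star)=\boldsymbol{0}$, so the boundary case is dispatched immediately by (b); the directional-derivative analysis you carry out for $\nabla f\neq\boldsymbol{0}$ is a constrained-optimization argument that does not arise here.

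The paper closes the outside-closure case via the Euclidean projection: take $\boldsymbol{x}_0\in\arg\min_{\boldsymbol{x}\in\textsf{cl}(\mathcal{E})}\|\boldsymbol{x}-\boldsymbol{x}^\star\|$, a boundary point; show $\langle\nabla f(\boldsymbol{x}_0),\boldsymbol{x}^\star-\boldsymbol{x}_0\rangle>0$ from concavity and $f(\boldsymbol{x}_0)<f(\boldsymbol{x}^\star)$; and combine this with the projection inequality $\langle\boldsymbol{x}-\boldsymbol{x}_0,\boldsymbol{x}^\star-\boldsymbol{x}_0\rangle\le 0$ for all $\boldsymbol{x}\in\textsf{cl}(\mathcal{E})$ to conclude $\boldsymbol{x}_0+t\nabla f(\boldsymbol{x}_0)\notin\textsf{int}(\mathcal{E})$ for every $t>0$, contradicting (b). You would need something equivalent to complete your proof; as written, the gap is real.
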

\begin{proof}[Proof]
	Suppose by contradiction that there exists $\boldsymbol{x}^{\star} \in \arg\sup_{\boldsymbol{x}}f(\boldsymbol{x})$ such that $\boldsymbol{x}^{\star} \notin \textsf{int}(\mathcal{E})$. Clearly, $\boldsymbol{x}^{\star}$ is not a boundary point of $\mathcal{E}$ since otherwise we have $\nabla f (\boldsymbol{x}^{\star}) = \boldsymbol{0}$, leading to $\boldsymbol{x}^{\star} + t \nabla f (\boldsymbol{x}^{\star}) \notin \textsf{int}(\mathcal{E})$, $\forall t > 0$. Therefore, $\boldsymbol{x}^{\star} \notin \textsf{cl}(\mathcal{E})$, where $\textsf{cl}(\mathcal{E})$ denotes the closure set $\mathcal{E}$. Let $\boldsymbol{x}_0 \in \arg\min_{\boldsymbol{x}\in \textsf{cl}(\mathcal{E})}\| \boldsymbol{x} - \boldsymbol{x}^{\star} \|$, which must be a boundary point of $\mathcal{E}$. Next, we prove that $\langle \nabla f(\boldsymbol{x}_0), \boldsymbol{x}^{\star} - \boldsymbol{x}_0\rangle > 0$.
	
	Since $\boldsymbol{x}_0$ is a boundary point, $\nabla f(\boldsymbol{x}_0) \neq \boldsymbol{0}$. Thus $\boldsymbol{x}_0 \notin \arg\sup_{\boldsymbol{x}}f(\boldsymbol{x})$, leading to $f(\boldsymbol{x}_0) < f(\boldsymbol{x}^{\star})$. Juxtaposing this with the concavity of $f$, defining $g(\theta) = f(\boldsymbol{x}_0 + \theta (\boldsymbol{x}^{\star} - \boldsymbol{x}_0))$, we have $g_{+}^{\prime}(0) > 0$. $\langle \nabla f(\boldsymbol{x}_0), \boldsymbol{x}^{\star} - \boldsymbol{x}_0\rangle > 0$ is obtained by noticing that $\|\boldsymbol{x}^{\star} - \boldsymbol{x}_0\|>0$ and $\langle \nabla f(\boldsymbol{x}_0), (\boldsymbol{x}^{\star} - \boldsymbol{x}_0)/\|\boldsymbol{x}^{\star} - \boldsymbol{x}_0\|\rangle = g_{+}^{\prime}(0)$.
	
	Observe that, for all $\boldsymbol{x} \in \textsf{cl}(\mathcal{E})$, $\langle \boldsymbol{x} - \boldsymbol{x}_0, \boldsymbol{x}^{\star} - \boldsymbol{x}_0 \rangle \leq 0$, since otherwise we can always find a point on the line segment from $\boldsymbol{x}_0$ to $\boldsymbol{x}$ that belongs to $\textsf{cl}(\mathcal{E})$ (since $\textsf{cl}(\mathcal{E})$ is convex) and is closer to $\boldsymbol{x}^{\star}$ than $\boldsymbol{x}_0$, which contradicts $\boldsymbol{x}_0 \in \arg\min_{\boldsymbol{x}\in \textsf{cl}(\mathcal{E})}\| \boldsymbol{x}^{\star} - \boldsymbol{x} \|$. Juxtaposing this observation with the result $\langle \nabla f(\boldsymbol{x}_0), \boldsymbol{x}^{\star} - \boldsymbol{x}_0\rangle > 0$ obtained earlier, we have, for any $\boldsymbol{x} \in \textsf{cl}(\mathcal{E})$ and $t > 0$, $t \nabla  f(\boldsymbol{x}_0) \neq \boldsymbol{x} - \boldsymbol{x}_0$, which contradicts condition (b). Thus $\boldsymbol{x}^{\star} \in \textsf{int}(\mathcal{E})$ if $\boldsymbol{x}^{\star} \in \arg\sup_{\boldsymbol{x}}f(\boldsymbol{x})$, which completes the proof.
\end{proof}

To establish Lemmas \ref{Lemma_confrontation}, \ref{Lemma_confrontation_02}, and \ref{Lemma_Lambda_logn}, we begin by providing the definition of generalized $(\bar{\textsf{d}}_s, \textsf{d}_x)$-tilted information, along with some of its properties.

For $\lambda_s, \lambda_x \geq 0$, the generalized $(\bar{\textsf{d}}_s,\textsf{d}_x)$-tilted information is defined as 
\begin{equation}
\label{Lambda_ZY}
	\begin{aligned}
	&\Lambda_{ZY}(x, \lambda_s, \lambda_x)\\ 
	\triangleq &\log\! \frac{1}{\mathbb{E}[\exp(\lambda_s d_s \! + \! \lambda_x d_x \!-\! \lambda_s \bar{\textsf{d}}_s(x,Z) \!-\! \lambda_x \textsf{d}_x(x,Y))]}.
	\end{aligned}
\end{equation}
Similar to \cite[Section II-C]{Kostina2012Fixed}, for $(d_s,d_x) \in \textsf{int}(\mathcal{D}_{\mathrm{adm}})$, the minimum in \eqref{general_rate_distortion_function} is always achieved by a $P_{\bar{Z}^\star \bar{Y}^\star|X}$ satisfying
\begin{align}
\label{generalized_tilted_information_property}
	&\Lambda_{Z^\star Y^\star}(x, \lambda^\star_{X,Z^\star Y^\star,s}, \lambda^\star_{X,Z^\star Y^\star,x}) \nonumber \\
	=&\log \frac{\mathrm{d} P_{\bar{Z}^\star \bar{Y}^\star| X=x}}{\mathrm{d} P_{Z^\star Y^\star}}(z,y) + \lambda^\star_{X,Z^\star Y^\star,s} \bar{\textsf{d}}_s(x,z)\nonumber \\ 
	&+ \lambda^\star_{X,Z^\star Y^\star,x} \textsf{d}_x(x,y) - \lambda^\star_{X,Z^\star Y^\star,s} d_s - \lambda^\star_{X,Z^\star Y^\star,x} d_x,
\end{align}
where $\lambda^\star_{X,Z^\star Y^\star,s} = - \partial R_{X; Z^\star Y^\star}(d_s,d_x)/\partial d_s$, $\lambda^\star_{X,Z^\star Y^\star,x} = - \partial R_{X; Z^\star Y^\star}(d_s,d_x)/\partial d_x$.
Clearly, 
\begin{equation}
\label{J_equals_Lambda}
	\jmath_{X}(x,d_s,d_x) = \Lambda_{Z^{\star}Y^{\star}}(x, \lambda_s^{\star}, \lambda_x^{\star}).
\end{equation}
For nonnegative integers $n_s$ and $n_x$, define
\begin{equation}
\label{d_ZY_sx}
\begin{aligned}
&\tilde{d}_{ZY,n_s,n_x}(x, \lambda_s, \lambda_x)\\ 
\triangleq & \frac{\mathbb{E}[\bar{\textsf{d}}_s^{n_s}(x,Z)\textsf{d}_x^{n_x}(x,Y)\exp(- \lambda_s \bar{\textsf{d}}_s(x,Z) - \lambda_x \textsf{d}_x(x,Y))]}{\mathbb{E}[\exp(- \lambda_s \bar{\textsf{d}}_s(x,Z) - \lambda_x \textsf{d}_x(x,Y))]}.
\end{aligned}
\end{equation}
The expectations in \eqref{Lambda_ZY} and \eqref{d_ZY_sx} are both with respect to the unconditional distribution of $(Z,Y)$.
We have the following properties of $\Lambda_{ZY}(x, \cdot, \cdot)$:
\begin{enumerate}
	\item
For $(d_s^0, d_x^0) \in \textsf{int}(\mathcal{D}_{sx})$, we have
	\begin{align}
	\label{pro_A}
		\nabla \mathbb{E}[\Lambda_{ZY}(X, \lambda^\star_{X,ZY,s}, \lambda^\star_{X,ZY,x})] = \boldsymbol{0},
	\end{align}
	where $\lambda^\star_{X,ZY,s} = - \partial R_{X; ZY}(d_s^0,d_x^0)/\partial d_s$, $\lambda^\star_{X,ZY,x} = - \partial R_{X; ZY}(d_s^0,d_x^0)/\partial d_x$, and $\nabla(\cdot)$ denotes the gradient with respect to $\boldsymbol{\lambda} = [\lambda_s, \lambda_x]^\top$.
	
	
	\item
	\begin{align}
	\label{pro_B}
	\nabla \Lambda_{ZY}(x, \lambda_{s}, \lambda_{x})  = \tilde{\boldsymbol{d}}_{ZY,1}(x, \lambda_s, \lambda_x)
	- \boldsymbol{d},
	\end{align}
	where 
	\begin{align}
	\tilde{\boldsymbol{d}}_{ZY,1}(x, \lambda_s, \lambda_x)  
	&\triangleq \begin{bmatrix}
	\tilde{d}_{ZY,1,0}(x, \lambda_s, \lambda_x) \\
	\tilde{d}_{ZY,0,1}(x, \lambda_s, \lambda_x)
	\end{bmatrix},
	\boldsymbol{d} 
	&= \begin{bmatrix}
	d_s \\
	d_x
	\end{bmatrix}.
	\end{align}
	
	\item
	\label{Hess}
	$\Lambda_{ZY}^{\prime_{\!s}\prime_{\!s}}(x, \lambda_{s}, \lambda_{x}) = (\log e)^{-1} [(\tilde{d}_{ZY,1,0}(x, \lambda_s, \lambda_x))^2 - \tilde{d}_{ZY,2,0}(x, \lambda_s, \lambda_x)]$, $\Lambda_{ZY}^{\prime_{\!x}\prime_{\!x}}(x, \lambda_{s}, \lambda_{x}) = (\log e)^{-1}[(\tilde{d}_{ZY,0,1}(x, \lambda_s, \lambda_x))^2 - \tilde{d}_{ZY,0,2}(x, \lambda_s, \lambda_x)]$, and $\Lambda_{ZY}^{\prime_{\!s}\prime_{\!x}}(x, \lambda_{s}, \lambda_{x}) = \Lambda_{ZY}^{\prime_{\!x}\prime_{\!s}}(x, \lambda_{s}, \lambda_{x}) = (\log e)^{-1}[\tilde{d}_{ZY,1,0}(x, \lambda_s, \lambda_x) \tilde{d}_{ZY,0,1}(x, \lambda_s, \lambda_x) - \tilde{d}_{ZY,1,1}(x, \lambda_s, \lambda_x)]$, where $(\cdot)^{\prime_{\!s}}$ and $(\cdot)^{\prime_{\!x}}$ represent taking partial derivatives with respect to $\lambda_s$ and $\lambda_x$, respectively. Clearly, 
	\begin{align}
	\label{pro_C}
		H ( \Lambda_{ZY}(x, \lambda_{s}, \lambda_{x})) \preceq 0,
	\end{align}
	where $H(\cdot)$ denotes the Hessian matrix with respect to $\boldsymbol{\lambda} = [\lambda_s, \lambda_x]^\top$.
\end{enumerate}

\begin{proof}[Proof]
	We only derive \eqref{pro_A}. Other properties can be obtained by direct computation. 
	
	
	Fix $(d_s^0, d_x^0) \in \textsf{int}(\mathcal{D}_{sx})$. By analysing the KKT conditions of problems \eqref{rate_distortion_function} and \eqref{general_rate_distortion_function}, we find that functions $d_s(\lambda_s, \lambda_x)$ and $d_x(\lambda_s, \lambda_x)$, which are defined by equation system $\lambda_{s} = - \partial R_{X; ZY}(d_s,d_x)/\partial d_s$ and $\lambda_{x} = - \partial R_{X; ZY}(d_s,d_x)/\partial d_x$, are both continuous differentiable in some neighborhood of point $(\lambda^\star_{X,ZY,s}, \lambda^\star_{X,ZY,x})$.
	Then we have
	\begin{align}
		&\frac{\partial \mathbb{E}[\Lambda_{ZY}(X, \lambda_{s}, \lambda_{x})]}{\partial \lambda_s}\bigg|_{\substack{\lambda_s = \lambda^\star_{X,ZY,s}\\ \lambda_x = \lambda^\star_{X,ZY,x}}}\\
		=&\frac{\partial \substack{R_{X; Z Y}(d_s(\lambda_s, \lambda_x),d_x(\lambda_s, \lambda_x))\\+\lambda_s d_s(\lambda_s, \lambda_x) + \lambda_x d_x(\lambda_s, \lambda_x) - \lambda_s d_s^0 - \lambda_x d_x^0}}{\partial \lambda_s}\bigg|_{\substack{\lambda_s = \lambda^\star_{X,ZY,s}\\ \lambda_x = \lambda^\star_{X,ZY,x}}} \label{pro1_eq1} \\
		=&(d_s(\lambda_s, \lambda_x) - d_s^0) \bigg|_{\substack{\lambda_s = \lambda^\star_{X,ZY,s}\\ \lambda_x = \lambda^\star_{X,ZY,x}}} \label{pro1_eq2} \\ 
		=&0,
	\end{align}
	where \eqref{pro1_eq1} holds since by \eqref{generalized_tilted_information_property}, $R_{X; Z Y}(d_s(\lambda_s, \lambda_x),d_x(\lambda_s, \lambda_x)) = \mathbb{E}[\Lambda_{ZY}(X, \lambda_{s}, \lambda_{x})] + \lambda_s d_s^0 + \lambda_x d_x^0 -\lambda_s d_s(\lambda_s, \lambda_x) - \lambda_x d_x(\lambda_s, \lambda_x)$, and \eqref{pro1_eq2} holds since $\partial R_{X; Z Y}(d_s(\lambda_s, \lambda_x),d_x(\lambda_s, \lambda_x))/\partial d_s = -\lambda_s$ and $\partial R_{X; Z Y}(d_s(\lambda_s, \lambda_x),d_x(\lambda_s, \lambda_x))/\partial d_x = -\lambda_x$.
	
	Similarly, 
	\begin{align}
	\frac{\partial \mathbb{E}[\Lambda_{ZY}(X, \lambda_{s}, \lambda_{x})]}{\partial \lambda_x}\bigg|_{\substack{\lambda_s = \lambda^\star_{X,ZY,s}\\ \lambda_x = \lambda^\star_{X,ZY,x}}}
	=0.
	\end{align}

\end{proof}

\begin{remark} 
	\label{remark_equal_lambda}
	Note that problem \eqref{general_rate_distortion_function} with $P_{ZY} = P_{Z^\star Y^\star}$ shares the same KKT coefficients with problem \eqref{rate_distortion_function} when $\boldsymbol{d} \in \mathcal{D}_{\mathrm{in}}$, where $P_{Z^\star Y^\star}$ achieves $ R_{X}\left(\boldsymbol{d}\right)$. Since the KKT coefficients of the two problems can be computed by $- \nabla R_{X; Z^\star Y^\star}\left(\boldsymbol{d}\right)$ and $- \nabla R_{X}\left(\boldsymbol{d}\right)$, respectively, we have
	\begin{equation}
		\boldsymbol{\lambda}^\star = \boldsymbol{\lambda}^\star_{X;Z^\star Y^\star},
	\end{equation}
	where $\boldsymbol{\lambda}^\star = [\lambda^\star_s, \lambda^\star_x]^\top = - \nabla R_{X}\left(\boldsymbol{d}\right)$ and $\boldsymbol{\lambda}^\star_{X;Z^\star Y^\star} = [\lambda^\star_{X,Z^\star Y^\star,s}, \lambda^\star_{X,Z^\star Y^\star,x}]^\top = - \nabla R_{X; Z^\star Y^\star}\left(\boldsymbol{d}\right)$.
\end{remark}

%

Now, we are ready to give Lemma \ref{Lemma_confrontation}.
\begin{lemma} 
	\label{Lemma_confrontation}
	Assuming that assumptions \ref{memoryless_sources}-\ref{Assumption_positive_definite} hold and fixing $\boldsymbol{d} \in \textsf{int}(\mathcal{D}_{sx})$, there exist $\delta_0$, $k_0 > 0$ such that for all $0 < \delta \leq \delta_0$, $ k \geq k_0$, there exist a set $F_k \subseteq \mathcal{A}^k$ and a constant $K_1$ such that
	\begin{equation}
	\label{1_sqrtn}
		\mathbb{P}\left[X^k \notin F_k\right] \leq \frac{K_1}{\sqrt{k}},
	\end{equation}
	and for all $x^k \in F_k$,
	\begin{equation}
	\label{close_lambda}
		\|\boldsymbol{\lambda}(x^k) - \boldsymbol{\lambda}^\star\| < \delta,
	\end{equation}
	where $\boldsymbol{\lambda}^\star = - \nabla R_{X}\left(\boldsymbol{d}\right)$ and $\boldsymbol{\lambda}(x^k) = - \nabla R_{\bar{X}; Z^\star Y^\star}\left(\boldsymbol{d}\right)$ with $\bar{X} \sim \textrm{type}\left(x^k\right)$.
\end{lemma}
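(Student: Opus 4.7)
The plan is to combine a multinomial concentration bound on $\mathrm{type}(X^k)$ around $P_X$ with a local continuity argument for the Lagrange multipliers $\boldsymbol{\lambda}(\cdot)$ viewed as a functional of the empirical type. By Remark~\ref{remark_equal_lambda} applied at $P_{\bar{X}} = P_X$, the value of the map $P_{\bar{X}} \mapsto -\nabla_{\boldsymbol{d}} R_{\bar{X}; Z^\star Y^\star}(\boldsymbol{d})$ at $P_{\bar{X}} = P_X$ is exactly $\boldsymbol{\lambda}^\star$. So it suffices to show that empirical types sufficiently close to $P_X$ produce Lagrange multipliers close to $\boldsymbol{\lambda}^\star$, and that $X^k$ has an empirical type close to $P_X$ with probability at least $1 - K_1/\sqrt{k}$.

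First I would define
\[
F_k \triangleq \{x^k \in \mathcal{A}^k : \|\mathrm{type}(x^k) - P_X\| \leq c/\sqrt{k}\}
\]
for a constant $c > 0$ to be chosen later. The bound \eqref{1_sqrtn} then follows from the standard multinomial concentration inequality already invoked in \eqref{Xk_doesnot_typical} (with the tolerance scaled from $\sqrt{|\mathcal{A}|\log k/k}$ down to $c/\sqrt{k}$, which costs only a factor in $K_1$). Next I would show that the map $P_{\bar{X}} \mapsto \boldsymbol{\lambda}(x^k) = -\nabla_{\boldsymbol{d}} R_{\bar{X}; Z^\star Y^\star}(\boldsymbol{d})$ is continuous at $P_{\bar{X}} = P_X$. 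To this end, I would use the identity
\[
D(P_{ZY|\bar{X}} \Vert P_{Z^\star Y^\star} | P_{\bar{X}}) = I(\bar{X}; Z, Y) + D(P_{ZY}^{(\bar{X})} \Vert P_{Z^\star Y^\star}),
\]
where $P_{ZY}^{(\bar{X})} = \sum_{\bar{x}} P_{\bar{X}}(\bar{x}) P_{ZY|\bar{X}=\bar{x}}$, to rewrite the objective of \eqref{general_rate_distortion_function} as the objective of the ordinary rate-distortion problem $R_{\bar{X}}(\boldsymbol{d})$ plus a penalty term that is smooth in $P_{\bar{X}}$. Combined with Assumption~\ref{differentiability} (which ensures matching supports and twice continuous differentiability of $R_{\bar{X}}$ around $P_X$), this allows an application of the implicit function theorem to the KKT stationarity conditions, analogous to the argument that underlies the proof of Property~\ref{tilted_information_property} and \cite[Theorem~2.5.1]{Berger1971Rate}. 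This yields joint smoothness of $R_{\bar{X}; Z^\star Y^\star}(\boldsymbol{d})$ in $(P_{\bar{X}}, \boldsymbol{d})$ on a neighborhood of $(P_X, \boldsymbol{d})$, and in particular continuity of its gradient in $\boldsymbol{d}$ with respect to $P_{\bar{X}}$ at $P_X$.

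Assembling the pieces, given any $\delta > 0$, the continuity at $P_X$ supplies some $\eta > 0$ with $\|P_{\bar{X}} - P_X\| < \eta \Rightarrow \|\boldsymbol{\lambda}(x^k) - \boldsymbol{\lambda}^\star\| < \delta$. Picking $k_0$ so large that $c/\sqrt{k_0} < \eta$, every $x^k \in F_k$ with $k \geq k_0$ then satisfies \eqref{close_lambda}.

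The main obstacle is the continuity step. Assumption~\ref{differentiability} is phrased for $R_{\bar{X}}$ rather than for $R_{\bar{X}; Z^\star Y^\star}$, and although Remark~\ref{remark_equal_lambda} identifies the two gradients at $P_{\bar{X}} = P_X$, one must argue that this identification propagates smoothly to a neighborhood. The delicate point is verifying that the constraint-activity structure underlying $\boldsymbol{d} \in \textsf{int}(\mathcal{D}_{sx})$ (both distortion constraints tight) is preserved under small perturbations of $P_{\bar{X}}$, so that the KKT system remains regular and the implicit function theorem applies. This stability of the ``both-tight'' regime near $(P_X, \boldsymbol{d})$ follows from the openness of $\textsf{int}(\mathcal{D}_{sx})$ together with continuity of the partition boundaries in $P_{\bar{X}}$, but it needs to be spelled out explicitly; once it is in place, the remaining arguments reduce to well-known concentration and envelope-theorem techniques.
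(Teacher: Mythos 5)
Your strategy has a concrete quantitative error that would make \eqref{1_sqrtn} fail, independent of the continuity step you flag as the main obstacle. You set $F_k = \{x^k : \|\mathrm{type}(x^k) - P_X\| \leq c/\sqrt{k}\}$. But $\sqrt{k}\,(\mathrm{type}(X^k) - P_X)$ converges in distribution to a nondegenerate Gaussian (CLT), so
\[
\mathbb{P}\bigl[\|\mathrm{type}(X^k) - P_X\| > c/\sqrt{k}\bigr] \longrightarrow \mathbb{P}\bigl[\|\mathcal{N}(0,\Sigma)\| > c\bigr] > 0,
\]
a constant that does not vanish. You cannot get a $K_1/\sqrt{k}$ tail from a tolerance that shrinks at the same rate as the typical fluctuation. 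The paper's typical set \eqref{definition_typical_set} uses tolerance $\sqrt{|\mathcal{A}|\log k / k}$, and the extra $\sqrt{\log k}$ factor is precisely what buys the $O(1/\sqrt{k})$ bound via Berry--Ess\'een. This is fixable, but it must be fixed.

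The deeper issue is the one you acknowledge but under-estimate. You want to propagate Assumption~\ref{differentiability} from $R_{\bar{X}}$ (output distribution optimized over) to $R_{\bar{X};Z^\star Y^\star}$ (output distribution \emph{fixed} at the $P_X$-optimal $P_{Z^\star Y^\star}$), and then invoke the implicit function theorem on the primal KKT system. Your decomposition $D(P_{ZY|\bar{X}} \Vert P_{Z^\star Y^\star}|P_{\bar{X}}) = I(\bar{X};Z,Y) + D(P_{ZY}^{(\bar{X})}\Vert P_{Z^\star Y^\star})$ does not deliver what you claim: the penalty term involves the marginal $P_{ZY}^{(\bar{X})}$ induced by the optimization variable $P_{ZY|\bar{X}}$, not a fixed quantity smooth in $P_{\bar{X}}$ alone, so the objective is \emph{not} ``$R_{\bar{X}}(\boldsymbol{d})$ plus a smooth-in-$P_{\bar{X}}$ penalty''. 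You would also need to show that both primal constraints remain active and the KKT system stays regular under perturbation of $P_{\bar{X}}$ --- nontrivial because $\mathcal{D}_{sx}$ is defined for the $R_{S,X}$ problem, not the $R_{\bar{X};Z^\star Y^\star}$ problem, and the two are different objects when $P_{\bar{X}}\neq P_X$. None of these steps is present in your sketch.

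The paper avoids all of this by working in the dual. It uses the closed-form Gibbs expression for $\Lambda_{Z^\star Y^\star}(x,\boldsymbol{\lambda})$ (eq.~\eqref{Lambda_ZY}), which is explicitly smooth and strictly concave in $\boldsymbol{\lambda}$ with bounded Hessian, and the characterization \eqref{lambda_argsup} that $\boldsymbol{\lambda}(x^k)$ maximizes the empirical average $\frac{1}{k}\sum_i \Lambda_{Z^\star Y^\star}(x_i,\boldsymbol{\lambda})$. It then constructs $F_k$ through concentration conditions on $\frac{1}{k}\sum_i\nabla\Lambda_{Z^\star Y^\star}(x_i,\boldsymbol{\lambda}(\boldsymbol{e}_j))$ at finitely many probe directions $\boldsymbol{e}_j$ on an ellipse around $\boldsymbol{\lambda}^\star$, and uses Lemma~\ref{lemma_concave} to conclude that the maximizer $\boldsymbol{\lambda}(x^k)$ lies inside that ellipse. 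This sidesteps both the need for joint smoothness in $(P_{\bar{X}},\boldsymbol{d})$ and the implicit-function-theorem regularity of the primal KKT system, and it produces the explicit constants required by the lemma. If you want to salvage the primal continuity route, you would at a minimum need to (i) replace the tolerance in $F_k$ with $\sqrt{|\mathcal{A}|\log k/k}$, and (ii) actually prove joint $C^1$-smoothness of $R_{\bar{X};Z^\star Y^\star}(\boldsymbol{d})$ in $(P_{\bar{X}},\boldsymbol{d})$ near $(P_X,\boldsymbol{d})$ --- most naturally via the dual and the explicit Hessian of $\Lambda_{Z^\star Y^\star}$, at which point you have essentially re-derived the paper's machinery.
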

\begin{proof}[Proof]
	
We first give the proof idea. Since $\boldsymbol{\lambda}(x^k)$ and $\boldsymbol{\lambda}^\star$ are defined through the optimization problem \eqref{general_rate_distortion_function}, directly studying them is rather challenging.  Lemma \ref{lemma_concave} offers an alternative approach, allowing us to indirectly characterize the distance between $\boldsymbol{\lambda}(x^k)$ and $\boldsymbol{\lambda}^\star$ by studying their functions. As shown below, function $\frac{1}{k} \sum_{i=1}^k \Lambda_{Z^{\star}Y^{\star}}(x_i, \boldsymbol{\lambda})$ meets our needs. We construct the set $F_k$ such that for all $x^k$ in $F_k$, function $\frac{1}{k} \sum_{i=1}^k \Lambda_{Z^{\star}Y^{\star}}(x_i, \boldsymbol{\lambda})$ satisfies the two conditions in Lemma \ref{lemma_concave} on an elliptical convex region with $\boldsymbol{\lambda}^\star$ as the center point. Since it can be shown that $\boldsymbol{\lambda}(x^k) \in \underset{\boldsymbol{\lambda}}{\arg\sup}\frac{1}{k} \sum_{i=1}^k \Lambda_{Z^{\star}Y^{\star}}(x_i, \boldsymbol{\lambda})$, by leveraging Lemma \ref{lemma_concave}, $\boldsymbol{\lambda}(x^k)$ must be within the elliptical region, thus bounding the distance between $\boldsymbol{\lambda}(x^k)$ and $\boldsymbol{\lambda}^\star$. The probability that $X^k$ falls in $F_k$ can be bounded by the Berry–Ess\'een theorem.

%
%

Prior to presenting a detailed proof, we provide some definitions.
Let $\boldsymbol{d} \in \textsf{int}(\mathcal{D}_{sx})$ and denote
\begin{equation}
\boldsymbol{H}_{\boldsymbol{d}} = H\left(R_{X; Z^\star Y^\star}\left(\boldsymbol{d}\right)\right).
\end{equation}
By analysing the KKT conditions of problems \eqref{rate_distortion_function} and \eqref{general_rate_distortion_function} and assuming that assumption \ref{Assumption_positive_definite} holds, we have $\boldsymbol{H}_{\boldsymbol{d}} \succ 0$ and $R_{X; Z^\star Y^\star}\left(\cdot\right)$ possesses continuous second-order partial derivatives in some neighborhood of $\boldsymbol{d}$.
Denote the eigen-decomposition of $\boldsymbol{H}_{\boldsymbol{d}}$ as
\begin{equation}
\label{eigen_decomposition}
\boldsymbol{H}_{\boldsymbol{d}} = \boldsymbol{Q} \boldsymbol{D} \boldsymbol{Q}^{\top},
\end{equation}
where the orthogonal matrix $\boldsymbol{Q} = [\boldsymbol{q}_1, \boldsymbol{q}_2]$ and the diagonal matrix $\boldsymbol{D} = \mathrm{diag}([\beta_1, \beta_2]^{\top})$ with $\beta_1 \geq \beta_2 > 0$. Fix a sufficiently small $\Delta > 0$ and a sufficiently large integer $N$ (which will be specified in the sequel), and denote
\begin{align}
\label{lambda_e}
\boldsymbol{\lambda}(\boldsymbol{v}) = \boldsymbol{\lambda}^\star - \frac{3 \Delta}{2} \boldsymbol{H}_{\boldsymbol{d}} \boldsymbol{v},\ \boldsymbol{v} \in \mathbb{R}^2,
\end{align}
\begin{align}
\label{e_j}
\boldsymbol{e}_j =& \left[\cos\left(\angle \boldsymbol{q}_1 +  \frac{2\pi j + \pi}{4N}\right), \sin\left(\angle \boldsymbol{q}_1 + \frac{2\pi j + \pi}{4N}\right)\right]^\top, \nonumber \\  
&j = 0,\dots, 4N-1,
\end{align}
\begin{align}
	\delta = \frac{3\Delta}{2}\beta_1,
\end{align}
where $\angle \boldsymbol{q}_1 = \arccos([\boldsymbol{q}_1]_1/\|\boldsymbol{q}_1\|)$.
Finally, $F_k$ is defined as
\begin{align}
	\label{F_k}
	F_k \triangleq \bigcap_{j=0}^{4N - 1} H_{k,j},
\end{align}
where
\begin{align}
\label{main_condition}
H_{k,j}\triangleq\bigg\{&x^k \in \mathcal{A}^k: \ \Delta \cdot \textsf{abs}(\boldsymbol{Q}^{\top} \boldsymbol{e}_j) \nonumber \\ 
&\preceq
\textsf{sgn}(\boldsymbol{Q}^{\top} \boldsymbol{e}_j) \circ \left(\frac{1}{k}  \sum_{i=1}^k \boldsymbol{Q}^{\top} \nabla\Lambda_{Z^\star Y^\star}(x_i, \boldsymbol{\lambda}(\boldsymbol{e}_j)) \right) \nonumber \\ 
&\preceq
2 \Delta \cdot \textsf{abs}(\boldsymbol{Q}^{\top}\! \boldsymbol{e}_j)\!\bigg\}, \forall j \!\in\! \{0, \dots, 4N\!-\!1\}, 
\end{align}
%
and $\textsf{abs}(\cdot)$ represents the vector of absolute values of the elements in the input vector, and $\textsf{sgn}(\cdot)$ represents the vector of signs of the elements in the input vector.\footnote{\eqref{e_j} ensures that $\boldsymbol{Q}^{\top} \boldsymbol{e}_j$ has non-zero elements, $j \in \{0, \dots, 4N - 1\}$.}


We now show that \eqref{close_lambda} holds for $x^k \in F_k$. 
Define set
\begin{equation}
	\mathcal{E}_{\boldsymbol{\lambda}^{\star}} \triangleq \{\boldsymbol{\lambda}(\boldsymbol{v}): \|\boldsymbol{v}\| \leq 1\},
\end{equation}
which is an elliptical closed region centered at $\boldsymbol{\lambda}^\star$ with boundary $\mathcal{B}_{\boldsymbol{\lambda}^\star} \triangleq \{\boldsymbol{\lambda}(\boldsymbol{v}): \|\boldsymbol{v}\| = 1\}$.
Next, we prove that $\boldsymbol{\lambda}(x^k) \in \textsf{Int}(\mathcal{E}_{\boldsymbol{\lambda}^{\star}})$, $x^k \in F_k$, which naturally implies $\|\boldsymbol{\lambda}(x^k) - \boldsymbol{\lambda}^\star\| < \delta$, $x^k \in F_k$.

Based on the geometric properties of ellipse $\mathcal{B}_{\boldsymbol{\lambda}^\star}$, it can be deduced that vector $\boldsymbol{Q}\boldsymbol{P}\boldsymbol{D}\boldsymbol{P}^{\top} \boldsymbol{Q}^{\top} \boldsymbol{e}$ points in the direction of the inward-pointing normal of ellipse $\mathcal{B}_{\boldsymbol{\lambda}^\star}$ at point $\boldsymbol{e}$ satisfying $\|\boldsymbol{e}\|=1$, where the permutation matrix $\boldsymbol{P} = \left[ \begin{smallmatrix} 0 & 1 \\ 1 & 0 \end{smallmatrix} \right]$. As a consequence, given $\langle \boldsymbol{P}\boldsymbol{D}\boldsymbol{P}^{\top} \boldsymbol{Q}^{\top} \boldsymbol{e}, \boldsymbol{Q}^{\top} \boldsymbol{w} \rangle = \langle \boldsymbol{Q}\boldsymbol{P}\boldsymbol{D}\boldsymbol{P}^{\top} \boldsymbol{Q}^{\top} \boldsymbol{e}, \boldsymbol{w} \rangle$, we have for all vectors $\boldsymbol{e}$ and $\boldsymbol{w}$ that satisfy  $\|\boldsymbol{e}\| = 1$ and $\langle \boldsymbol{P}\boldsymbol{D}\boldsymbol{P}^{\top} \boldsymbol{Q}^{\top} \boldsymbol{e}, \boldsymbol{Q}^{\top} \boldsymbol{w} \rangle > 0$, there exists $t > 0$ such that
\begin{equation}
\label{hhh}
\boldsymbol{\lambda}(\boldsymbol{e}) + t\boldsymbol{w} \in \textsf{Int}(\mathcal{E}_{\boldsymbol{\lambda}^\star}).
\end{equation}

Let
\begin{align}
\label{inner_product_func}
f(\boldsymbol{v}, k) = \Big \langle \boldsymbol{P}\boldsymbol{D}\boldsymbol{P}^{\top} \boldsymbol{Q}^{\top} \boldsymbol{v}, \boldsymbol{Q}^{\top} \Big(\frac{1}{k} \sum_{i=1}^k& \nabla\Lambda_{Z^\star Y^\star}(x_i, \boldsymbol{\lambda}(\boldsymbol{v})) \Big) \Big\rangle.
\end{align}
For further analysis, we now give some properties of function $f(\boldsymbol{v}, k)$. Since the distortion measures $\textsf{d}_s$ and $\textsf{d}_x$ are both finite, the second-order partial derivatives of functions $\{\Lambda_{Z^\star Y^\star}(x, \cdot, \cdot)\}_{x \in \mathcal{A}}$, as computed in property \ref{Hess}, are bounded. Moreover, they have shared upper and lower bounds since alphabet $\mathcal{A}$ is finite.
As a consequence, treating $\frac{1}{k}  \sum_{i=1}^k \nabla\Lambda_{Z^\star Y^\star}(x_i, \boldsymbol{\lambda})$ as a function of $(\boldsymbol{\lambda}, k)$, we have the partial derivative of $\frac{1}{k}  \sum_{i=1}^k \nabla\Lambda_{Z^\star Y^\star}(x_i, \boldsymbol{\lambda})$ with respect to $\boldsymbol{\lambda}$ is bounded. Juxtaposing this with definition \eqref{lambda_e}, function $f(\boldsymbol{v}, k)$ is differentiable with respect to $\boldsymbol{v}$, and for any compact set $\mathcal{V} \subseteq \{\boldsymbol{v} \in \mathbb{R}^2: \boldsymbol{\lambda}(\boldsymbol{v}) \succeq \boldsymbol{0}\}$, the partial derivative of $f(\boldsymbol{v}, k)$ with respect to $\boldsymbol{v}$, which is also a function of $(\boldsymbol{v}, k)$, is bounded on $\mathcal{V} \times \{1,2,\dots\}$.

By checking the four vertices of the rectangular region $\{\boldsymbol{v}:\Delta \cdot \textsf{abs}(\boldsymbol{Q}^{\top} \boldsymbol{e}_j)
\preceq
\textsf{sgn}(\boldsymbol{Q}^{\top} \boldsymbol{e}_j) \circ (\boldsymbol{Q}^{\top}\boldsymbol{v})
\preceq
2 \Delta \cdot \textsf{abs}(\boldsymbol{Q}^{\top} \boldsymbol{e}_j)\}$, for all $x^k \in H_{k,j}$, we have
\begin{align}
\label{PDP_angle}
	f(\boldsymbol{e}_j, k) \geq \Delta \beta_2, \ \forall j \in \{0, \dots, 4N-1\}.
\end{align}
Note that $\mathcal{V}_{ub} \triangleq \cup_{j=0}^{4N-1}\mathcal{B}^{l_{2}}_{\sqrt{2 - 2 \cos\frac{\pi}{4N}}}(\boldsymbol{e}_j)$ with $\mathcal{B}^{l_2}_{d}(\boldsymbol{a})\triangleq \{\boldsymbol{v}: \|\boldsymbol{v} - \boldsymbol{a}\| \leq d\}$ is a compact subset of $\{\boldsymbol{v} \in \mathbb{R}^2: \boldsymbol{\lambda}(\boldsymbol{v}) \succeq \boldsymbol{0}\}$ for sufficiently small $\Delta$. Consequently, given that \eqref{PDP_angle} holds and $\Delta \beta_2$ is strictly positive, by the properties of function $f(\boldsymbol{v}, k)$ introduced before, there exists a sufficiently large $N$ such that for $j \in \{0, \dots, 4N-1\}$,
%
%
\begin{align}
\label{PDP_T}
f(\boldsymbol{v}, k) > 0, \ \forall \boldsymbol{v} \in \mathcal{B}^{l_{2}}_{\sqrt{2 - 2 \cos\frac{\pi}{4N}}}(\boldsymbol{e}_j), \ k\in \{1,2,\dots\}.
\end{align}
Note that \eqref{PDP_T} implies that the choice of $N$ is independent of $k$.
Since $\{\boldsymbol{v}: \|\boldsymbol{v}\| = 1\}\subseteq \cup_{j=0}^{4N-1}\mathcal{B}^{l_{2}}_{\sqrt{2 - 2 \cos\frac{\pi}{4N}}}(\boldsymbol{e}_j)$, considering both \eqref{PDP_angle} and \eqref{PDP_T} as well as the discussion related to them, for a sufficiently large $N$, it holds that for all $x^k \in F_k$, 
\begin{align}
\label{PDP_T_new}
f(\boldsymbol{e}, k) > 0,\ \forall \boldsymbol{e}: \| \boldsymbol{e}\|=1, \ k\in \{1,2,\dots\}.
\end{align}
Immediately, by \eqref{hhh} and its related discussion, for a sufficiently large $N$, it holds that for all $x^k \in F_k$, there exists $t > 0$ such that
\begin{equation}
\label{hhh_new}
\boldsymbol{\lambda}(\boldsymbol{e}) + t \cdot \frac{1}{k} \sum_{i=1}^k \nabla\Lambda_{Z^\star Y^\star}(x_i, \boldsymbol{\lambda}(\boldsymbol{e})) \in \textsf{Int}(\mathcal{E}_{\boldsymbol{\lambda}^\star}), \ \boldsymbol{e}: \|\boldsymbol{e}\|=1.
\end{equation}

Additionally, by property \eqref{pro_C}, $\frac{1}{k} \sum_{i=1}^k \Lambda_{Z^{\star}Y^{\star}}(x_i, \boldsymbol{\lambda})$ is concave.
By property \eqref{pro_A},
\begin{equation}
\nabla \left(\frac{1}{k} \sum_{i=1}^k \Lambda_{Z^{\star}Y^{\star}}(x_i, \boldsymbol{\lambda}(x^k))\right) = \boldsymbol{0}.
\end{equation}
As a consequence, we have
\begin{equation}
\label{lambda_argsup}
\boldsymbol{\lambda}(x^k) \in \underset{\boldsymbol{\lambda}}{\arg\sup}\frac{1}{k} \sum_{i=1}^k \Lambda_{Z^{\star}Y^{\star}}(x_i, \boldsymbol{\lambda}).
\end{equation}
Given \eqref{hhh_new}, \eqref{lambda_argsup}, the concavity of $\frac{1}{k} \sum_{i=1}^k \Lambda_{Z^{\star}Y^{\star}}(x_i, \boldsymbol{\lambda})$ and the convexity of $\mathcal{E}_{\boldsymbol{\lambda}^{\star}}$, we have $\boldsymbol{\lambda}(x^k) \in \textsf{Int}(\mathcal{E}_{\boldsymbol{\lambda}^{\star}})$, $x^k \in F_k$ using Lemma \ref{lemma_concave}.

We next show that there exits a constant $K_1$ such that \eqref{1_sqrtn} holds for $F_k$ defined by \eqref{F_k}.

Since $R_{X; Z^\star Y^\star}\left(\cdot\right)$ possesses continuous second-order partial derivatives at point $\boldsymbol{d_0}$,
by Taylor's theorem, we have as $\frac{3\Delta}{2} \boldsymbol{v} \to \boldsymbol{0}$,
\begin{align}
\label{lambda_e_o}
	\boldsymbol{\lambda}(\boldsymbol{v}) = \boldsymbol{\lambda}_{X}(\boldsymbol{v}) + o\left(\frac{3\Delta}{2} \boldsymbol{v}\right),
\end{align}
where $o\left(\frac{3\Delta}{2} \boldsymbol{v}\right)$ is a two-dimensional vector satisfying $\|o\left(\frac{3\Delta}{2} \boldsymbol{v}\right)\| = o\left(\|\frac{3\Delta}{2} \boldsymbol{v}\|\right)$, and
\begin{equation}
\boldsymbol{\lambda}_{X}(\boldsymbol{v}) = - \nabla R_{X; Z^\star Y^\star}\left(\boldsymbol{d} + \frac{3\Delta}{2} \boldsymbol{v}\right).
\end{equation}
Note that
\begin{align}
	&\mathbb{E}[\nabla\Lambda_{Z^\star Y^\star}(X, \boldsymbol{\lambda}_X(\boldsymbol{v}))]  \\
	=&\mathbb{E}[\tilde{\boldsymbol{d}}_{Z^\star Y^\star,1}(X, \boldsymbol{\lambda}_X(\boldsymbol{v}))]
	- \boldsymbol{d} \label{mean_1}\\
	=&\frac{3\Delta}{2}\boldsymbol{v},\label{mean_2}
\end{align}
where \eqref{mean_1} is by property \eqref{pro_B}, and \eqref{mean_2} is by properties \eqref{pro_A} and \eqref{pro_B}. By \eqref{lambda_e_o}, \eqref{mean_2} and the differentiability of $\mathbb{E}[\nabla\Lambda_{Z^\star Y^\star}(X, \cdot)]$, we have as $\frac{3\Delta}{2} \boldsymbol{v} \to \boldsymbol{0}$,
\begin{align}
	\mathbb{E}[\nabla\Lambda_{Z^\star Y^\star}(X, \boldsymbol{\lambda}(\boldsymbol{v}))] = \frac{3\Delta}{2}\boldsymbol{v} + o\left(\frac{3\Delta}{2} \boldsymbol{v}\right).
\end{align}
Thus, as long as $\Delta$ is small enough, we have
\begin{align}
\label{mean_in_set}
&\Delta \cdot \textsf{abs}(\boldsymbol{Q}^{\top} \boldsymbol{e}_j) \nonumber \\ 
&\prec
\textsf{sgn}(\boldsymbol{Q}^{\top} \boldsymbol{e}_j) \circ \left(\boldsymbol{Q}^{\top} \mathbb{E}[\nabla\Lambda_{Z^\star Y^\star}(X, \boldsymbol{\lambda}(\boldsymbol{e}_j))]\right) \nonumber \\ 
&\prec
2 \Delta \cdot \textsf{abs}(\boldsymbol{Q}^{\top}\! \boldsymbol{e}_j),\ \forall j \in \{0, \dots, 4N-1\}.
\end{align}
Given \eqref{mean_in_set}, by first applying the Berry–Ess\'een theorem separately to the two elements of $\frac{1}{k}  \sum_{i=1}^k \boldsymbol{Q}^{\top} \nabla\Lambda_{Z^\star Y^\star}(X_i, \boldsymbol{\lambda}(\boldsymbol{e}_j))$, and then using the union bound, we have, as $k$ is large enough, there exists a constant $L_1$ such that
\begin{align}
\label{H_k_j}
	\mathbb{P}[X^k \notin H_{k,j}] \leq \frac{L_1}{\sqrt{k}}, \ \forall j \!\in\! \{0, \dots, 4N\!-\!1\}.
\end{align}
Finally, \eqref{1_sqrtn} holds by \eqref{H_k_j} and the union bound.

\end{proof}

The counterpart of Lemma \ref{Lemma_confrontation} for $\boldsymbol{d} \in \textsf{int}(\mathcal{D}_{\bar{s}x}) \cup \textsf{int}(\mathcal{D}_{s\bar{x}}) \cup \textsf{int}(\mathcal{D}_{\bar{s}\bar{x}})$ is as follows.

\begin{lemma} 
	\label{Lemma_confrontation_02}
	Assuming that assumptions \ref{memoryless_sources}-\ref{Assumption_positive_definite} hold and fixing $\boldsymbol{d} \in \textsf{int}(\mathcal{D}_{\bar{s}x}) \cup \textsf{int}(\mathcal{D}_{s\bar{x}}) \cup \textsf{int}(\mathcal{D}_{\bar{s}\bar{x}})$, there exist $\delta_0$, $k_0 > 0$ such that for all $0 < \delta \leq \delta_0$, $ k \geq k_0$, there exist a set $F_k \subseteq \mathcal{A}^k$ and a constant $K_1$ such that
	\begin{equation}
	\label{1_sqrtn_02}
	\mathbb{P}\left[X^k \notin F_k\right] \leq \frac{K_1}{\sqrt{k}},
	\end{equation}
	and for all $x^k \in F_k$, we have
	\begin{equation}
	\label{close_lambda_02}
	\left\{
	\begin{array}{r@{\;}l}
	|\lambda_x(x^k) - \lambda_x^\star|<\delta,\, \lambda_s(x^k) = 0 & \quad \text{if} \ {\boldsymbol{d} \in \textsf{int}(\mathcal{D}_{\bar{s}x}),} \\[1em]
	|\lambda_s(x^k) - \lambda_s^\star|<\delta,\ \lambda_x(x^k) = 0 & \quad \text{if} \ {\boldsymbol{d} \in \textsf{int}(\mathcal{D}_{s\bar{x}}),} \\[1em]
	\lambda_s(x^k) = 0,\, \lambda_x(x^k) = 0 & \quad \text{if} \ {\boldsymbol{d} \in \textsf{int}(\mathcal{D}_{\bar{s}\bar{x}}),}
	\end{array}
	\right.
	\end{equation}
	where $\lambda_{s}^\star = - \partial R_{X}(d_s,d_x)/\partial d_s$, $\lambda_{x}^\star = - \partial R_{X}(d_s,d_x)/\partial d_x$, $\lambda_s(x^k) = - \partial R_{\bar{X}; Z^\star Y^\star}(d_s,d_x)/\partial d_s$, $\lambda_x(x^k) = - \partial R_{\bar{X}; Z^\star Y^\star}(d_s,d_x)/\partial d_x$, and $\bar{X} \sim \textrm{type}\left(x^k\right)$.
\end{lemma}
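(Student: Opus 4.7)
The plan is to handle each of the three cases by exploiting the strict looseness of the inactive constraints at $P_X$, which persists on a neighborhood of types by continuity, and to bound the typical set probability via standard tools. Unlike Lemma \ref{Lemma_confrontation}, the claim here involves exact equalities $\lambda_s(x^k)=0$ or $\lambda_x(x^k)=0$, so the argument hinges on preserving strict complementary slackness rather than on a Hessian-based local expansion of $R_{\bar{X};Z^\star Y^\star}(\boldsymbol{d})$ around $\boldsymbol{\lambda}^\star$.

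First, I would take $F_k \triangleq \{x^k \in \mathcal{A}^k : \|\textrm{type}(x^k) - P_X\| \leq \eta\}$ for a sufficiently small constant $\eta>0$ to be specified in the course of the proof. The standard typical-set estimate used to derive \eqref{Xk_doesnot_typical} (namely \cite[Lemma 1]{Kostina2016Nonasymptotic}) immediately yields $\mathbb{P}[X^k \notin F_k] \leq K_1/\sqrt{k}$ for some constant $K_1$ depending on $\eta$ and $|\mathcal{A}|$, so that \eqref{1_sqrtn_02} holds.

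Next, consider $\boldsymbol{d}\in \textsf{int}(\mathcal{D}_{\bar{s}x})$. By definition of $\mathcal{D}_{\bar{s}x}$, some minimizer of \eqref{rate_distortion_function} at $P_X$ satisfies $\mathbb{E}[\bar{\textsf{d}}_s(X,Z^\star)] < d_s$ strictly, and by Remark \ref{remark_equal_lambda} the same is true for problem \eqref{general_rate_distortion_function} with $P_{ZY}=P_{Z^\star Y^\star}$. Because the KL objective and the two expected distortion functionals are jointly continuous in $(P_{\bar{X}}, P_{ZY|\bar{X}})$ and the feasible set is compact, Berge's maximum theorem implies that the optimal-solution correspondence $P_{\bar{X}} \mapsto \mathcal{P}^\star(P_{\bar{X}})$ of \eqref{general_rate_distortion_function} is upper hemicontinuous, and the minimal value of $\mathbb{E}[\bar{\textsf{d}}_s(\bar{X},\bar{Z}^\star)]$ over $\mathcal{P}^\star(P_{\bar{X}})$ depends upper semicontinuously on $P_{\bar{X}}$. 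Thus there exists $\eta_0>0$ such that for every $P_{\bar{X}}$ with $\|P_{\bar{X}}-P_X\| \leq \eta_0$, some minimizer still satisfies the $S$ constraint strictly, so the KKT conditions force the associated Lagrange multiplier to vanish, yielding $\lambda_s(x^k)=0$ exactly. With $\lambda_s(x^k)=0$, problem \eqref{general_rate_distortion_function} reduces to a single-constraint problem whose optimal value is twice continuously differentiable in $\boldsymbol{d}$ at $(d_s,d_x)$ by assumption \ref{differentiability}, so $\lambda_x(x^k) = -\partial R_{\bar{X}; Z^\star Y^\star}(d_s,d_x)/\partial d_x$ depends continuously on $P_{\bar{X}}$ and equals $\lambda_x^\star$ at $P_{\bar{X}}=P_X$. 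Shrinking $\eta \leq \eta_0$ if necessary delivers $|\lambda_x(x^k)-\lambda_x^\star|<\delta$. The case $\boldsymbol{d}\in \textsf{int}(\mathcal{D}_{s\bar{x}})$ follows from an identical argument with $s$ and $x$ interchanged, and the case $\boldsymbol{d}\in \textsf{int}(\mathcal{D}_{\bar{s}\bar{x}})$ from applying the same continuity argument to both constraints simultaneously, producing $\lambda_s(x^k)=\lambda_x(x^k)=0$.

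The main obstacle is establishing that strict looseness of the relevant constraint at $P_X$ transfers to a fixed-radius (rather than shrinking) neighborhood of $P_X$ and, moreover, transfers to \emph{some} element of the (possibly non-singleton) optimal-solution set at the perturbed type, which is what the KKT argument requires to conclude vanishing of the multiplier. A Berge-type upper hemicontinuity plus compactness argument suffices, using that the constraint functionals are linear in $P_{ZY|\bar{X}}$ and the divergence objective is strictly convex on the slice where $P_{Z^\star Y^\star}$-absolute continuity holds; care must be taken to avoid boundary effects where the optimal $P_{ZY|\bar{X}}$ drifts onto $\partial \textsf{supp}(P_{Z^\star Y^\star})$, which is ruled out by the support invariance required in assumption \ref{differentiability}.
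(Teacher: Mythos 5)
Your plan inverts the order of the paper's argument: you first transfer strict looseness of the inactive constraint via a Berge/upper-hemicontinuity argument to get $\lambda_s(x^k)=0$ (resp.\ $\lambda_x(x^k)=0$), then appeal to assumption \ref{differentiability} to get continuity of the remaining multiplier. The paper instead first imports $|\lambda_s(x^k)-\lambda_s^\star|<\delta$ from the one-constraint result \cite[Lemma~4]{Kostina2012Fixed} applied to \eqref{general_rate_distortion_function_one_distortion}, and only then uses the Markov factorization \eqref{Markov} and the explicit optimizer \eqref{optimal_solution_gen} to show the other constraint stays slack. Both directions are plausible, but your version has a genuine gap on the tight-multiplier side: the invocation ``whose optimal value is twice continuously differentiable in $\boldsymbol{d}$ at $(d_s,d_x)$ by assumption \ref{differentiability}, so $\lambda_x(x^k)=-\partial R_{\bar{X};Z^\star Y^\star}(d_s,d_x)/\partial d_x$ depends continuously on $P_{\bar{X}}$'' misreads the hypothesis. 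Assumption \ref{differentiability} concerns $R_{\bar{X}}(d_s,d_x)$, the noisy rate--distortion function \eqref{rate_distortion_function}, and its smoothness \emph{in} $P_{\bar{X}}$; it says nothing about the generalized function $R_{\bar{X};Z^\star Y^\star}(d_s,d_x)$ of \eqref{general_rate_distortion_function}, whose output reference $P_{Z^\star Y^\star}$ is pinned at the $P_X$-optimum rather than retuned to $P_{\bar{X}}$. By Remark \ref{remark_equal_lambda} these two objects have matching multipliers only at $P_{\bar{X}}=P_X$, so you have not actually shown that $\lambda_s(x^k)$ is continuous in $\mathrm{type}(x^k)$. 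That continuity (together with the $O(1/\sqrt{k})$ containment probability) is precisely what the paper gets from Kostina's one-constraint lemma, and you would need to supply an analogous argument for the reduced single-constraint version of \eqref{general_rate_distortion_function}.

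A second, smaller issue: Berge's maximum theorem yields upper hemicontinuity of the argmin correspondence, and the claim that ``the minimal value of $\mathbb{E}[\bar{\textsf{d}}_s(\bar{X},\bar{Z}^\star)]$ over $\mathcal{P}^\star(P_{\bar{X}})$ depends upper semicontinuously'' does not follow from upper hemicontinuity alone; to transfer an existential strictness statement (``some minimizer satisfies the $s$-constraint strictly'') from $P_X$ to nearby $P_{\bar{X}}$ one wants lower hemicontinuity. The fix you allude to at the end—strict convexity of $D(\cdot\Vert P_{Z^\star Y^\star}\mid P_{\bar{X}})$ making $\mathcal{P}^\star(P_{\bar{X}})$ a singleton, hence continuous—is the right idea, but it must be stated as the load-bearing step, and $\eta$ must be taken small enough that $\textsf{supp}(\mathrm{type}(x^k))\supseteq\textsf{supp}(P_X)$ on $F_k$ so strict convexity actually holds on the relevant coordinates.
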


\begin{proof}[Proof]
	This proof is based on the one-constraint result \cite[Lemma 4]{Kostina2012Fixed} and several properties of problem \eqref{general_rate_distortion_function} with at least one loose constraint.
	We first consider $\boldsymbol{d} \in \textsf{int}(\mathcal{D}_{s\bar{x}})$. For $\boldsymbol{d} \in \textsf{int}(\mathcal{D}_{s\bar{x}})$, we have
	\begin{equation}
	\label{lambda_s_bar_x}
	\lambda_{x}^\star = 0.
	\end{equation}
	By the definition of $\mathcal{D}_{s\bar{x}}$ and the discussion in Remark \ref{remark_equal_lambda}, we have
	\begin{equation}
	\label{lambda_generalized_s_bar_x}
	\boldsymbol{\lambda}^\star_{X;Z^\star Y^\star} = [\lambda_s^{\star},\, 0]^{\top},
	\end{equation}
	and there exists an optimal solution $P_{\ddot{Z}^\star \ddot{Y}^\star|X}$ of problem \eqref{general_rate_distortion_function} such that
	\begin{align}
	\mathbb{E}\left[\bar{\textsf{d}}_s(X,\ddot{Z}^\star)\right] &= d_s, \label{equal_ds} \\
	\mathbb{E}\left[\textsf{d}_x(X,\ddot{Y}^\star)\right] &< d_x. \label{less_than_dx}
	\end{align}
	Additionally, the objective of problem \eqref{general_rate_distortion_function} can be rewritten as
	\begin{align}
	\label{general_rate_distortion_function_rewrite}
	&D(P_{ZY|X} \Vert P_{Z^\star Y^\star}| P_X) \nonumber \\
	=&D(P_{Z|X} \Vert P_{Z^\star}| P_X) + D(P_{Y|XZ} \Vert P_{Y^\star|Z^\star}| P_XP_{Z|X}).
	\end{align}
	Note that the first term on the right-hand side of \eqref{general_rate_distortion_function_rewrite} is only related to $P_{Z|X}$ and the second term is greater than or equal to $0$ with the equality holds if and only if $P_{Y|XZ} = P_{Y^\star|Z^\star}$. Combining this with \eqref{equal_ds} and \eqref{less_than_dx}, the optimal solution $P_{\ddot{Z}^\star \ddot{Y}^\star|X}$ must satisfy
	\begin{equation}
	\label{Markov}
	P_{\ddot{Z}^\star \ddot{Y}^\star|X} = P_{Y^\star|Z^\star} P_{\ddot{Z}^\star|X},
	\end{equation}
	where $P_{\ddot{Z}^\star|X}$ is the optimal solution of problem
	\begin{subequations}
		\label{general_rate_distortion_function_one_distortion}
		\begin{align}
		R_{X; Z^\star}(d_s) \triangleq \min_{P_{Z|X}}\ &D(P_{Z|X} \Vert P_{Z^\star}| P_X) \\
		\mathrm{s.t.}\
		&\mathbb{E}\left[\bar{\textsf{d}}_s(X,Z)\right] \leq d_s
		\end{align}
	\end{subequations}
	that satisfies \cite{Kostina2012Fixed}
	\begin{equation}
	\label{optimal_solution_gen}
	P_{\ddot{Z}^\star|X}(z|x) = \frac{\exp(-\lambda_{X;Z^\star Y^\star, s}^\star\bar{\textsf{d}}_s(x,z))P_{Z^\star}(z)}{\mathbb{E}\left[\exp(-\lambda_{X;Z^\star Y^\star, s}^\star\bar{\textsf{d}}_s(x,Z^\star))\right]}.
	\end{equation}
	Juxtaposing \eqref{lambda_generalized_s_bar_x}, \eqref{less_than_dx}, \eqref{Markov}, and \eqref{optimal_solution_gen}, we have
	\begin{align}
	\label{sum}
	\sum_{x,y,z} &\bigg(\frac{\exp(-\lambda_s^\star\bar{\textsf{d}}_s(x,z))P_{Z^\star}(z)}{\mathbb{E}\left[\exp(-\lambda_s^\star\bar{\textsf{d}}_s(x,Z^\star))\right]} \nonumber \\
	&\cdot P_X(x) P_{Y^\star|Z^\star}(y|z)\textsf{d}_x(x,y)\bigg) =\mathbb{E}\left[\textsf{d}_x(X,\ddot{Y}^\star)\right]< d_x.
	\end{align}
	
	By \cite[Lemma 4]{Kostina2012Fixed}, there exist $\delta_1$, $k_1 > 0$ such that for all $0 < \delta_2 \leq \delta_1$, $ k \geq k_1$, there exist a set $W_k \subseteq \mathcal{A}^k$ and a constant $Z_1$ such that
	\begin{equation}
	\label{2_sqrtn}
	\mathbb{P}\left[X^k \notin W_k\right] \leq \frac{Z_1}{\sqrt{k}},
	\end{equation}
	and for all $x^k \in W_k$,
	\begin{equation}
	\label{close_lambda_2}
	|\lambda_s(x^k) - \lambda_s^\star| < \delta_2,
	\end{equation}
	where $\lambda_s(x^k) = -  R_{\bar{X}; Z^\star}'(d_s) = - \partial R_{\bar{X}; Z^\star Y^\star}(d_s,d_x)/\partial d_s$ with $\bar{X} \sim \textrm{type}\left(x^k\right)$. By \eqref{sum}, we can always find a small enough $\delta_2 > 0$ such that for all $\lambda_s(x^k)$ satisfying $|\lambda_s(x^k) - \lambda_s^\star| < \delta_2$, we have
	\begin{align}
	\label{sum1}
	\sum_{x,y,z} &\bigg(\frac{\exp(-\lambda_s(x^k)\bar{\textsf{d}}_s(x,z))P_{Z^\star}(z)}{\mathbb{E}\left[\exp(-\lambda_s(x^k)\bar{\textsf{d}}_s(x,Z^\star))\right]} \nonumber \\
	&\cdot P_X(x) P_{Y^\star|Z^\star}(y|z)\textsf{d}_x(x,y)\bigg) < d_x.
	\end{align}
	Then, by \eqref{definition_typical_set} and \eqref{sum1}, for all $x^k \in \mathcal{T}_k \cap W_k$ with $W_k$ corresponding to $\delta_2$ and large enough $k$'s, we have
	\begin{align}
	\label{sum2}
	\sum_{x,y,z} &\bigg(\frac{\exp(-\lambda_s(x^k)\bar{\textsf{d}}_s(x,z))P_{Z^\star}(z)}{\mathbb{E}\left[\exp(-\lambda_s(x^k)\bar{\textsf{d}}_s(x,Z^\star))\right]} \nonumber \\
	&\cdot P_{\bar{X}}(x) P_{Y^\star|Z^\star}(y|z)\textsf{d}_x(x,y)\bigg) < d_x.
	\end{align}
	By \eqref{sum2} and the argument in \eqref{general_rate_distortion_function_rewrite}-\eqref{sum}, we have the optimal solution $P_{\dddot{Z}^\star \dddot{Y}^\star|X}$ of problem \eqref{general_rate_distortion_function} with $X = \bar{X}$ satisfies
	\begin{align}
	\label{loose_constraint}
	\mathbb{E}\left[\textsf{d}_x(\bar{X},\dddot{Y}^\star)\right] < d_x,
	\end{align}
	implying $\lambda_{x}(x^k)=0$. Note that, by \eqref{Xk_doesnot_typical}, \eqref{2_sqrtn}, and the union bound,
	\begin{equation}
	\mathbb{P}[x^k \notin \mathcal{T}_k \cap W_k] \leq \frac{Z_1 + 2 |\mathcal{A}|}{\sqrt{k}}.
	\end{equation}
	By setting $\delta_0 = \delta_2$, $F_k = \mathcal{T}_k \cap W_k$, and $K_1 = Z_1 + 2 |\mathcal{A}|$, we have Lemma \ref{Lemma_confrontation_02} holds for $\boldsymbol{d} \in \textsf{int}(\mathcal{D}_{s\bar{x}})$. The proof for $\boldsymbol{d} \in \textsf{int}(\mathcal{D}_{\bar{s}x})$ is the same as that for $\boldsymbol{d} \in \textsf{int}(\mathcal{D}_{s\bar{x}})$ and is omitted here.
	
	We now consider the case $\boldsymbol{d} \in \textsf{int}(\mathcal{D}_{\bar{s}\bar{x}})$. 
	By the definition of $\mathcal{D}_{\bar{s}\bar{x}}$ and the discussion in Remark \ref{remark_equal_lambda}, there exists an optimal solution $P_{\ddot{Z}^\star \ddot{Y}^\star|X}$ of problem \eqref{general_rate_distortion_function} such that
	\begin{align}
	\mathbb{E}\left[\bar{\textsf{d}}_s(X,\ddot{Z}^\star)\right] &< d_s, \label{less_than_ds_01} \\
	\mathbb{E}\left[\textsf{d}_x(X,\ddot{Y}^\star)\right] &< d_x. \label{less_than_dx_01}
	\end{align}
	Since the objective of problem \eqref{general_rate_distortion_function} $D(P_{ZY|X} \Vert P_{Z^\star Y^\star}| P_X) \geq 0$ with equality holds if and only if $P_{ZY|X}=P_{Z^\star Y^\star}$, we have $P_{\ddot{Z}^\star \ddot{Y}^\star|X} = P_{Z^\star Y^\star}$ and
	\begin{align}
	\sum_{x,z} P_X(x) P_{Z^\star}(z) \bar{\textsf{d}}_s(x,z) &< d_s,\\
	\sum_{x,y} P_X(x) P_{Y^\star}(y) \textsf{d}_x(x,y) &< d_x.
	\end{align}
	We can always find a large enough $k_0$ such that, for all $k > k_0$, we have for all $x^k \in \mathcal{T}_k$, 
	\begin{align}
	\sum_{x,z} P_{\bar{X}}(x) P_{Z^\star}(z) \bar{\textsf{d}}_s(x,z) &< d_s,\\
	\sum_{x,y} P_{\bar{X}}(x) P_{Y^\star}(y) \textsf{d}_x(x,y) &< d_x,
	\end{align}
	implying $	\lambda_s(x^k) = 0$ and $\lambda_x(x^k) = 0$. Finally, since $\mathbb{P}\left[X^k \notin \mathcal{T}_k \right] \leq \frac{2 |\mathcal{A}|}{\sqrt{k}}$ by \eqref{Xk_doesnot_typical}, Lemma \ref{Lemma_confrontation_02} holds for $\boldsymbol{d} \in \textsf{int}(\mathcal{D}_{\bar{s}\bar{x}})$.
\end{proof}

In light of Lemmas \ref{Lemma_confrontation} and \ref{Lemma_confrontation_02}, we have Lemma \ref{Lemma_Lambda_logn} as follows.
\begin{lemma} 
	\label{Lemma_Lambda_logn}
	Assuming that assumptions \ref{memoryless_sources}-\ref{Assumption_positive_definite} hold and fixing $\boldsymbol{d} \in \mathcal{D}_{\mathrm{in}}$, there exist constants $k_0$, $K_2$, $C_2 > 0$ such that for $k \geq k_0$,
	\begin{align}
	\label{eq_Lemma_Lambda_logn}
	&\mathbb{P}\left[\sum_{i=1}^k\! \Lambda_{Z^\star Y^\star}(X_i, \boldsymbol{\lambda}(X^k)) \!\leq\! \sum_{i=1}^k\! \Lambda_{Z^\star Y^\star}(X_i, \boldsymbol{\lambda}^{\star}) \!+\! C_2\log k \right]\nonumber \\ 
	>& 1 - \frac{K_2}{\sqrt{k}}.
	\end{align}
\end{lemma}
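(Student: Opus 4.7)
The plan is to combine the concavity of $\Lambda_{Z^\star Y^\star}(x,\cdot)$ (property \eqref{pro_C}) with Lemmas \ref{Lemma_confrontation} and \ref{Lemma_confrontation_02}, which both constrain $\boldsymbol{\lambda}(X^k)$ to a small neighborhood of $\boldsymbol{\lambda}^\star$ on a high-probability set, and a Berry--Ess\'een bound on a suitable sum of zero-mean random variables. Concretely, I would work on the intersection $E_k \triangleq \mathcal{T}_k \cap F_k \cap \{\|\boldsymbol{S}\| \le C\sqrt{k}\}$, where $\mathcal{T}_k$ is the typical set \eqref{definition_typical_set}, $F_k$ is supplied by Lemma \ref{Lemma_confrontation} or \ref{Lemma_confrontation_02} depending on which of $\mathcal{D}_{sx}, \mathcal{D}_{\bar{s}x}, \mathcal{D}_{s\bar{x}}, \mathcal{D}_{\bar{s}\bar{x}}$ contains $\boldsymbol{d}$, and $\boldsymbol{S} \triangleq \sum_{i=1}^k \nabla\Lambda_{Z^\star Y^\star}(X_i,\boldsymbol{\lambda}^\star)$. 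By \eqref{Xk_doesnot_typical}, \eqref{1_sqrtn}/\eqref{1_sqrtn_02}, and a Berry--Ess\'een bound applied coordinatewise to $\boldsymbol{S}$, a union bound gives $\mathbb{P}[X^k \notin E_k] \le K_2/\sqrt{k}$ for $C$ chosen sufficiently large.

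On $E_k$ the concavity of $\Lambda_{Z^\star Y^\star}(X_i,\cdot)$ yields the tangent-line bound
\begin{equation*}
\Lambda_{Z^\star Y^\star}(X_i,\boldsymbol{\lambda}(X^k)) \le \Lambda_{Z^\star Y^\star}(X_i,\boldsymbol{\lambda}^\star) + \nabla\Lambda_{Z^\star Y^\star}(X_i,\boldsymbol{\lambda}^\star)^\top(\boldsymbol{\lambda}(X^k)-\boldsymbol{\lambda}^\star),
\end{equation*}
and summing plus Cauchy--Schwarz gives
\begin{equation*}
\sum_{i=1}^k \Lambda_{Z^\star Y^\star}(X_i,\boldsymbol{\lambda}(X^k)) - \sum_{i=1}^k \Lambda_{Z^\star Y^\star}(X_i,\boldsymbol{\lambda}^\star) \le \|\boldsymbol{S}\|\cdot\|\boldsymbol{\lambda}(X^k)-\boldsymbol{\lambda}^\star\|.
\end{equation*}
For the second factor, I would invoke assumption \ref{differentiability}: the twice continuous differentiability of $R_{\bar{X}}(d_s,d_x)$ in $P_{\bar{X}}$ near $P_X$ makes $P_{\bar{X}} \mapsto \nabla R_{\bar{X};Z^\star Y^\star}(\boldsymbol{d})$ Lipschitz near $P_X$, so on $\mathcal{T}_k$ we get $\|\boldsymbol{\lambda}(X^k) - \boldsymbol{\lambda}^\star\| = O(\sqrt{\log k / k})$; in the loose-constraint regions the coordinates that Lemma \ref{Lemma_confrontation_02} pins to $0$ simply drop out of the difference, and the remaining coordinate is handled in the same Lipschitz way. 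Combining with $\|\boldsymbol{S}\| \le C\sqrt{k}$ bounds the right-hand side by $O(\sqrt{\log k})$, which for any fixed $C_2 > 0$ is dominated by $C_2 \log k$ once $k$ is large enough, finishing the argument.

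The main obstacle is establishing $\mathbb{E}[\nabla\Lambda_{Z^\star Y^\star}(X,\boldsymbol{\lambda}^\star)] = \boldsymbol{0}$, which is exactly what legitimizes the Berry--Ess\'een step for $\boldsymbol{S}$. In $\textsf{int}(\mathcal{D}_{sx})$ this is precisely property \eqref{pro_A}, but for $\boldsymbol{d} \in \textsf{int}(\mathcal{D}_{\bar{s}x})\cup\textsf{int}(\mathcal{D}_{s\bar{x}})\cup\textsf{int}(\mathcal{D}_{\bar{s}\bar{x}})$ the coefficient on the loose constraint equals zero in $\boldsymbol{\lambda}^\star$, so \eqref{pro_A} does not apply verbatim and one must argue differently for each partition. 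I would handle this by showing, using property \eqref{pro_B} together with the explicit form \eqref{optimal_PZYX} of the optimal backward channel, that the expected gradient component along each tight direction vanishes at $\boldsymbol{\lambda}^\star$, while the component along any loose direction is irrelevant because the corresponding entry of $\boldsymbol{\lambda}(X^k)-\boldsymbol{\lambda}^\star$ is identically zero on $F_k$ by Lemma \ref{Lemma_confrontation_02}. This region-by-region verification is the delicate step; once it is in place, the concavity plus Cauchy--Schwarz reduction and the Lipschitz bound on $\boldsymbol{\lambda}(X^k)-\boldsymbol{\lambda}^\star$ close out the proof cleanly.
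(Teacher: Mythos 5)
Your approach is genuinely different from the paper's. Where you apply the first-order tangent-line bound from concavity and then need a sharp Lipschitz estimate on $\|\boldsymbol{\lambda}(X^k)-\boldsymbol{\lambda}^\star\|$, the paper instead writes, on the event $F_k$,
\begin{equation*}
\sum_{i=1}^k \bigl[\Lambda_{Z^\star Y^\star}(x_i,\boldsymbol{\lambda}(x^k)) - \Lambda_{Z^\star Y^\star}(x_i,\boldsymbol{\lambda}^\star)\bigr]
= \sup_{\|\boldsymbol{\theta}\|<\delta}\Bigl[\boldsymbol{\theta}^\top\boldsymbol{b}(x^k) - \tfrac{1}{2}\boldsymbol{\theta}^\top\boldsymbol{H}(x^k,\boldsymbol{\xi}_k)\boldsymbol{\theta}\Bigr]
\le \tfrac{1}{2}\boldsymbol{b}(x^k)^\top\boldsymbol{H}(x^k,\boldsymbol{\xi}_k)^{-1}\boldsymbol{b}(x^k),
\end{equation*}
using a second-order Taylor expansion and then relaxing the ball constraint entirely. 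With $\|\boldsymbol{b}(x^k)\|^2 = O(k\log k)$ (Berry--Ess\'een), $\operatorname{tr}\boldsymbol{H} = O(k)$ and $\det\boldsymbol{H} = \Omega(k^2)$ (again Berry--Ess\'een applied to the Hessian terms), this gives $O(\log k)$. The crucial point is that this argument needs \emph{only} the weak localization $\|\boldsymbol{\lambda}(x^k)-\boldsymbol{\lambda}^\star\|<\delta$ supplied by Lemma~\ref{Lemma_confrontation}, plus nondegeneracy of the Hessian (assumption~\ref{Assumption_positive_definite}); the small magnitude of the effective deviation is \emph{produced by} solving the quadratic, not assumed as input.

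The gap in your proposal is precisely the step you gloss over. Lemma~\ref{Lemma_confrontation} gives $\|\boldsymbol{\lambda}(x^k)-\boldsymbol{\lambda}^\star\|<\delta$ for a fixed constant $\delta$; it does \emph{not} give $\|\boldsymbol{\lambda}(x^k)-\boldsymbol{\lambda}^\star\| = O(\sqrt{\log k/k})$. You try to manufacture the latter by asserting that assumption~\ref{differentiability} makes $P_{\bar{X}}\mapsto \nabla_{\boldsymbol{d}} R_{\bar{X};Z^\star Y^\star}(\boldsymbol{d})$ Lipschitz. But assumption~\ref{differentiability} speaks about $R_{\bar{X}}(d_s,d_x)$, which is a different object from the surrogate $R_{\bar{X};Z^\star Y^\star}(d_s,d_x)$ that actually defines $\boldsymbol{\lambda}(x^k)$ in Lemma~\ref{Lemma_confrontation}; moreover it asserts differentiability of the rate function itself in $P_{\bar{X}}$, not joint smoothness of its $\boldsymbol{d}$-gradient in $P_{\bar{X}}$, which is what a Lipschitz bound on the KKT multipliers requires. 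Establishing this would amount to an implicit-function-theorem argument on the KKT system of problem~\eqref{general_rate_distortion_function}, which the paper goes to considerable lengths (Lemmas~\ref{lemma_concave} and~\ref{Lemma_confrontation}) precisely to \emph{avoid}. Until that regularity is proved, your Cauchy--Schwarz step has nothing of the right order to multiply $\|\boldsymbol{S}\|$ against. A smaller but nonvanishing issue: for a fixed constant $C$, $\mathbb{P}[\|\boldsymbol{S}\|>C\sqrt{k}]$ tends to a positive Gaussian tail probability, not to $O(1/\sqrt{k})$; you need the threshold $C\sqrt{k\log k}$ (as in the paper's bound on $\|\boldsymbol{b}(x^k)\|^2$) to get the claimed $K_2/\sqrt{k}$. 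Your handling of the loose-constraint regions (dropping the coordinates pinned to zero by Lemma~\ref{Lemma_confrontation_02} and noting $\Lambda_{Z^\star Y^\star}(X,\boldsymbol{0})=0$ a.s.\ in $\mathcal{D}_{\bar s\bar x}$) is consistent with what the paper does there.
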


\begin{proof}[Proof]
	
	Recall that $\mathcal{D}_{\mathrm{in}} \triangleq \textsf{int}(\mathcal{D}_{sx}) \cup \textsf{int}(\mathcal{D}_{\bar{s}x}) \cup \textsf{int}(\mathcal{D}_{s\bar{x}}) \cup \textsf{int}(\mathcal{D}_{\bar{s}\bar{x}})$. For $\boldsymbol{d} \in \textsf{int}(\mathcal{D}_{\bar{s}\bar{x}})$, \eqref{eq_Lemma_Lambda_logn} is a direct consequence of Lemma \ref{Lemma_confrontation_02} since $\Lambda_{Z^\star Y^\star}(X, \boldsymbol{0}) = 0$ almost surely. For $\boldsymbol{d} \in \textsf{int}(\mathcal{D}_{\bar{s}x}) \cup \textsf{int}(\mathcal{D}_{s\bar{x}})$, by only considering $x^k \in F_k$, we have $\lambda_s(x^k) = 0$ or $\lambda_x(x^k)=0$ by Lemma \ref{Lemma_confrontation_02}, and thus, the case degenerates to that considered in \cite[Lemma 5]{Kostina2012Fixed} and can be proved similarly. In the following, we prove Lemma \ref{Lemma_Lambda_logn} for $\boldsymbol{d} \in \textsf{int}(\mathcal{D}_{sx})$ based on Lemma \ref{Lemma_confrontation}.
	
	This proof is a two-dimensional generalization of that of \cite[Lemma 5]{Kostina2012Fixed}, following similar reasoning but employing a more general treatment to yield a more general result applicable to higher dimensions. For all $x^k \in F_k$ (which is defined in \eqref{F_k}), we have
	\begin{align}
	\label{Lemma_core}
		&\sum_{i=1}^k \left[\Lambda_{Z^\star Y^\star}(x_i, \boldsymbol{\lambda}(x^k)) - \Lambda_{Z^\star Y^\star}(x_i, \boldsymbol{\lambda}^{\star})\right]\\
		=&\sup_{\|\boldsymbol{\theta}\| < \delta} \sum_{i=1}^k \left[\Lambda_{Z^\star Y^\star}(x_i, \boldsymbol{\lambda}^{\star} + \boldsymbol{\theta}) - \Lambda_{Z^\star Y^\star}(x_i, \boldsymbol{\lambda}^{\star})\right] \label{eq1} \\
		=&\sup_{\|\boldsymbol{\theta}\| < \delta} \boldsymbol{\theta}^\top \boldsymbol{b}(x^k)
		- \frac{1}{2}\boldsymbol{\theta}^\top \boldsymbol{H}(x^k, \boldsymbol{\xi}_k) \boldsymbol{\theta} \label{eq2}\\
		\leq&\sup_{\boldsymbol{\theta} \in \mathbb{R}^2} \boldsymbol{\theta}^\top \boldsymbol{b}(x^k)
		- \frac{1}{2}\boldsymbol{\theta}^\top \boldsymbol{H}(x^k, \boldsymbol{\xi}_k) \boldsymbol{\theta} \label{eq3}\\		
		=& \frac{1}{2} \boldsymbol{b}(x^k)^{\top} \left(\boldsymbol{H}(x^k, \boldsymbol{\xi}_k)\right)^{-1} \boldsymbol{b}(x^k) \label{eq4} \\
		\leq& \frac{1}{2} \big\|\boldsymbol{b}(x^k)\big\| \cdot \big\|\left(\boldsymbol{H}(x^k, \boldsymbol{\xi}_k)\right)^{-1} \boldsymbol{b}(x^k)\big\| \label{eq5} \\
		\leq& \frac{1}{2} \big\|\boldsymbol{b}(x^k)\big\|^2 \cdot \lambda_{\max}\big(\left(\boldsymbol{H}(x^k, \boldsymbol{\xi}_k)\right)^{-1}\big) \label{eq6} \\
		\leq&\frac{\big\|\boldsymbol{b}(x^k)\big\|^2 \cdot \mathrm{tr}(\boldsymbol{H}(x^k, \boldsymbol{\xi}_k))}{2\det(\boldsymbol{H}(x^k, \boldsymbol{\xi}_k))} \label{eq7}\\
		\leq&\frac{\big\|\boldsymbol{b}(x^k)\big\|^2 \cdot \mathrm{tr}(\boldsymbol{H}(x^k, \boldsymbol{\xi}_k))}{2\left(\sum_{i=1}^k\sqrt{\det(-H(\Lambda_{Z^\star Y^\star}(x_i, \boldsymbol{\lambda}^{\star} + \boldsymbol{\xi}_k)))}\right)^2} \label{eq8}\\
		\leq&\frac{\big\|\boldsymbol{b}(x^k)\big\|^2 \cdot T(x^k)}{2\left(D(x^k)\right)^2} \label{eq9}
	\end{align}
	where
	\begin{itemize}
		\item \eqref{eq1} is by \eqref{close_lambda} and \eqref{lambda_argsup};
		\item \eqref{eq2} holds for some $\|\boldsymbol{\xi}_k\| < \delta$ by Taylor's theorem, where we denoted
		\begin{align}
		\boldsymbol{b}(x^k) =& \sum_{i=1}^k \nabla\Lambda_{Z^\star Y^\star}(x_i, \boldsymbol{\lambda}^{\star}),\\
		\boldsymbol{H}(x^k, \boldsymbol{\xi}_k) =& -\sum_{i=1}^k H(\Lambda_{Z^\star Y^\star}(x_i, \boldsymbol{\lambda}^{\star} + \boldsymbol{\xi}_k)).
		\end{align}
		\item \eqref{eq4} is by the positive definiteness of $\boldsymbol{H}(x^k, \boldsymbol{\xi}_k)$;
		\item \eqref{eq5} is by the Cauchy–Schwarz inequality;
		\item \eqref{eq6} is by the compatibility of vector $l_2$ norm and spectral norm, where $\lambda_{\max}(\cdot)$ returns the maximum eigenvalue of the input matrix;
		\item \eqref{eq8} is by the Minkowski inequality;
		\item in \eqref{eq9}, we denoted
		\begin{align}
		T(x^k) =& \sum_{i=1}^k \sup_{\|\boldsymbol{\theta}\| < \delta} \textrm{tr} (-H(\Lambda_{Z^\star Y^\star}(x_i, \boldsymbol{\lambda}^{\star} + \boldsymbol{\theta}))),\\
		D(x^k) =& \sum_{i=1}^k\inf_{\|\boldsymbol{\theta}\| < \delta}\sqrt{\det(-H(\Lambda_{Z^\star Y^\star}(x_i, \boldsymbol{\lambda}^{\star} + \boldsymbol{\theta})))}.
		\end{align}
	\end{itemize}
	
	By \eqref{pro_A}, we have $\mathbb{E}[\nabla\Lambda_{Z^\star Y^\star}(X, \boldsymbol{\lambda}^{\star})] = \boldsymbol{0}$. 
Denote $V_{b1} = \textrm{Var}\big[[\nabla\Lambda_{Z^\star Y^\star}(X, \boldsymbol{\lambda}^{\star})]_1\big]$ and $V_{b2} = \textrm{Var}\big[[\nabla\Lambda_{Z^\star Y^\star}(X, \boldsymbol{\lambda}^{\star})]_2\big]$, where $[\boldsymbol{v}]_i$ denotes the $i$-th element in vector $\boldsymbol{v}$. Then, there exists a constant $K_{21}$ such that
	\begin{align}
		&\mathbb{P}\left[\big\|\boldsymbol{b}(X^k)\big\|^2 > (V_{b1} + V_{b2}) k \log_{e}k \right] \\
		\leq&\mathbb{P}\left[\big|[\boldsymbol{b}(X^k)]_1\big| \!>\! \sqrt{V_{b1} k \log_{e}k} \cup \big|[\boldsymbol{b}(X^k)]_2\big| \!>\! \sqrt{V_{b2} k \log_{e}k} \right] \\
		\leq&\frac{K_{21}}{\sqrt{k}}, \label{K_21}
	\end{align}
	where \eqref{K_21} is by the union bound and the argument for \cite[(326)-(329)]{Kostina2012Fixed}.
	\begin{align}
		Z_t =& \sup_{\|\boldsymbol{\theta}\| < \delta} \textrm{tr} (-H(\Lambda_{Z^\star Y^\star}(X, \boldsymbol{\lambda}^{\star} + \boldsymbol{\theta}))), \\
		Z_d =& \inf_{\|\boldsymbol{\theta}\| < \delta}\sqrt{\det(-H(\Lambda_{Z^\star Y^\star}(X, \boldsymbol{\lambda}^{\star} + \boldsymbol{\theta})))},\\
		\mu_t =& \mathbb{E}\left[\textrm{tr} (-H(\Lambda_{Z^\star Y^\star}(X, \boldsymbol{\lambda}^{\star})))\right],\\
		\mu_d =& \mathbb{E}\left[\sqrt{\det(-H(\Lambda_{Z^\star Y^\star}(X, \boldsymbol{\lambda}^{\star})))}\right].
	\end{align}
	For small enough $\delta$, we have $\mathbb{E}[Z_t] \leq \frac{5\mu_t}{4}$ and $\mathbb{E}[Z_d] \geq \frac{3\mu_d}{4}$. As a consequence, by the Berry-Ess\'een theorem, there exist constants $K_{22}$ and $K_{23}$ such that
	\begin{align}
		\mathbb{P}\left[T(X^k) \!>\! k \frac{3 \mu_t}{2} \right] \leq \mathbb{P}\left[T(X^k) \!>\! k \left(\mathbb{E}[Z_t] + \frac{\mu_t}{4}\right) \right] \leq& \frac{K_{22}}{\sqrt{k}}, \label{K_22} \\
		\mathbb{P}\left[D(X^k) \!<\! k \frac{\mu_d}{2} \right] \leq \mathbb{P}\left[D(X^k) \!<\! k \left(\mathbb{E}[Z_d] - \frac{\mu_d}{4} \right)\right] \leq& \frac{K_{23}}{\sqrt{k}}. \label{K_23}
	\end{align}
	Finally, denote $G_k = \{x^k: \big\|\boldsymbol{b}(X^k)\big\|^2 \leq (V_{b1} + V_{b2}) k \log_{e}k\}$, $J_k = \{x^k: T(X^k) \leq k \frac{3 \mu_t}{2}\}$, $O_k = \{x^k: D(X^k) \geq k \frac{\mu_d}{2}\}$ and $g(x^k) = \sum_{i=1}^k \left(\Lambda_{Z^\star Y^\star}(x_i, \boldsymbol{\lambda}(x^k)) - \Lambda_{Z^\star Y^\star}(x_i, \boldsymbol{\lambda}^{\star})\right)$. Then, let $C_2 = \frac{3\mu_t (V_{b1} + V_{b2})}{\mu_d^2 \log e}$, we have
	\begin{align}
		&\mathbb{P}\left[g(X^k) > C_2\log k \right]\\
		=&1 - \mathbb{P}\left[g(X^k) \leq C_2\log k \right]\\
		\leq& 1 - \mathbb{P}\left[X^k \in F_k\cap G_k \cap J_k \cap Q_k \right] \label{final_lemma5_01} \\
		\leq& \frac{K_1 + K_{21} + K_{22} + K_{23}}{\sqrt{k}}, \label{final_lemma5}
	\end{align}
	where \eqref{final_lemma5_01} is by \eqref{eq9} and the definitions of $G_k$, $J_k$ and $O_k$, and \eqref{final_lemma5} is by \eqref{1_sqrtn}, \eqref{K_21}, \eqref{K_22}, \eqref{K_23} and the union bound. Finally, \eqref{eq_Lemma_Lambda_logn} is obtained by letting $K_2 = K_1 + K_{21} + K_{22} + K_{23}$.
\end{proof}

\begin{remark} 
	\label{remark_core_lemma}
	By making slight adaptations to the proof of Lemma 5, we can show that Lemma \ref{Lemma_Lambda_logn} still holds after replacing \eqref{eq_Lemma_Lambda_logn} with
		\begin{align}
	\label{eq_Lemma_Lambda_logn_pro}
	&\mathbb{P}\Bigg[\sum_{i=1}^k \! \Big(\Lambda_{Z^\star Y^\star}(X_i, \boldsymbol{\lambda}(X^k)) + \boldsymbol{\lambda}^{\!\top}\!(X^k)\boldsymbol{r}\Big)\nonumber \\ 
	&\ \ \ \leq \sum_{i=1}^k \Big(\Lambda_{Z^\star Y^\star}(X_i, \boldsymbol{\lambda}^{\star}) + (\boldsymbol{\lambda}^{\star})^{\!\top} \boldsymbol{r}\Big) + C_2\log k \Bigg]\nonumber \\ 
	>& 1 - \frac{K_2}{\sqrt{k}},
	\end{align}
	where $\boldsymbol{r}$ satisfies $\displaystyle{\|\boldsymbol{r}\| = O\left(\sqrt{\frac{\log k}{k}}\right)}$.
\end{remark}

\section{Auxiliary Results for the Proof of the Converse Part of Theorem \ref{theorem_dispersion}}
\label{Auxiliary_Results}

We consider $R_X(d_s,d_x) = R_{P_X}(d_s,d_x)$ as a function of the $|\mathcal{X}|$-dimensional probability vector $P_X$, and define
\begin{equation}
R_{Q_X}(d_s,d_x) \triangleq R_{P_{\bar{X}}}(d_s,d_x),
\end{equation}
where $Q_X$ is a nonnegative $|\mathcal{M}|$-dimensional vector which may not be a probability vector, and $P_{\bar{X}}(x) = Q_X(x)/\sum_{x' \in \mathcal{X}}Q_X(x')$. Then, for each $a \in \mathcal{X}$, define the partial derivatives of $R_{P_X}(d_s,d_x)$ with respect to $P_X(a)$ as
\begin{equation}
\dot{R}_X(a, d_s,d_x) \triangleq \frac{\partial}{\partial Q_{X}(a)}R_{Q_X}(d_s,d_x)\bigg|_{Q_X=P_{X}}.
\end{equation}


\begin{lemma} 
	\label{finite_alphabet_derivative}
	Fix $(d_s,d_x) \in \mathcal{D}_{\mathrm{in}}$. Assume that the alphabet $\mathcal{X}$ is finite and for all $P_{\bar{X}}$ in some neighborhood of $P_X$, $\textsf{supp}(P_{\bar{Z}^\star \bar{Y}^\star}) = \textsf{supp}(P_{Z^\star Y^\star})$, where $P_{\bar{Z}^\star \bar{Y}^\star}$ achieves $R_{\bar{X}}(d_s,d_x)$.
	Then
	\begin{align}
	\dot{R}_X(a, d_s,d_x) =& \jmath_{X}(a,d_s,d_x) - R_X(d_s,d_x), \label{Prop2_c1} \\
	\textrm{Var}\left[\dot{R}_X(X, d_s,d_x)\right] =& \textrm{Var}\left[\jmath_{X}(X,d_s,d_x)\right]. \label{Prop2_c2}
	\end{align}
	
\end{lemma}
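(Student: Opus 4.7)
The plan is to reduce \eqref{Prop2_c1} to computing $\partial R_{P}(d_s,d_x)/\partial P(a)$ viewed as a derivative on probability vectors, via an envelope‑theorem argument rooted in Property \ref{tilted_information_property}. Since $R_{Q_X}(d_s,d_x) = R_{Q_X/S}(d_s,d_x)$ with $S=\sum_{x'}Q_X(x')$, the chain rule evaluated at $Q_X = P_X$ gives
\begin{align}
\dot{R}_X(a,d_s,d_x) =& \left.\frac{\partial R_{P}(d_s,d_x)}{\partial P(a)}\right|_{P=P_X} \nonumber \\
&- \sum_{x} P_X(x)\left.\frac{\partial R_{P}(d_s,d_x)}{\partial P(x)}\right|_{P=P_X}.
\end{align}
Any $a$-independent constant appearing in $\partial R_P/\partial P(a)$ is wiped out by this subtraction, so it suffices to identify $\partial R_P/\partial P(a)|_{P=P_X}$ as $\jmath_{X}(a,d_s,d_x)$ modulo an $a$-independent constant; the weighted sum then collapses, via \eqref{property3}, to $\sum_x P_X(x)\jmath_{X}(x,d_s,d_x)=R_X(d_s,d_x)$, yielding \eqref{Prop2_c1}. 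Identity \eqref{Prop2_c2} follows immediately because $\dot R_X(X,d_s,d_x)$ and $\jmath_{X}(X,d_s,d_x)$ will then differ only by the deterministic constant $R_X(d_s,d_x)$ and variance is translation-invariant.

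For the envelope step I would use the Lagrangian $L(P,P_{ZY|X}) = I_P(X;Z,Y) + \lambda_s^\star(\mathbb{E}_P[\bar{\textsf{d}}_s(X,Z)] - d_s) + \lambda_x^\star(\mathbb{E}_P[\textsf{d}_x(X,Y)] - d_x)$ whose minimum in $P_{ZY|X}$ equals $R_P(d_s,d_x)$ by \eqref{property1}. Under the support-preservation hypothesis, the active-constraint structure of the KKT system is locally constant near $P_X$, so the implicit function theorem delivers continuously differentiable dependence of $(\lambda_s^\star,\lambda_x^\star,P_{Z^\star Y^\star|X})$ on $P$; Danskin's envelope theorem then reduces $\partial R_P/\partial P(a)$ to the direct partial of $L$ at the optimum, with $P_{ZY|X}$ frozen at $P_{Z^\star Y^\star|X}$ and the multipliers frozen at $\lambda_s^\star,\lambda_x^\star$. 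Accounting for the explicit $P$-dependence of $I_P$ through $P_{ZY}(z,y)=\sum_x P(x)P_{Z^\star Y^\star|X}(z,y|x)$ (which contributes a harmless $-1$) and through the two expectations yields
\begin{align}
\left.\frac{\partial R_P(d_s,d_x)}{\partial P(a)}\right|_{P=P_X}
=& \sum_{z,y} P_{Z^\star Y^\star|X=a}(z,y)\bigg[\log\frac{P_{Z^\star Y^\star|X=a}(z,y)}{P_{Z^\star Y^\star}(z,y)} \nonumber \\
&+ \lambda_s^\star\bar{\textsf{d}}_s(a,z) + \lambda_x^\star\textsf{d}_x(a,y)\bigg] + c,
\end{align}
with $c=\lambda_s^\star d_s + \lambda_x^\star d_x - 1$ an $a$-independent constant. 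By identity \eqref{d_tilted_information_of_surrogate} of Property \ref{tilted_information_property}, the bracketed quantity equals $\jmath_{X}(a,d_s,d_x) + \lambda_s^\star d_s + \lambda_x^\star d_x$ for $P_{Z^\star Y^\star|X=a}$-a.e.\ $(z,y)$ (absolute continuity of $P_{Z^\star Y^\star|X=a}$ with respect to $P_{Z^\star Y^\star}$ is exactly the support-preservation assumption), so the partial derivative equals $\jmath_{X}(a,d_s,d_x)$ plus an $a$-independent constant, as required.

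The main obstacle I anticipate is the rigorous justification of Danskin's theorem in the presence of inequality constraints: one needs continuous differentiability of the optimal Lagrange multipliers and optimal test channel in $P$ on a neighborhood of $P_X$. The support-preservation hypothesis is tailored precisely to this difficulty, since by pinning down the active-set pattern of the KKT system it lets the implicit function theorem supply the necessary smoothness; once that regularity is in hand, the remainder is a direct calculation exploiting Property \ref{tilted_information_property}.
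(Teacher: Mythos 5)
Your proposal is, at its core, the same envelope-theorem idea the paper uses: pass to the Lagrangian representation of $R_P(d_s,d_x)$ guaranteed by \eqref{property1}, freeze the optimal test channel and multipliers at $P=P_X$, and use the fact that only the explicit $P$-dependence survives. The chain-rule normalization step that reduces $\dot{R}_X(a,\cdot)$ to $\partial R_P/\partial P(a)$ minus its $P_X$-average, and the final collapse via \eqref{property3}, also match the paper's mechanism. However, there are two points worth flagging. First, the constant $c$ you write down, $\lambda_s^\star d_s + \lambda_x^\star d_x - 1$, is not the correct value: the $-\lambda_s^\star d_s - \lambda_x^\star d_x$ piece of the Lagrangian is $P$-independent and therefore contributes nothing to $\partial L/\partial P(a)$, so the correct residual is simply $-\log e$ (the $-1$ arising from the explicit $P$-dependence of $P_{ZY}$ in the denominator of the information density). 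This is harmless to your argument since you only need $c$ to be $a$-independent, but the stated value is wrong. Second, you invoke Danskin plus the implicit function theorem by claiming the active-constraint pattern of the KKT system is locally constant near $P_X$ and that the support-preservation hypothesis "pins down" this pattern; that step is not actually established, and support preservation of the optimal output distribution $P_{\bar{Z}^\star\bar{Y}^\star}$ is a different regularity condition from constancy of which distortion constraints are tight. The paper sidesteps this by treating the four interior regions $\textsf{int}(\mathcal{D}_{sx})$, $\textsf{int}(\mathcal{D}_{\bar{s}x})\cup\textsf{int}(\mathcal{D}_{s\bar{x}})$, and $\textsf{int}(\mathcal{D}_{\bar{s}\bar{x}})$ separately (citing Kostina's one-constraint result for the middle case and a triviality argument for the last), and in the $\mathcal{D}_{sx}$ case avoids Danskin entirely: writing $R_{\bar{X}}=\mathbb{E}[\jmath_{\bar{X}}(\bar{X},\cdot)]$ and applying the product rule, the indirect (envelope) term is shown to vanish by the direct normalization identities $\mathbb{E}[P_{\bar{X}}(X)/P_X(X)]\equiv 1$ and $\mathbb{E}[P_{\bar{X}|\bar{Z}^\star\bar{Y}^\star}(X|Z^\star,Y^\star)/P_{X|Z^\star Y^\star}(X|Z^\star,Y^\star)]\equiv 1$, which is cleaner and makes the role of the support-preservation hypothesis (well-definedness of the second ratio) explicit. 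I would recommend either adopting the paper's normalization trick or else justifying the local active-set constancy rather than asserting it.
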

\begin{proof}[Proof]
	Recall that $\mathcal{D}_{\mathrm{in}} \triangleq \textsf{int}(\mathcal{D}_{sx}) \cup \textsf{int}(\mathcal{D}_{\bar{s}x}) \cup \textsf{int}(\mathcal{D}_{s\bar{x}}) \cup \textsf{int}(\mathcal{D}_{\bar{s}\bar{x}})$.
	Since, in the cases when $(d_s,d_x) \in \textsf{int}(\mathcal{D}_{\bar{s}x}) \cup \textsf{int}(\mathcal{D}_{s\bar{x}})$, both $R_X(d_s,d_x)$ and $\jmath_{X}(x,d_s,d_x)$ degenerate into scenarios previously addressed in \cite{Kostina2016Nonasymptotic}, where \eqref{Prop2_c1} and \eqref{Prop2_c2} have already been established, we now turn our attention to the remaining two cases.
	
	We first consider the case when $(d_s,d_x) \in \textsf{int}(\mathcal{D}_{\bar{s}\bar{x}})$. Define $d_{s,\max} \triangleq \min_{z \in \widehat{\mathcal{M}}} \mathbb{E}[\bar{\textsf{d}}_s(X,z)]$ and $d_{x,\max} \triangleq \min_{y \in \widehat{\mathcal{X}}} \mathbb{E}[\textsf{d}_x(X,y)]$. We note that
	\begin{align}
	\textsf{int}(\mathcal{D}_{\bar{s}\bar{x}}) = \{(d_s,d_x): d_s > d_{s,\max}, d_x> d_{x, \max}\},
	\end{align}
	and for all $(d_s,d_x) \in \textsf{int}(\mathcal{D}_{\bar{s}\bar{x}})$, we have $\lambda_s^\star = 0$, $\lambda_x^\star=0$, and $R_X(d_s,d_x) = 0$. Consequently, the right-hand side of \eqref{Prop2_c1} equals to $0$.
	Since $d_{s,\max}$ and $d_{x,\max}$ are both continuous functions of the distribution of $X$, there exists a neighborhood of $P_X$ such that for all $P_{\bar{X}}$ in it, we have $d_s > d_{s,\max}(P_{\bar{X}})$ and $d_x> d_{x, \max}(P_{\bar{X}})$ still hold, implying $R_{\bar{X}}(d_s,d_x) = 0$ in this neighborhood. Accordingly, the left-hand side of \eqref{Prop2_c1} also equals to $0$. This leads to the conclusion that \eqref{Prop2_c1} and \eqref{Prop2_c2} hold for $(d_s,d_x) \in \textsf{int}(\mathcal{D}_{\bar{s}\bar{x}})$. 
	
	
	We now consider the case when $(d_s,d_x) \in \textsf{int}(\mathcal{D}_{sx})$. First, we have \eqref{Prop2_step0}-\eqref{Prop2_step5}, 
	\begin{figure*}[!t]
		\begin{align}
		&\frac{\partial}{\partial Q_X(a)}\mathbb{E}[\jmath_{\bar{X}}(X,d_s,d_x)]\Big|_{Q_X=P_X} \label{Prop2_step0} \\
		=&\frac{\partial}{\partial Q_X(a)}\mathbb{E}[\imath_{\bar{X};\bar{Z}^\star \bar{Y}^\star}(X;Z^\star,Y^\star)] + \lambda_{s, \bar{X}} \mathbb{E}[\bar{\textsf{d}}_s(X,Z^\star) - d_s] + \lambda_{x, \bar{X}} \mathbb{E}[\textsf{d}_x(X,Y^\star) - d_x]\Big|_{Q_X=P_X} \label{Prop2_step1} \\
		=&\frac{\partial}{\partial Q_X(a)}\mathbb{E}[\imath_{\bar{X};\bar{Z}^\star \bar{Y}^\star}(X;Z^\star,Y^\star)]\Big|_{Q_X=P_X} \label{Prop2_step2}\\
		=&\frac{\partial}{\partial Q_X(a)}\mathbb{E}[\log P_{\bar{X}| \bar{Z}^\star \bar{Y}^\star}(X; Z^\star, Y^\star)]\Big|_{Q_X=P_X} -
		\frac{\partial}{\partial Q_X(a)}\mathbb{E}[\log P_{\bar{X}}(X)]\Big|_{Q_X=P_X} \label{Prop2_step3}\\
		=&\log e \cdot \frac{\partial}{\partial Q_X(a)}\mathbb{E}\left[\frac{ P_{\bar{X}| \bar{Z}^\star \bar{Y}^\star}(X; Z^\star, Y^\star)}{P_{X| Z^\star Y^\star}(X; Z^\star, Y^\star)}  \right]\Big|_{Q_X=P_X} - \log e \cdot \frac{\partial}{\partial Q_X(a)}\mathbb{E}\left[ \frac{P_{\bar{X}}(X)}{P_{X}(X)} \right]\Big|_{Q_X=P_X} \label{Prop2_step4}\\
		=&0 \label{Prop2_step5}
		\end{align}
		\hrulefill
	\end{figure*}
	where \eqref{Prop2_step1} is by \eqref{d_tilted_information_of_surrogate} and the assumption $\textrm{supp}(P_{\bar{Z}^\star \bar{Y}^\star}) = \textrm{supp}(P_{Z^\star Y^\star})$, 
	\eqref{Prop2_step2} holds since $\mathbb{E}[\bar{\textsf{d}}_s(X,Z^\star) - d_s] =0$ and $\mathbb{E}[\textsf{d}_x(X,Y^\star) - d_x]=0$ for $(d_s,d_x) \in \textsf{int}(\mathcal{D}_{sx})$,
	and \eqref{Prop2_step5} holds since the two expectation terms in \eqref{Prop2_step4} are always equal to $0$. By \eqref{Prop2_step5}, we have
	\begin{align}
	&\dot{R}_X(a, d_s,d_x)\\ =&\frac{\partial}{\partial Q_{X}(a)}\mathbb{E}[\jmath_{\bar{X}}(\bar{X},d_s,d_x)]\bigg|_{Q_X=P_{X}}\\
	=&\jmath_{X}\!(a,d_s,d_x) \!-\! R_X\!(d_s,d_x) \!+\! \frac{\partial}{\partial Q_{X}(a)}\!\mathbb{E}[\jmath_{\bar{X}}\!(X,d_s,d_x)]\bigg|_{Q_X=P_{X}}\\
	=&\jmath_{X}(a,d_s,d_x) - R_X(d_s,d_x),
	\end{align}
	completing the proof of \eqref{Prop2_c1}. \eqref{Prop2_c2} is an immediate corollary to \eqref{Prop2_c1}.
\end{proof}

\begin{proposition} 
	\label{converse_bound_corollary2}
	If $\lambda_s^\star = 0$, any $(M, d_s, d_x, \epsilon)$ code must satisfy
	\begin{equation}
	\label{converse_bound3}
	\epsilon \geq \sup_{\gamma \geq 0} \left\{\mathbb{P}\left[\jmath_{X}(X,d_s,d_x) - \log M \geq \gamma\right] - \exp(-\gamma)\right\}.
	\end{equation}
\end{proposition}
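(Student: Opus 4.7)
The plan is to exploit the degeneracy imposed by $\lambda_s^\star = 0$ and reduce the two-constraint problem to a classical single-constraint lossy source coding problem for $X$ under distortion $\textsf{d}_x$, and then invoke the standard converse for that reduced problem. The first step is to observe that any $(M, d_s, d_x, \epsilon)$ code automatically satisfies $\mathbb{P}[\textsf{d}_x(X,Y) > d_x] \leq \mathbb{P}[\textsf{d}_s(S,Z) > d_s \cup \textsf{d}_x(X,Y) > d_x] \leq \epsilon$, so after discarding the reconstruction $Z$ the scheme is an $(M, d_x, \epsilon)$ code for the single-distortion lossy source coding problem on $X$.

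The second step is to identify $\jmath_X(x, d_s, d_x)$ with the standard $\textsf{d}_x$-tilted information in $x$. Setting $\lambda_s^\star = 0$ in the definition of $\jmath_X$, the expectation collapses onto the $Y^\star$-marginal of $P_{Z^\star Y^\star}$. Since $\lambda_s^\star = 0$ places $(d_s, d_x)$ in $\mathcal{D}_{\bar{s}x} \cup \mathcal{D}_{\bar{s}\bar{x}}$, one may select an optimizer $P_{Z^\star Y^\star | X}$ of \eqref{rate_distortion_function} whose $Y^\star$-marginal achieves the single-constraint minimum $\min_{P_{Y|X}} I(X;Y)$ subject to $\mathbb{E}[\textsf{d}_x(X,Y)] \leq d_x$, with $Z^\star$ rendered conditionally independent of $X$ given $Y^\star$. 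Hence $\jmath_X(x, d_s, d_x)$ coincides with the classical single-constraint $\textsf{d}_x$-tilted information in $x$.

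The third step is to apply the standard nonasymptotic converse for single-constraint lossy source coding (for example \cite[Theorem 7]{Kostina2012Fixed}) to the reduced code, yielding precisely \eqref{converse_bound3}. Alternatively, one can specialize Corollary \ref{converse_bound_corollary} directly by choosing an auxiliary conditional $P_{\bar{X}|\bar{Z}=z,\bar{Y}=y}$ that is independent of $z$ and defined via the tilting
\begin{equation*}
P_{\bar{X}|\bar{Y}=y}(x) \propto P_X(x) \exp\bigl[\jmath_X(x,d_s,d_x) - \lambda_x^\star \textsf{d}_x(x,y)\bigr],
\end{equation*}
which, by \eqref{property4} with $\lambda_s^\star = 0$, defines a valid probability distribution for $P_{Y^\star}$-a.e. $y$; substituting this choice into $f_{\bar{X}|\bar{Z}\bar{Y}}$ and then replaying the chain of inequalities in the proof of Theorem \ref{general_converse_theorem}, with Markov's inequality applied to the exponentiated information density, recovers the bound.

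The main technical obstacle I anticipate is the delicate measure-theoretic point that \eqref{property4} holds with equality only for $P_{Y^\star}$-a.e.\ $y$, so the candidate tilted distribution may fail to normalize for atypical $y$ outside the support of $P_{Y^\star}$, and the auxiliary $P_{\bar{X}|\bar{Y}}$ must be extended off-support without loosening the bound. This is exactly the same wrinkle handled in the single-constraint converse of \cite{Kostina2012Fixed}, and it transfers over once one restricts attention to the support of the optimal reproduction distribution; no genuinely new complication arises from the (now inactive) hidden-source constraint, which is precisely why the statement takes the clean one-constraint form.
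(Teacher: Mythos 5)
Your proposal is correct, and your first route is genuinely different from the paper's. The paper's proof specializes Theorem~\ref{general_converse_theorem} directly: set $P_{\bar{X}|\bar{Z}\bar{Y}} = P_{X|Z^\star Y^\star}$ and pin $\lambda_s, \lambda_x$ to $\lambda_s^\star, \lambda_x^\star$ inside $f_{\bar{X}|\bar{Z}\bar{Y}}$, then observe via \eqref{d_tilted_information_of_surrogate} (with $\lambda_s^\star = 0$, which also lets one swap $\textsf{d}_s(s,z)$ for $\bar{\textsf{d}}_s(x,z)$ at no cost) that the argument of the probability collapses to $\jmath_X(X,d_s,d_x) - \log M$, rendering the inner $\inf_{P_{ZY|X}}$ trivial. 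Your first approach instead discards $Z$ to reduce the problem to classical one-constraint lossy compression of $X$ and then quotes the known one-shot converse; this is a cleaner conceptual route but leans on the unproved claim that $\jmath_X(x,d_s,d_x)$ with $\lambda_s^\star = 0$ coincides with the ordinary $\textsf{d}_x$-tilted information of $X$ -- this is true (with $\lambda_s^\star=0$ one has $R_{S,X}(d_s,d_x) = R_X(d_x)$, $\lambda_x^\star$ equals the one-constraint multiplier, and $P_{Y^\star}$ equals the one-constraint optimal output marginal because the optimal $Z^\star$ can be made conditionally independent of $X$ given $Y^\star$), but you would need to supply this argument rather than assert it. Your ``alternative'' route (tilting $P_{\bar{X}|\bar{Y}}$ and plugging into Corollary~\ref{converse_bound_corollary}) is essentially the same mechanism as the paper's, with a slightly more robust choice of auxiliary kernel: because the $\leq$ direction of \eqref{property4} holds for \emph{all} $(z,y)$, not just $P_{Z^\star Y^\star}$-a.e., the resulting lower bound $f_{\bar{X}|\bar{Z}\bar{Y}}(s,x,z,y) \geq \jmath_X(x,d_s,d_x) - \log M$ holds everywhere, sidestepping the off-support subtlety that the paper's choice $P_{\bar{X}|\bar{Z}\bar{Y}} = P_{X|Z^\star Y^\star}$ leaves implicit. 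One small correction: the ``technical obstacle'' you anticipate is not real. For finite alphabets the tilted kernel always normalizes (the normalizing constant $C(y)$ is a finite positive sum), and the only thing that fails off the support of $P_{Y^\star}$ is the \emph{equality} in \eqref{property4}; the inequality $C(y) \le e^{-\lambda_x^\star d_x}$ still holds for all $y$, which is exactly what the converse direction needs, so no off-support extension argument is required.
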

\begin{proof}[Proof]
	By setting $P_{\bar{X}|\bar{Z}\bar{Y}} = P_{X|Z^\star Y^\star}$, $\lambda_s = \lambda_s^\star$, and $\lambda_x = \lambda_x^\star$ in \eqref{general_converse}, any $(M, d_s, d_x, \epsilon)$ code must satisfy
	\begin{align}
	\epsilon \geq &\inf_{\substack{P_{ZY|X}:\\ \mathcal{X} \to \widehat{\mathcal{S}}\times \widehat{\mathcal{X}}}} \sup_{\gamma \geq 0}\big\{\mathbb{P}\big[\imath_{X; Z^\star Y^\star}(x;z,y) + \lambda_s^\star(\textsf{d}_s(s,z) - d_s)\nonumber\\
	&+ \lambda_x^\star (\textsf{d}_x(x,y) - d_x) - \log M \geq \gamma\big]
	- \exp(-\gamma)\big\}. \label{converse_C2_1}
	\end{align}
	Since $\lambda_s^\star = 0$, we can replace the term $\lambda_s^\star(\textsf{d}_s(s,z) - d_s)$ in \eqref{converse_C2_1} by $\lambda_s^\star(\bar{\textsf{d}}_s(x,z) - d_s)$. Combining this with \eqref{d_tilted_information_of_surrogate}, we have
	\begin{align}
	\epsilon \geq& \inf_{\substack{P_{ZY|X}:\\ \mathcal{X} \to \widehat{\mathcal{S}}\times \widehat{\mathcal{X}}}}\!\! \sup_{\gamma \geq 0} \left\{\mathbb{P}\left[\jmath_{X}(X,d_s,d_x) \!-\! \log M \geq \gamma\right] \!-\! \exp(-\gamma)\right\}\\
	=& \sup_{\gamma \geq 0} \left\{\mathbb{P}\left[\jmath_{X}(X,d_s,d_x) - \log M \geq \gamma\right] - \exp(-\gamma)\right\}.
	\end{align}
\end{proof}




\bibliographystyle{IEEEtran}
\bibliography{ref}

\end{document}